
\documentclass[a4paper,fleqn]{cas-sc}

\usepackage[numbers]{natbib}

\usepackage{mathrsfs}
\usepackage{setspace,lscape,longtable}
\usepackage{mathrsfs,amsthm,amssymb,color}
\usepackage{epsfig,pdfpages}
\usepackage{rotating}
\usepackage{float}
\usepackage{bm}
\usepackage{ulem}
\usepackage{CJK}
\usepackage{booktabs}
\usepackage{multirow}
\usepackage[OT2,OT1]{fontenc}
\usepackage{algorithm}
\usepackage{algorithmic}

\def\tsc#1{\csdef{#1}{\textsc{\lowercase{#1}}\xspace}}
\tsc{WGM}
\tsc{QE}
\tsc{EP}
\tsc{PMS}
\tsc{BEC}
\tsc{DE}


\newtheorem{defthing}{Definition}
\newtheorem{theorem}{Theorem}
\newtheorem{lemma}{Lemma}
\newtheorem{proposition}{Proposition}

\newtheorem{thm}{Theorem}[section]
\newtheorem{lem}[thm]{Lemma}

\def\beq{\begin{equation}}
\def\eeq{\end{equation}}
\def\beqr{\begin{eqnarray}}
\def\eeqr{\end{eqnarray}}
\def\beqrs{\begin{eqnarray*}}
\def\eeqrs{\end{eqnarray*}}
\def\bet{\begin{theorem}}
\def\eet{\end{theorem}}
\def\bel{\begin{lemma}}
\def\eel{\end{lemma}}
\def\bep{\begin{proposition}}
\def\eep{\end{proposition}}
\def\bg{\begin{figure}[tbph]\begin{center}}
\def\eg{\end{center}\end{figure}}

\def\bc{\begin{center}}
\def\ec{\end{center}}

\def\mR{\mathbb{R}}

\def\mM{\mathcal M}

\def\var{\mbox{var}}

\begin{document}
\let\WriteBookmarks\relax
\def\floatpagepagefraction{1}
\def\textpagefraction{.001}

\shorttitle{A Wasserstein distance-based spectral clustering method for transaction data analysis}

\shortauthors{Yingqiu Zhu et~al.}

\title [mode = title]{A Wasserstein distance-based spectral clustering method for transaction data analysis}

%
\author[1]{Yingqiu Zhu}[orcid=0000-0003-2932-2540]
\ead{inqzhu@uibe.edu.cn}

\affiliation[1]{organization={School of Statistics, University of International Business and Economics},
    city={Beijing},
    country={PR China}}

\author[2]{Danyang Huang}
\cormark[1]
\ead{dyhuang@ruc.edu.cn}

\affiliation[2]{organization={School of Statistics, Renmin University of China},
    city={Beijing},
    country={PR China}}

\author[2]{Bo Zhang}
\ead{mabzhang@ruc.edu.cn}

\cortext[cor1]{Corresponding author}

\begin{abstract}
With the rapid development of online payment platforms, it is now possible to record massive transaction data. Clustering on transaction data significantly contributes to analyzing merchants' behavior patterns. This enables payment platforms to provide differentiated services or implement risk management strategies. However, traditional methods exploit transactions by generating low-dimensional features, leading to inevitable information loss. In this study, we use the empirical cumulative distribution of transactions to characterize merchants. We adopt Wasserstein distance to measure the dissimilarity between any two merchants and propose the Wasserstein-distance-based spectral clustering (WSC) approach. Based on the similarities between merchants' transaction distributions, a graph of merchants is generated. Thus, we treat the clustering of merchants as a graph-cut problem and solve it under the framework of spectral clustering. To ensure feasibility of the proposed method on large-scale datasets with limited computational resources, we propose a subsampling method for WSC (SubWSC). The associated theoretical properties are investigated to verify the efficiency of the proposed approach. The simulations and empirical study demonstrate that the proposed method outperforms feature-based methods in finding behavior patterns of merchants.
\end{abstract}



\begin{keywords}
Spectral clustering\sep Wasserstein distance\sep Empirical cumulative distribution function\sep Transaction data
\end{keywords}

\maketitle

\section{Introduction}

The rapid development of third-party online payment platforms, such as eBay and Alipay, have resulted in an exponential increase in online transaction data. Massive transaction data are valuable to the customer relationship management (CRM) of enterprises, which aims at refining information related to customers to anticipate and respond to their needs \citep{Peppard2000Customer}. In the fields of banking, retailing, and e-commerce, a number of successful applications of CRM are based on transaction data \citep{park2003framework, chan2008intelligent,KIM2010313, khajvand2011estimating, wu2011customer,chiang2012establish, zhang2014predicting, alborzi2016using}. For third-party online payment platforms, each registered merchant is a customer. These platforms pay emphasis on the CRM for merchants because it plays a critical role in both marketing management and risk control \citep{Eisenmann2006PayPal, Lowry2006Online, Huo2011Risk}. However, efficient approaches to exploit transaction data are needed to better understand merchant transaction behavior.

An important technique in CRM is customer clustering, also known as customer segmentation, which divides customers into distinct and homogeneous subgroups according to appropriate criteria \citep{Dannenberg2009Customer,tsiptsis2011data}. Clustering helps platforms learn more about merchants' behavior and develop differentiated management strategies. Figure \ref{toy} shows an example of transaction data of merchants. Business behaviors of merchants are embedded in their transactions. However, for merchants with different numbers of transactions, a comparison using complete transactions is difficult and inefficient. Traditional methods for transaction data, such as recency, frequency, and monetary value (RFM) model \citep{bult1995optimal}, extract features from raw transactions. These features describe, for each customer, how recent the last purchase is, purchasing frequency, and amount spent. RFM features have been widely applied in customer segmentation \citep{hsu2012segmenting, khobzi2014new} and customer behavior analysis \citep{khajvand2011estimating,dhandayudam2013customer,van2015apate}. However, feature-based methods, such as RFM, essentially reduce the raw transaction data to a low-dimensional vector, leading to inevitable information loss. For example, as shown in Figure \ref{toy}, if we adopt the monetary feature to characterize merchants, merchant $A$ (monetary=100) and merchant $B$ (monetary=100) can be considered as similar. However, an investigation of their transaction amount distributions shows clear differences in their behavior patterns.
\begin{figure}[ht!]
\centering
\includegraphics[width=0.7\textwidth]{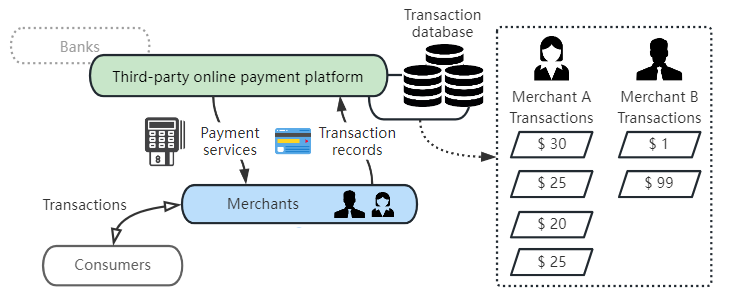}
\caption{Example of transaction data of merchants.}
\label{toy}
\end{figure}

An alternative approach to avoid huge information loss is to investigate the distribution among transaction data. To characterize the behavior pattern of each merchant, the empirical cumulative distribution function (ECDF) of the transaction data is considered a useful statistical tool in retaining sufficient information rather than low-dimensional features \citep{sakurai2008efficient}. Various dissimilarity measures have been proposed for distribution functions, including Kullback-Leibler divergence\citep{kullback1951information}, R{\'e}nyi divergence \citep{renyi1961measures}, Jensen-Shannon divergence\citep{lin1991divergence} and Wasserstein distance \citep{vallender1974calculation}. Recently, \cite{zhu2021clustering} employed the Kolmogorov--Smirnov statistic to measure the dissimilarity between ECDFs and proposed a clustering method for transaction data. This study suggests that the clustering of ECDFs contributes to in-depth understanding of transaction data. Nevertheless, it measures only the supreme difference between two ECDFs, which is a point-to-point distance, and may not capture the topology difference between distributions.

Among various measures, Wasserstein distance has proved to be a versatile tool, which can robustly describe the topology distance between two ECDFs \citep{del1999tests,piccoli2014generalized, fournier2015rate}. Wasserstein distance measures the minimal efforts to reconfigure the probability mass of one distribution to recover another \citep{Panaretos2019Statistical}. \cite{Panaretos2019Statistical} have argued the advantages of  Wasserstein distance. First, it can well capture the  topological structure characteristics of distributions. Second, it has robust performance in dealing with data distributed irregularly over time. Third, it has fewer restrictions on distributions. For example, Kullback-Leibler divergence requires two distributions to have similar supports. In contrast, Wasserstein distance is applicable to continuous and discrete distributions having non-overlapping supports. 
Because of the appealing properties and theoretical results, Wasserstein distance has been widely applied in statistical inference \citep{ruschendorf1985wasserstein,del1999central,piccoli2016properties,Panaretos2019Statistical}.

Based on distance measurement for distribution functions, various clustering algorithms, such as K-means \citep{Lloyd1982Least} and spectral clustering method \citep{2002NewHagen}, a classic framework based on the spectral graph theory \citep{chung1997spectral}, can be considered. Spectral clustering methods, frequently used in community detection \citep{nascimento2011spectral}, organize objects via graphs and then cluster them based on eigen-decomposition of Laplacian matrices. \cite{kannan2004clusterings} developed criteria to verify the quality of spectral clustering and demonstrated that the spectral methods perform robustly in most cases. \cite{von2008consistency} established the statistical consistency of spectral clustering  algorithms. In addition, \cite{song2008parallel}, \cite{chen2011large}, and \cite{law2017deep} generalized spectral clustering methods for deep learning and parallel computation with high-performance devices. Inspired by successful applications of spectral clustering, we use a graph to describe the similarities between merchants and transform the clustering to a graph-cut problem. Then, we expect to investigate spectral clustering methods for robust solutions in merchant clustering. However, theoretical results of traditional spectral clustering methods cannot be directly applied in the clustering of ECDFs. On the one hand, multi-level structures within the data, e.g., transactions and merchants, have not been allowed in previous works. On the other hand, to theoretically analyze the clustering performance for ECDFs, we carefully investigate the clustering error rates of the proposed methods and discuss their upper bounds in detail. While existing works, such as that on the consistency of classic spectral clustering algorithms \citep{von2008consistency}, cannot be directly adopted to establish the clustering error rates in this case. Thus, how to investigate the theoretical properties in this case is still an important problem to be fixed.

Moreover, the computational complexity of spectral clustering is relatively high. Because it involves the singular value decomposition (SVD), the computational complexity is $O(n^3)$, where $n$ is the number of objects to be clustered \citep{halko2011finding}. Even with fast calculation algorithms, the computational complexity will still be $O(n^2)$ \citep{menon2011fast}. A typical solution for large-scale datasets is subsampling. Subsampling approaches have been applied as a promising tool to make inferences from large-scale datasets when computational resources are limited \citep{dhillon2013new, wang2018optimal, ai2021optimal}. Subsampling draws samples to approximate the estimates obtained from the entire dataset \citep{politis1999subsampling}. Unlike applications of subsampling in supervised learning, the subsampling-based merchant clustering should not only consider the distances among merchants selected in the subsample but also the distances between the selected and unselected ones. This way, we can adopt subsampling to approximate the sample results when computational resources are limited.

This study proposes Wasserstein-distance-based spectral clustering (WSC) method for the clustering of objects that can be presented as ECDFs (e.g., transaction distributions of merchants).  The empirical distribution functions corresponding to transactions are investigated to reflect the similarity between any two merchants. Then, we build a spectral clustering framework based on the Wasserstein distance function. Because spectral clustering is implemented on datasets involving two levels, that is, transactions and merchants, we discuss the assumptions from these two aspects. Practically, to make inferences on large-scale datasets, we further propose a subsampling method, subsampling WSC (SubWSC), to improve the algorithm's feasibility when computational resources are limited. We provide both theoretical and computational discussions for both WSC and SubWSC, including discussions on convergence and clustering error rates.  We theoretically discuss the subsample size required in SubWSC, which could be as low as $\Omega(\log n)$. If we select the subsample size to be $c\log(n)$ with a constant $c$, the computational complexity of the proposed algorithm can be as low as $O(n \log^2 n)$.

The remainder of the paper is organized as follows. In Section 2, we introduce WSC and SubWSC in detail. Their theoretical properties are discussed in Section 3. Next, we demonstrate the performance of both WSC and SubWSC through simulations in Section 4. In Section 5, a real data example from a third-party payment platform is analyzed to illustrate its application. Finally, we conclude our work in Section 6.

\section{Spectral Clustering based on Wasserstein distance}

\subsection{Definition of Wasserstein distance for transaction data}
Given any two $d$-dimensional probability measures $P_i$ and $P_j$, the Wasserstein distance is defined as $W_{p}(P_i,P_j) = \inf \left\{ \int_{\mR^d \times \mR^d} |x-y|^p \xi (dx, dy),\xi \in \mathbb{H}(P_i, P_j) \right\}^{1/p}$, where $\mathbb{H}(P_i,P_j)$ denotes the set of all probability measures on $\mR^d \times \mR^d$ with marginals $P_i$ and $P_j$; $p \geq 1$ is a positive integer and $\xi(\cdot)$ denotes a possible coupling distribution of $P_i$ and $P_j$. The distance is well-defined if a finite $p$-th moment exists for the probability measures. For $d > 1$, there is no closed-form expression for the distance \citep{Panaretos2019Statistical}, while for $d=1$, there exists an explicit expression for the distance. Furthermore, for convenience, $p$ is set to 1 in this study. When $p=1$, the Wasserstein distance can be simplified as $W_1(P_i,P_j) = \int_{\mR} |F_i(x) - F_j(x)| dx$ \citep{Panaretos2019Statistical}, where $F_i(x)$ and $F_j(x)$ denote the corresponding cumulative distribution functions (CDFs).

Because the CDFs of individual merchants are unknown, we adopt the ECDFs to approximate the real distributions. We assume there are $n$ merchants. For each merchant $i$ $(1\leq i\leq n)$, we assume $v_i$ transactions are observed, where $v_i$ is a positive integer. For the $q_i$-th $(1\leq q_i\leq v_i)$ transaction, the transaction amount is recorded as $m_{i,q_i} \in\mR$. Thus, the transaction amount vector of merchant $i$ is $M_{i}=(m_{i,1},m_{i,2},...,m_{i,v_i})\in\mR^{v_i}$. For merchant $i$, all transactions within $M_i$ are independent and identically distributed. The Wasserstein distance function can be accordingly defined on empirical measures. \cite{delattre2004quantization} and \cite{fournier2015rate} have shown that the Wasserstein distance between the empirical and corresponding true measures converges to 0 at the rate of $O(m^{-1/2})$, where $m$ is the number of observations. For merchant $i$, the ECDF is defined as $\hat{F}_i(x)=v_i^{-1}\sum^{v_i}_{q_i=1}I(m_{i,q_i} \le x)$, where $I(\cdot)$ is an indicator function. Then, the Wasserstein distance function for transaction data can be defined based on the ECDFs. We also assume that the amount of each transaction is bounded by a large constant $M_0$ and denote the support of transaction amount to be $\mM_0=[0,M_0]$.  This assumption implies that the first and the second order moments of the Wasserstein distance are finite.
\begin{defthing} \label{wddef}
\textbf{(Transaction-based Wasserstein distance function)} The transaction-based Wasserstein distance between merchants $i$ and $j$ $(1\leq i, \leq n)$ is denoted as $W(i,j) = \int_{\mM_0} \left|\hat{F}_i(x)-\hat{F}_j(x) \right| dx$. We illustrate this definition given finite observations of transactions. Let $M^*_{i,j} = M_i \cup M_j $ denote the union of their transactions and $v^*_{i,j}$ denote the number of distinct values of $M^*_{i,j}$. Then, $M^*_{i,j}$ can be expressed as a sorted sequence $\{ m^*_{i,j,k}: 1 \leq k \leq v^*_{i,j}\}$ satisfying $m^*_{i,j,k} \leq m^*_{i,j,k+1}$ for all $ 1\leq k \leq v^*_{i,j}-1$. Subsequently, the distance $W(i,j)$ can be calculated as
\begin{equation*}
     W(i,j) = \sum_{k=1}^{v^*_{i,j}-1} \left|\hat{F}_i(m^*_{i,j,k})-\hat{F}_j(m^*_{i,j,k})\right| (m^*_{i,j,k+1}-m^*_{i,j,k}).
\end{equation*}
\end{defthing}

\subsection{Spectral clustering based on Wasserstein distance}
Merchant clustering aims to divide merchants into $K$ non-overlapping clusters, denoted as $K$ sets of merchant indices, $C_1,C_2,\cdots,C_K$. 
The number of clusters $K$ can be pre-defined according to some prior knowledge. With the transaction-based Wasserstein distance adopted, a typical spectral clustering method \citep{2002NewHagen} involves the following three steps. The framework of the proposed method is illustrated in Figure \ref{framework}.
\begin{figure}[ht!]
\centering
\includegraphics[width=0.7\textwidth]{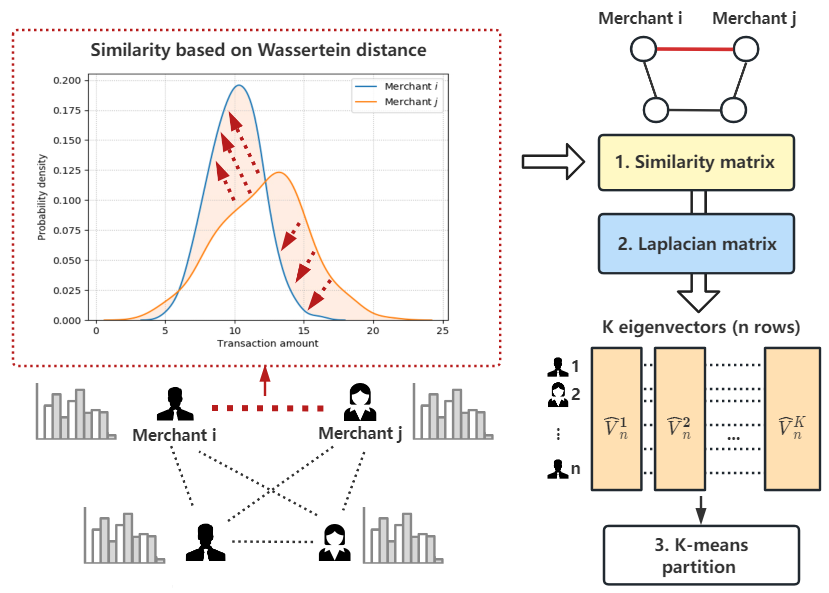}
\caption{WSC framework. First, we exploit Wasserstein distance to measure the dissimilarity between any two ECDFs of merchants to obtain a similarity matrix for all merchants. Second, a Laplacian matrix is derived from the similarity matrix. Finally, we perform a K-means clustering on the $n$ rows of the Laplacian matrix for partitioning the $n$ merchants.}
\label{framework}
\end{figure}

\noindent{\bf Step 1. Construct a similarity matrix based on Wasserstein distance}. For $n$ merchants, let $S_n \in \mR^{n \times n}$ denote the similarity matrix based on Wasserstein distance. Let function $S(\cdot,\cdot)$ denote the similarity between two distributions. Then, each element $S_{n,ij}$ is defined as $S_{n,ij} = S(i,j) = \exp \left\{ -W(i,j)/\sigma \right\}$, where $W(i,j)$ is the Wasserstein distance between merchants $i$ and $j$, and $\sigma$ is the scaling hyperparameter for normalizing the Wasserstein distance function. In practice, we can set $\sigma=\sup \limits_{i,j} W(i,j)$ by default.

\noindent{\bf Step 2. Obtain the normalized Laplacian matrix.} Based on the similarity matrix $S_n$, a normalized Laplacian matrix $L_n = D_n^{-1/2}S_nD_n^{-1/2}$ can be determined, where $D_n$ is a diagonal matrix with $D_{n,ii} = \sum_{j\neq i} S_{ij}, 1 \leq i \leq n$. Because $S_n$ is symmetric and all elements of $S_n$ are non-negative, the resulting Laplacian matrix $L_n$ is semi-definite.

\noindent{\bf Step 3. Conduct clustering on $K$ eigenvectors of $L_n$.} First, we obtain the $K$ largest eigenvalues of $L_n$ for clustering. Let $\widehat{V}_n^1,\cdots,\widehat{V}_n^K$ denote the corresponding eigenvectors. We can stack the eigenvectors to form a matrix $\widehat{V}_{n}=[\widehat{V}_n^1,\cdots,\widehat{V}_n^K] \in \mR^{n \times K}$. Then, traditional clustering methods, e.g., K-means, clusters the $n$ rows of $\widehat{V}_{n}$ into $K$ clusters. Because each row of $\widehat{V}_{n}$ is related to an individual merchant, the partition of $n$ rows of $\widehat{V}_{n}$ indicates a clustering solution for all merchants. Thus, $K$ non-overlapping merchant index sets could be obtained. The clustering algorithm is summarized in Appendix E.

\noindent \textbf{Remark 1. (Number of clusters)}
To select an appropriate value for the number of clusters in a data-driven manner, we recommend two alternative approaches. The first is the silhouette coefficient approach \citep{ROUSSEEUW198753}. The second involves visualizing the eigenvalues arranged in descending order. If there is a sharp decrease between the $K'$-th and successive eigenvalues, $K'$ is the appropriate selection \citep{Filippone2008A}.

\subsection{Subsampling Wasserstein-distance-based spectral clustering}
Because the spectral clustering requires Laplacian matrix decomposition, the computational complexity is typically at least $O(n^3)$. Accordingly, it may be infeasible to implement WSC when the number of merchants $n$ is extremely large. Therefore, for large datasets, we propose an approximate method based on subsampling, which is called SubWSC, as shown in Figure \ref{subwsc_framework}. This gives us the opportunity to apply the WSC algorithm to large datasets with limited computational resources in large datasets. Essentially, by subsampling, we use a sub-graph with related nodes to solve the graph-cut problem for the entire merchant graph.
\begin{figure}[ht!]
\centering
\includegraphics[width=1\textwidth]{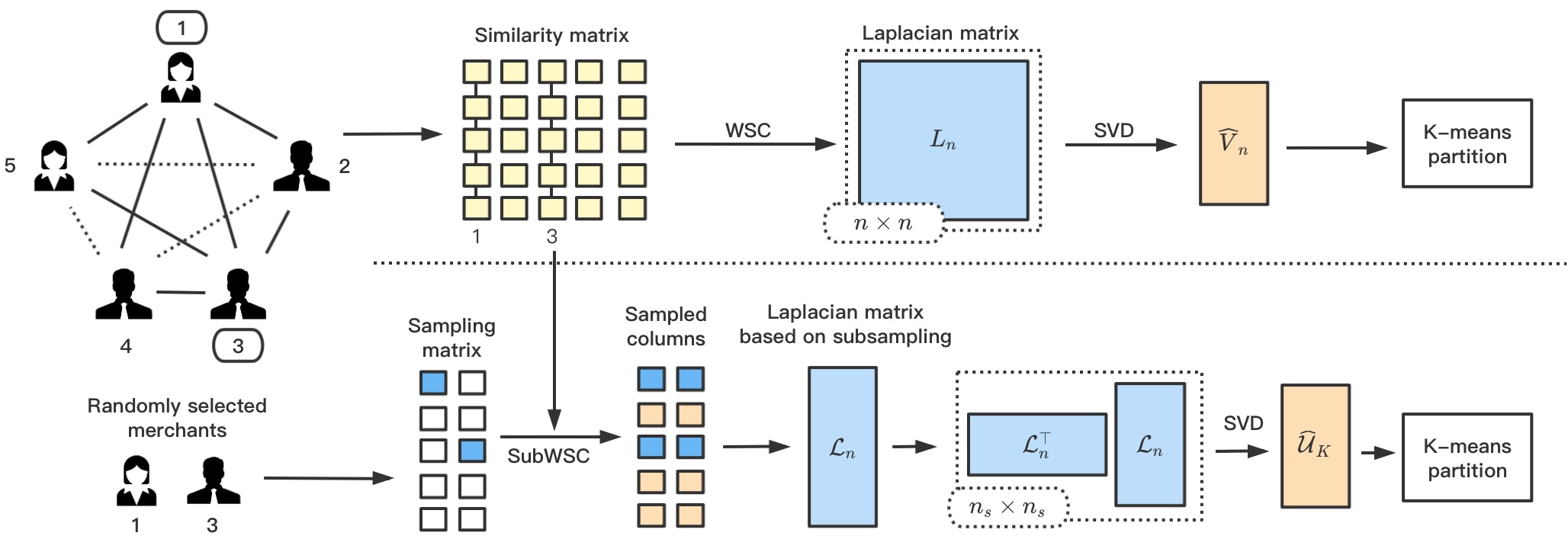}
\caption{Comparison of SubWSC (lower panel) and WSC (upper panel) frameworks. SubWSC consists of the following steps. (1) Generate a sample by randomly selecting merchants. (2) Extract sampled columns with similarities between selected merchants (blue elements) and between selected and unselected merchants (orange elements). The solid lines in the network on the left side refer to the similarities involved in the sample. (3) Calculate Laplacian matrix $\mathcal{L}_n$ using the sample. (4) Conduct SVD on $\mathcal{L}_n^{\top} \mathcal{L}_n$ and form a $n \times K$ matrix using its eigenvectors. (5) Conduct K-means clustering on the $n$ rows of the matrix to find clusters of all merchants.}
\label{subwsc_framework}
\end{figure}

In Figure \ref{subwsc_framework}, the key issue is to make an inference for assigning all merchants based on those selected in the subsample. Specifically, let $n_s (n_s \geq K)$ denote the subsample size. We can define a sampling matrix $\mathcal{C} \in \mR^{n \times n_s}$ to describe merchants selected into the sample. Each column of $\mathcal{C}$ can be viewed as a one-hot code of a selected merchant. If merchant $i$ is selected as the $j$-th one of the sample, we have $\mathcal{C}_{ij}=1$ and $\mathcal{C}_{i'j}=0$ for $i' \neq i$. For simplicity, let $\phi(j)$ denote the corresponding order of the $j$-th sample, e.g., $\phi(j)=i$. Given $S_n$, which corresponds to the whole graph, SubWSC is conducted via the following steps.

\noindent{\bf SubWSC-Step 1. Construct a normalized Laplacian matrix for the sub-graph.} For convenience, we use mathcal letters to denote the notation used in the subsampling method. Let $\mathcal{L}_{n} \in\mR^{n\times n_s}$ represent the normalized Laplacian matrix of the subsample, calculated as
\begin{equation*}
    \mathcal{L}_{n} = D^{-1/2}_n S_n \mathcal{C} \mathcal{D}^{-1/2}_{n_s},
\end{equation*}
where $\mathcal{D}_{n_s}$ is an $n_s \times n_s$ diagonal matrix. For the $j$-th sample, where $\phi(j)=i$, we have $\mathcal{D}_{n_s, jj}= D_{n,ii}$.

\noindent{\bf SubWSC-Step 2. Obtain $K$ eigenvectors of $\mathcal{L}_n \mathcal{L}^{\top}_n$.}
Suppose that matrices $\widehat{\mathcal{U}}_n \in \mathbb{R}^{n \times n}$ and $\widehat{\mathcal{V}}_{n_s} \in \mathbb{R}^{n_s \times n_s}$ are stacked eigenvectors of $\mathcal{L}_n \mathcal{L}^{\top}_n$ and $\mathcal{L}^{\top}_n \mathcal{L}_n$, respectively. We can find the $K$ largest eigenvalues and the corresponding eigenvectors, thus generating $\widehat{\mathcal{U}}_K \in \mathbb{R}^{n \times K}$ and $\widehat{\mathcal{V}}_{K} \in \mathbb{R}^{n_s \times K}$ from $\widehat{\mathcal{U}}_n$ and $\widehat{\mathcal{V}}_{n_s}$, respectively. Specifically, we focus on $\widehat{\mathcal{U}}_K$ for the purpose of deducing the partition of all merchants. With limited computational resources, we can calculate $\widehat{\mathcal{U}}_K$ through a specific method with low computation costs, which is as follows:
\begin{equation*}
    \widehat{\mathcal{U}}_K = \mathcal{L}_n \widehat{\mathcal{V}}_K \Sigma^{-1/2}_K,
\end{equation*}
where $\Sigma_K$ is a diagonal matrix consisting of the $K$ largest eigenvalues of $\mathcal{L}^{\top}_n \mathcal{L}_n$. Because $\widehat{\mathcal{V}}_K$ and $\Sigma^{-1/2}_K$ can be derived from the SVD of $\mathcal{L}^{\top}_n \mathcal{L}_n$, a $n_s \times n_s$ matrix, the computational complexity related to SVD in WSC can be greatly optimized.

\noindent{\bf SubWSC-Step 3. Conduct clustering on $
\widehat{\mathcal{U}}_K$.} Each row of $\widehat{\mathcal{U}}_K$ refers to an individual merchant. Thus, the partition of $n$ rows of $\widehat{\mathcal{U}}_K$ can directly lead to the partition of all merchants.

The overall algorithm is shown in Algorithm \ref{alg_subwsc}. The computational complexity is given in Proposition \ref{comcost_sampling}.
\begin{algorithm}[ht!]
\caption{SubWSC algorithm}
\begin{algorithmic}
\STATE \textbf{Input}: $K$: number of clusters; $n_s$: subsample size; $S_n$: similarity matrix; $\mathcal{C}$: sampling matrix;\
\STATE Initialize an $n \times n$ zero matrix $D_n$;
\FOR{$i \in \{1, \cdots, n\}$}
    \STATE $D_{n,ii} = \sum_{j=1}^n S_{n,ij}$;
\ENDFOR
\STATE Initialize an $n_s \times n_s$ zero matrix $D_{n_s}$;
\FOR{$j \in \{1, \cdots, n_s\}$}
    \STATE $\mathcal{D}_{n_s,jj} = D_{n,\phi(j)}$;
\ENDFOR

\STATE Compute the Laplacian matrix $\mathcal{L}_{n} = D^{-1/2}_n S_n \mathcal{C} \mathcal{D}^{-1/2}_{n_s}$;

\STATE Conduct SVD on $\mathcal{L}^{\top}_n \mathcal{L}_n$, namely, $\mathcal{L}^{\top}_n \mathcal{L}_n = \widehat{\mathcal{V}}_{n_s} \Sigma_{n_s} \widehat{\mathcal{V}}^{\top}_{n_s}$;

\STATE Find the $K$ largest eigenvalues of $\mathcal{L}^{\top}_n \mathcal{L}_n$. Use the $K$ largest eigenvalues to form a diagonal matrix $\Sigma_K$. Stack the corresponding eigenvectors to form an $n_s \times K$ matrix $\widehat{\mathcal{V}}_K$;

\STATE Compute an $n\times K$ matrix $\widehat{\mathcal{U}}_K = \mathcal{L}_n \widehat{\mathcal{V}}_K \Sigma^{-1/2}_K$;

\STATE Apply K-means to cluster the $n$ rows of $\widehat{\mathcal{U}}_K$ into $K$ clusters. Because each row of $\widehat{\mathcal{U}}_K$ refers to an individual merchant, the result $\{C_1,\cdots,C_K\}$ shows a partition for the $n$ merchants.

\STATE \textbf{Output}: A partition for $n$ merchants $\{C_1,\cdots,C_K\}$.
\end{algorithmic}\label{alg_subwsc}
\end{algorithm}

\begin{proposition}
Given $S_n$, the computational complexity of SubWSC is $O(n n^2_s)$.
\label{comcost_sampling}
\end{proposition}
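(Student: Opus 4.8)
The plan is to verify the claim by a line-by-line cost accounting of Algorithm~\ref{alg_subwsc}, treating the similarity matrix $S_n$ as given, and then reading off the dominant term. Throughout I will use the ordering $K \le n_s \le n$, which holds because $n_s \ge K$ by assumption and $n_s$ indexes a subsample of the $n$ merchants; this ordering is what lets the lower-order terms collapse into $O(n n_s^2)$.

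First I would handle the inexpensive steps. Building $D_n$ amounts to forming the $n$ row sums $D_{n,ii} = \sum_{j=1}^{n} S_{n,ij}$; reading $\mathcal{D}_{n_s}$ off the diagonal of $D_n$ via $\phi$ costs $O(n_s)$; and $S_n\mathcal{C}$ is just the selection of the $n_s$ columns indexed by $\phi(1),\dots,\phi(n_s)$, i.e. $O(n n_s)$. Forming $\mathcal{L}_n = D_n^{-1/2} S_n \mathcal{C}\,\mathcal{D}_{n_s}^{-1/2}$ is then a left/right diagonal rescaling of the $n\times n_s$ matrix $S_n\mathcal{C}$, costing $O(n n_s)$. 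The only step whose naive cost exceeds $O(n n_s^2)$ is reading all of $S_n$ to form $D_n$, which is $O(n^2)$; this is absorbed in the regime $n_s = \Omega(n^{1/2})$, and alternatively one may regard the degree vector, like $S_n$ itself, as part of the given input, or approximate it from the sampled columns at cost $O(n n_s)$.

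Next come the decomposition steps, which contain the bottleneck. The Gram matrix $\mathcal{L}_n^\top\mathcal{L}_n$ is a product of an $n_s\times n$ matrix with an $n\times n_s$ matrix, costing $O(n n_s^2)$. Its SVD $\mathcal{L}_n^\top\mathcal{L}_n = \widehat{\mathcal{V}}_{n_s}\Sigma_{n_s}\widehat{\mathcal{V}}_{n_s}^\top$ acts on an $n_s\times n_s$ matrix and costs $O(n_s^3) \le O(n n_s^2)$; truncating to the top $K$ eigenpairs to obtain $\Sigma_K$ and $\widehat{\mathcal{V}}_K$ costs $O(n_s)$. Recovering $\widehat{\mathcal{U}}_K = \mathcal{L}_n\widehat{\mathcal{V}}_K\Sigma_K^{-1/2}$ is a product of an $n\times n_s$ by an $n_s\times K$ matrix followed by a $K\times K$ diagonal scaling, i.e. $O(n n_s K) \le O(n n_s^2)$. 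Finally, K-means on the $n$ rows of $\widehat{\mathcal{U}}_K \in \mathbb{R}^{n\times K}$ costs $O(n K t)$ for a bounded iteration count $t$, hence $O(n n_s)$. Summing over all steps gives the total $O(n n_s^2)$, dominated by the formation of the $n_s\times n_s$ Gram matrix $\mathcal{L}_n^\top\mathcal{L}_n$.

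Since the argument is a routine tally of matrix-multiplication costs, I do not expect a genuine obstacle; the two points that need care are (i) the accounting of $D_n$ against the ``given $S_n$'' hypothesis, as discussed above, and (ii) the consistent use of $K \le n_s \le n$ to fold $O(n_s^3)$, $O(n n_s K)$ and $O(n K t)$ into $O(n n_s^2)$. It is also worth remarking, for contrast, that replacing the exact SVD of the $n\times n$ Laplacian in WSC by an SVD of the $n_s\times n_s$ Gram matrix is exactly what trades the $O(n^3)$ cost of WSC for the stated $O(n n_s^2)$.
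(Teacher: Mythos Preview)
Your proposal is correct and follows essentially the same approach as the paper: a step-by-step cost accounting of Algorithm~\ref{alg_subwsc}, identifying the formation and SVD of the $n_s\times n_s$ Gram matrix as the $O(nn_s^2+n_s^3)$ bottleneck and folding the remaining steps (subsampling, recovery of $\widehat{\mathcal{U}}_K$, K-means) underneath via $K\le n_s\le n$. Your treatment is in fact more careful than the paper's, which simply lists subsampling as $O(n)$, eigenvector computation as $O(nn_s^2+n_s^3)$, and K-means as $O(n)$ without discussing the $O(n^2)$ cost of building $D_n$ from $S_n$; your point (i) flags exactly this gap and handles it sensibly.
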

\noindent The proof of Proposition \ref{comcost_sampling} is given in Appendix A.2. Proposition \ref{comcost_sampling} states that given a similarity matrix, SubWSC has a much lower time cost than WSC. Thus, it can be applied to large datasets, even with limited computational resources. In real applications, to further illustrate the computational complexity of the methods and discuss the acceleration of the proposed algorithm, we make the following remarks.

\noindent \textbf{Remark 2. (Total Computational Complexity of SubWSC)} It is noteworthy that the similarity matrix $S_n$ must be computed in advance before the subsampling. The computational complexity of calculating $S_n$ is at least $O(n^2)$ because it requires calculating Wasserstein distances between each pair of merchants. However, by adopting a {\it cover tree} structure \citep{10.1145/1143844.1143857}, the calculation time can be reduced to $O(n\log n)$. See Appendix A.1 for a more detailed illustration. For the similarity matrix calculated based on the cover tree data structure, the overall computational complexity of clustering is $O(n \log n + n n_s^2)$. In real applications, the selection of subsample size is a critical issue. The selection of SubWSC subsample size is discussed in the next section.

\noindent \textbf{Remark 3. (Acceleration of Constructing $S_n$)} An optional {\it K-nearest neighbors (KNN) construction} step can be considered to reconstruct the similarity matrix $S_n$. This is a widely applied method to improve the efficiency of spectral clustering \citep{A2017Clustering}. From the graph perspective, we can drop weak relationships within the graph by this process. Let $k_0$ denote a threshold, e.g., $k_0=10$, for the neighbors of each merchant. Then, the reconstructed similarity matrix $S'_n=[S'_{n,ij}]\in \mR^{n \times n}$ can be defined as
\beqrs
    S'_{n,ij} = \left\{
             \begin{array}{lr}
                S(i,j),  ~~~~i \in \mathcal{N}(j, k_0) \: {\mbox {or}} \: j \in \mathcal{N}(i, k_0)  \\
                0, ~~~~~~~\mbox{otherwise}
             \end{array}
\right.
\eeqrs
where $\mathcal{N}(j, k_0)$ is the set of the $k_0$ nearest neighbors of merchant $i$ measured by the Wasserstein distance function. The set $\mathcal{N}(j, k_0)$ satisfies $|\mathcal{N}(j, k_0)|=k_0$, where $|\cdot|$ denotes the size of the set; $W(i,j) \leq W(i',j)$, for any $ i \in \mathcal{N}(j, k_0)$, and $i' \notin \mathcal{N}(j, k_0)$.

\section{Theoretical analysis}
This section discusses the theoretical properties of both the WSC and SubWSC methods. First, we introduce some basic notations and assumptions. Second, we prove the convergence of the WSC method for ECDFs of merchants. Then, we show that the clustering error rate of the proposed method converges to 0 as the number of merchants $n$ goes to infinity. Finally, we investigate the theoretical properties of SubWSC based on the results related to WSC.

\subsection{Notations and Basic Assumptions}

Because all transactions are assumed to be bounded by a constant $M_0$, their amounts can be standardized to the range $[0,1]$. Consequently, the maximal value of $W(i,j)$ for any $1 \leq i, j \leq n)$ is finite. In this way, we can simplify the similarity function as $S(i,j)=\exp \{-W(i,j)\}$ with $W(i,j) \in [0,1]$. Based on this similarity function, we define $d^*_{\min}=\min \limits_{1 \leq i \leq n} \sum_{j \neq i} E[S(i,j)]$. Then, the following assumptions are introduced.

\noindent \textbf{Assumption 1 (Latent distribution)}: Assume that there exist $K$ underlying distributions $\{F^*_1,\cdots,F^*_K\}$, such that, for each merchant $i$, $1\leq i \leq n$, the CDF $F_i$ is identical to one of the $K$ distributions. Let a latent variable $\gamma_{i}$ denote the relationship between merchant $i$ and the distributions. If the CDF of merchant $i$ is identical to $F^*_k$, then $\gamma_{i}=k$.

\noindent \textbf{Assumption 2 (Number of transactions)}: Assume that $v_{\min}=\Omega(\log n)$, where $v_{\min}=\min\{v_{i}: i=1,\cdots,n\}$. The notation $g(n) = \Omega(f(n))$ indicates the existence of positive constants $c_0$ and $n_0$ such that $g(n) \geq c_0 f(n)$ for all $n \geq n_0$ \citep{knuth1976big}.

\noindent \textbf{Assumption 3 (Similarity matrix)}: Assume that  $d^*_{\min} = \Omega \left( n  \right)$.

\noindent According to Assumption 1, each merchant's CDF can match exactly one of the $K$ underlying distributions, suggesting a particular behavior pattern. Thus, $n$ merchants can be partitioned according to their behavior patterns. Let $n^*_k$ denote the number of merchants corresponding to $F^*_k$, $1 \leq k \leq K$. Then, we define $n_{\max}=\max \limits_{k} n^*_k$ and $n_{\min}=\min \limits_{k} n^*_k$, which will be used in the discussions related to the clustering error rates. Assumption 2 shows that the order of $v_{\min}$ should be not less than $\log n$. For example, if a dataset contains $n=10,000$ merchants, Assumption 2 requires the minimal number of transactions per merchant to be no smaller than the order of $\log n=9.21$. This is easily satisfied in real applications. Assumption 3 requires the similarity matrix to not be too sparse, which is reasonable and easily satisfied if there are no isolated clusters or outliers. In common finite mixture models \citep{Shedden2015Finite}, it is assumed that all observations follow $\mathcal{F}^*=\sum_{k=1}^K w_k F^*_k$, which is a mixture of $K$ latent distributions, where $w_k$ is the weight for $F^*_k$ and $\sum_k w_k = 1$. If we assume that the latent distributions in Assumption 1 can form a finite mixture model, it can be readily verified that $d^*_{\min} = \min \limits_i \sum_{j} E\{S(i,j)\} \geq \min \limits_i \sum_{j:\gamma_j = \gamma_i} E\{S(i,j)\}.$ If $i$ and $j$ have identical CDFs, we can derive that $\lim \limits_{v_{\min} \to \infty} E\{S(i,j)\}=1$ (see Appendix A.3). Because $v_{\min}=\Omega(n)$, there exist positive constants $c_0 \leq 1$ and $n_0$ such that for $n \geq n_0$, we have $E\{S(i,j)\} \geq c_0$ for any $i$ and $j$ with $\gamma_j = \gamma_i$. Because $d^*_{\min} \geq n c_0 \min \limits_k w_k $, we can verify that $d^*_{\min} = \Omega(n)$. Based on these assumptions, the proposed method is theoretically analyzed in the following subsection.

\subsection{Theoretical analysis of WSC}
As shown in Section 2.2, the spectral clustering depends on the eigenvectors of the Laplacian matrix. Thus, we first prove the convergence of the eigenvectors of the Laplacian matrix. Thereafter, we discuss the clustering error rate to show the robustness of the proposed method. Based on the theoretical framework of WSC, we then derive the theoretical results for SubWSC.

First, we define some necessary notations for establishing the WSC theory. We define an underlying matrix $L^*_n$ for $n$ merchants based on the expectation of the similarity function. It plays a critical role in the theoretical analysis of WSC. For merchant $i$ ($1\leq i \leq n$), let $d_i$ denote $D_{n,ii}=\sum_j S(i,j)$. Let $d^*_i$ denote the expectation of $d_i$. It can be expressed as $d^*_i = \sum_{j \neq i} E\{S(i,j)\}$. According to the definitions, we have $d^*_{\min}=\min \limits_{1\leq i \leq n} d^*_i$. Then, we can define a Laplacian matrix $L^*_n=\left[ L^*_{n, ij} \right] \in \mR^{n \times n}$. The diagonal elements of $L^*_n$ are 1. For $1 \leq i,j \leq n$, the element is $L^*_{n, ij} = (d^*_i d^*_j)^{-1/2}E\{S(i,j)\}$. The matrix form of $L^*_n$ can be expressed as
\begin{equation*}
    L^*_n = (D^*_n)^{-1/2} S^*_n (D^*_n)^{-1/2},
\end{equation*}
where $S^*_n = [E\{S(i,j)\}] \in \mR^{n \times n}$ and $D^*_n \in \mR^{n \times n}$ is a diagonal matrix with $D^*_{n,ii}=d^*_i$. Based on theoretical results related to the Wasserstein distance \citep{Panaretos2019Statistical}, we have $E\{W(F_i,\hat{F}_i)\}=O(v_i^{-1/2})$ for each $1\leq i \leq n$. Let $\delta_{k,k'}$ denote $W(F^*_k, F^*_{k'})$ for $1 \leq k, k' \leq K$. We conclude that $ E\{S(i,j)\} \to e^{-\delta_{\gamma_i,\gamma_j}}$ as $n \to \infty$. The detailed discussion is provided in Appendix A.3.

With respect to $L^*_n$, we prove that its eigenvectors can directly indicate the correct assignments of merchants. Based on the underlying cluster labels $\{\gamma_i: 1 \leq i \leq n\}$, we define a membership matrix $Z_n = [Z_{n,ik}] \in \mR^{n \times K}$, where $Z_{n,ik}=1$ if $\gamma_i = k$, and $0$ otherwise. Let $Z_{n,i\cdot}$ denote the $i$-th row of $Z_n$, $\lambda_1 \geq \cdots \geq \lambda_K$ denote the $K$ eigenvalues of $L^*_n$, and matrix $V_{n}=[V^1_n,\cdots,V^K_n]\in \mR^{n \times K}$ denote the $K$ eigenvectors corresponding to $\lambda_1, \cdots, \lambda_K$. The $i$-th row of $V_{n}$ is denoted as $V_{n,i\cdot}$. Lemma \ref{eigenstructure} shows the linear relationship between $Z_n$ and $V_n$.
\begin{lemma}
\textbf{(Structure of eigenvectors)} There exists a matrix $\mu_0 \in \mR^{K \times K}$ and a diagonal matrix $\mu_1 \in \mR^{n \times n}$ such that $V_n = \mu_1 Z_n\mu_0$. For any two merchants $i$ and $j$, $1 \leq i, j \leq n$, $Z_{n,i\cdot} = Z_{n,j\cdot}$ if and only if $V_{n,i\cdot} = V_{n,j\cdot}$.
\label{eigenstructure}
\end{lemma}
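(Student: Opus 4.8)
The plan is to reduce the $n$-dimensional eigenproblem for $L^*_n$ to a $K$-dimensional one, using the block structure forced by Assumption~1, and then extract both assertions of the lemma from that reduction. First I would record the block factorization. Under Assumption~1 the entry $S^*_{n,ij}=E\{S(i,j)\}$ depends on $i,j$ only through the labels $(\gamma_i,\gamma_j)$, so with $B=[B_{k\ell}]\in\mathbb{R}^{K\times K}$ denoting the common value (note $B_{kk}=1$ since $W(F^*_k,F^*_k)=0$) one has $S^*_n=Z_nBZ_n^{\top}$; likewise each $d^*_i$ is determined by $\gamma_i$ together with the cluster sizes $n^*_1,\dots,n^*_K$, so $d^*_i$ is constant within each cluster and $(D^*_n)^{-1/2}Z_n=Z_n\Delta$ for a positive diagonal $\Delta\in\mathbb{R}^{K\times K}$. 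Hence $L^*_n=(D^*_n)^{-1/2}Z_nBZ_n^{\top}(D^*_n)^{-1/2}$ has rank at most $K$, and it is convenient to also record $Z_n^{\top}Z_n=\mathrm{diag}(n^*_1,\dots,n^*_K)$ and $Z_n^{\top}(D^*_n)^{-1}Z_n=:N_0$, a positive diagonal matrix.

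Next I would identify the leading eigenspace. Any eigenvector $v$ of $L^*_n$ with eigenvalue $\lambda\neq0$ satisfies $v=\lambda^{-1}(D^*_n)^{-1/2}Z_n\bigl(BZ_n^{\top}(D^*_n)^{-1/2}v\bigr)$, so $v$ lies in the column space of $(D^*_n)^{-1/2}Z_n$; writing $v=(D^*_n)^{-1/2}Z_nw$, substituting back, and cancelling the full-column-rank factor $(D^*_n)^{-1/2}Z_n$ (using $Z_n^{\top}(D^*_n)^{-1}Z_n=N_0$) reduces the problem to $BN_0w=\lambda w$. Since $BN_0$ is similar to the symmetric matrix $N_0^{1/2}BN_0^{1/2}$ it has $K$ real eigenvalues, and once one invokes that $L^*_n$ is positive semidefinite (the kernel $e^{-W(\cdot,\cdot)}$ being positive definite, as already noted for the algorithm in Section~2.2) together with the distinctness of the $K$ latent distributions (under which $B$, and hence $BN_0$, is nonsingular), these eigenvalues are strictly positive and are therefore exactly the $K$ largest eigenvalues $\lambda_1\ge\cdots\ge\lambda_K$ of $L^*_n$. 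Collecting normalized eigenvectors $w^1,\dots,w^K$ of $BN_0$ into an invertible matrix $W=[w^1,\dots,w^K]\in\mathbb{R}^{K\times K}$ gives $V_n=(D^*_n)^{-1/2}Z_nW$, i.e.\ the claimed factorization with the diagonal $\mu_1=(D^*_n)^{-1/2}$ and $\mu_0=W$ (equivalently, via $(D^*_n)^{-1/2}Z_n=Z_n\Delta$, one may take $\mu_1=I_n$ and $\mu_0=\Delta W$).

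For the row equivalence, the $i$-th row of $V_n=(D^*_n)^{-1/2}Z_nW$ is $V_{n,i\cdot}=(d^*_i)^{-1/2}Z_{n,i\cdot}W=(d^*_i)^{-1/2}W_{\gamma_i,\cdot}$, which depends only on $\gamma_i$; thus $Z_{n,i\cdot}=Z_{n,j\cdot}$ (equivalently $\gamma_i=\gamma_j$) immediately gives $V_{n,i\cdot}=V_{n,j\cdot}$. Conversely, if $\gamma_i\neq\gamma_j$ then $V_{n,i\cdot}=V_{n,j\cdot}$ would make $W_{\gamma_i,\cdot}$ a positive scalar multiple of $W_{\gamma_j,\cdot}$, contradicting the linear independence of the rows of the invertible matrix $W$; hence $V_{n,i\cdot}\neq V_{n,j\cdot}$, which completes the ``if and only if''.

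The step I expect to be the main obstacle is the identification of the leading eigenspace: one must be certain that the top-$K$ eigenspace of $L^*_n$ is \emph{exactly} the column span of $(D^*_n)^{-1/2}Z_n$, which requires $L^*_n$ to be positive semidefinite \emph{and} of rank exactly $K$ (equivalently, the $K\times K$ matrix $[e^{-\delta_{k\ell}}]$ of the distinct latent distributions being invertible); otherwise the top-$K$ eigenvectors could pick up directions from $\ker L^*_n$ and the clean structure would fail. A related point to pin down is that $S^*_n$ is genuinely block-constant — strictly speaking $E\{S(i,j)\}$ also depends on the transaction counts $v_i,v_j$ and equals $e^{-\delta_{\gamma_i,\gamma_j}}$ only asymptotically (Appendix~A.3), so the argument is carried out on the block-constant population matrix built from these limiting similarities, which is where Assumptions~1--3 enter. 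Everything else is routine linear algebra.
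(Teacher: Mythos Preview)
Your approach is essentially correct and follows the same reduction strategy as the paper (factorize $L^*_n$ through $Z_n$, reduce to a $K\times K$ eigenproblem), but there is one substantive discrepancy you flagged but did not resolve: the block-constancy assumption $S^*_n=Z_nBZ_n^{\top}$ is \emph{not} what the paper's $L^*_n$ satisfies. By Appendix~A.3, $E\{S(i,j)\}=g_ig_je^{-\delta_{\gamma_i,\gamma_j}}$ with $g_i$ depending on the transaction count $v_i$, so $S^*_n$ is block-constant only after pulling out a diagonal factor; consequently your claim that $d^*_i$ is constant within clusters is also false in general. The paper handles this by writing $L^*_n=G_nZ_nB_KZ_n^{\top}G_n$ with $G_{n,ii}=(d^*_i)^{-1/2}g_i$, i.e.\ it absorbs the $g_i$ into the diagonal $\mu_1$. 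All of your arguments go through verbatim once you replace $(D^*_n)^{-1/2}$ by this $G_n$ and set $N_0=Z_n^{\top}G_n^2Z_n$; the cleaner form $\mu_1=I_n$, $\mu_0=\Delta W$ you mention is then no longer available, but it is not needed for the lemma.

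On the algebraic packaging, the paper reaches $V_n=G_nZ_n(Z_n^{\top}G_n^2Z_n)^{-1/2}V_{\widetilde B_K}$ by comparing two SVDs of $(L^*_n)^{\top}L^*_n$ (one via the $Z_n$ factorization, one directly), whereas you argue directly that every nonzero eigenvector lies in the column span of $G_nZ_n$ and reduce to the eigenproblem for $BN_0\sim N_0^{1/2}BN_0^{1/2}$. These are the same reduction in different clothing. Your treatment of why the top-$K$ eigenspace coincides \emph{exactly} with this span (positive semidefiniteness plus invertibility of $B_K$) is in fact more careful than the paper's, which tacitly assumes this identification when matching the two SVD forms.
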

\noindent The proof of Lemma \ref{eigenstructure} is provided in Appendix B.1. According to Lemma \ref{eigenstructure}, merchants $i$ and $j$ belong to the same cluster if $V_{n,i\cdot}=V_{n,j\cdot}$, namely, if their CDFs are identical. Furthermore, $V_{n}$ consists of only $K$ distinct rows, each of which is related to a single underlying distribution. K-means clustering on $V_n$ determines $K$ cluster centers corresponding to the $K$ underlying distributions to partition all merchants accurately. Then, we prove that the eigenvectors $\widehat{V}_n^1,\cdots,\widehat{V}_n^K$ converge to those of the underlying Laplacian matrix $L^*_n$.
\begin{theorem}
\textbf{(Convergence of eigenvectors)} Based on Assumptions 1--3, there exist an orthogonal matrix $\Sigma$ and a constant $n_0$ such that for $n \geq n_0$,
\begin{equation}
    \| \widehat{V}_{n}\Sigma - V_{n} \|_F \leq  \frac{c_0\sqrt{K \log n}}{\lambda_K \sqrt{n}}\label{errorboundsc}
\end{equation}
holds with a probability of at least $(1-n^{-1}) [ 1 - n^{-1} \exp \left\{(n-1)^{-2}\right\} ] $.
\label{conv_sc}
\end{theorem}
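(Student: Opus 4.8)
The plan is to realize this as an application of a Davis--Kahan-type perturbation bound for the $K$-dimensional invariant subspaces of $L^*_n$ and $\widehat{L}_n := L_n$, the empirical normalized Laplacian. The key identity is that $\widehat{V}_n$ and $V_n$ span the top-$K$ eigenspaces of $\widehat{L}_n$ and $L^*_n$ respectively, so by Davis--Kahan (in the form that controls $\|\widehat{V}_n \Sigma - V_n\|_F$ for the optimal orthogonal $\Sigma$, e.g. the variant in Yu--Wang--Samworth) there is a constant such that
\begin{equation*}
  \| \widehat{V}_n \Sigma - V_n \|_F \;\le\; \frac{c\sqrt{K}\,\| \widehat{L}_n - L^*_n \|_2}{\lambda_K - \lambda_{K+1}},
\end{equation*}
and since $L^*_n$ has rank $K$ (by the structure underlying Lemma \ref{eigenstructure}, its nonzero eigenvalues are exactly $\lambda_1,\dots,\lambda_K$ and $\lambda_{K+1}=0$), the eigengap is simply $\lambda_K$. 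So the whole theorem reduces to the spectral-norm concentration bound
\begin{equation*}
  \| \widehat{L}_n - L^*_n \|_2 \;=\; O\!\left( \sqrt{\tfrac{\log n}{n}} \right)
\end{equation*}
with the stated probability, after which $\sqrt{K}/\lambda_K$ times this gives \eqref{errorboundsc}.

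To get the spectral-norm bound I would split the perturbation into two pieces via a triangle inequality: first the error from replacing the true CDFs by the ECDFs inside $S(i,j)$ --- i.e. comparing $S_n$ with the ``population-with-true-CDFs'' similarity matrix whose $(i,j)$ entry is $e^{-\delta_{\gamma_i,\gamma_j}}$ --- and second the error from normalization, i.e. comparing $D_n^{-1/2}(\cdot)D_n^{-1/2}$ with $(D^*_n)^{-1/2}(\cdot)(D^*_n)^{-1/2}$. For the first piece the key input is the Wasserstein convergence rate $E\{W(F_i,\hat F_i)\} = O(v_i^{-1/2})$ together with a concentration (e.g. bounded-differences / McDiarmid or the sub-Gaussian tail for $W_1$ of empirical measures on $[0,1]$), which under Assumption 2 ($v_{\min} = \Omega(\log n)$) makes each entrywise deviation $|S_{n,ij} - e^{-\delta_{\gamma_i,\gamma_j}}|$ of order $\sqrt{\log n / v_{\min}}$, uniformly over the $O(n^2)$ pairs with probability $1 - n^{-1}$ after a union bound; the $1$-Lipschitz property of $x\mapsto e^{-x}$ transfers the Wasserstein bound to the similarity bound. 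This entrywise control, since the matrix is $n\times n$, gives a Frobenius and hence spectral bound; one should be a little careful to either use an $\ell_\infty\to$ operator-norm argument or a Matrix-Bernstein/$\varepsilon$-net argument to avoid losing a factor of $\sqrt n$ --- this is where the $\sqrt{\log n / n}$ (rather than $\sqrt{\log n}$) rate and the curious probability factor $1 - n^{-1}\exp\{(n-1)^{-2}\}$ come from, the latter presumably arising from bounding a sum of $n-1$ independent bounded terms per row by Bernstein/Hoeffding and taking a union over rows. For the second piece, Assumption 3 ($d^*_{\min} = \Omega(n)$) is what makes $D_n^{-1/2}$ well-conditioned: $|d_i - d^*_i|/d^*_i$ is small by the same concentration applied to a sum of $n-1$ bounded entries, and then $\|D_n^{-1/2} S_n D_n^{-1/2} - (D^*_n)^{-1/2} S^*_n (D^*_n)^{-1/2}\|_2$ is controlled by $\|S_n - S^*_n\|_2 / d^*_{\min}$ plus cross terms $\|S^*_n\|_2 \cdot \max_i |d_i^{-1/2} - (d^*_i)^{-1/2}|(d^*_i)^{1/2}$-type quantities, all of which are $O(\sqrt{\log n / n})$.

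The main obstacle I expect is the passage from entrywise / Frobenius control to the \emph{spectral} norm at the right rate: a naive bound gives $\|\widehat L_n - L^*_n\|_2 \le \|\widehat L_n - L^*_n\|_F \le n \cdot \max_{ij}|\cdot|$, which is off by $\sqrt n$. Getting the claimed $O(\sqrt{\log n/n})$ requires exploiting independence across merchants --- the rows of the noise matrix (indexed by merchants) are independent given the latent labels --- so that a matrix concentration inequality (Matrix Bernstein, or a symmetrization plus $\varepsilon$-net on the sphere) applies; the per-merchant variance proxy is $\sum_{j} \mathrm{Var}(S_{n,ij}) \lesssim n / v_{\min}$, and after normalization by $d_i \asymp n$ one lands at variance $\asymp 1/(n v_{\min})$ per scaled row, giving operator norm $\sqrt{(\log n) / (n v_{\min})} \cdot \sqrt{n} = \sqrt{\log n / v_{\min}}$ before normalization, and $\sqrt{\log n / n}$-type after --- so one must also use $v_{\min} = \Omega(\log n)$ carefully, and I would double-check whether the stated bound implicitly absorbs $v_{\min}$ into constants or whether the intended regime is $v_{\min} \asymp n$ (which Assumption 3's ``$\Omega(n)$'' phrasing in the mixture discussion hints at). Modulo that bookkeeping, the rest is the routine plug-in into Davis--Kahan with eigengap $\lambda_K$.
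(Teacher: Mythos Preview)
Your overall architecture matches the paper exactly: Davis--Kahan in the Yu--Wang--Samworth form with eigengap $\lambda_K$ (since $L^*_n$ has rank $K$) reduces everything to a spectral-norm bound $\|L_n - L^*_n\| = O(\sqrt{\log n/n})$, and the latter is obtained by a two-piece triangle inequality separating the similarity error from the degree-normalization error. The paper's pivot is the intermediate matrix $\widetilde{L}_n = (D^*_n)^{-1/2} S_n (D^*_n)^{-1/2}$, so the first piece is $\widetilde{L}_n - L^*_n$ (empirical $S_n$ versus $S^*_n = [E\{S(i,j)\}]$, both with population degrees) and the second is $L_n - \widetilde{L}_n$ (empirical versus population degrees, handled via the Chung--Lu degree concentration you describe). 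Note that the target is $S^*_n$, not the limiting matrix $[e^{-\delta_{\gamma_i,\gamma_j}}]$ you propose; centering at the exact expectation is what makes the summands mean-zero for matrix Bernstein.

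There is one real gap in your sketch. Your claim that ``the rows of the noise matrix are independent given the latent labels'' is false: each entry $S(i,j)$ depends on \emph{both} $\hat F_i$ and $\hat F_j$, so distinct rows share randomness and matrix Bernstein does not apply directly to the whole matrix. The paper's workaround is to write $Q = \widetilde{L}_n - L^*_n = \sum_{i=1}^{n-1} Q_i$ with $Q_i = \sum_{j>i} X_{ij}$, and for each fixed $i$ condition on $\hat F_i$; conditionally the summands $\{X_{ij}\}_{j>i}$ \emph{are} independent (each involves a distinct $\hat F_j$), so Tropp's matrix Bernstein bounds each $\|Q_i\|$, and a union bound over $i$ plus the crude triangle inequality $\|Q\| \le \sum_i \|Q_i\|$ finishes. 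The factor of $n$ lost to this summation is exactly compensated by the $1/d^*_{\min} \asymp 1/n$ scaling built into each $X_{ij}$, which is how the $\sqrt{\log n/n}$ rate survives and where the probability factor $1 - n^{-1}\exp\{(n-1)^{-2}\}$ comes from. Finally, your $v_{\min}$ worry is unnecessary: the paper only uses $W(i,j)\in[0,1]$ to bound the per-entry variance by $\{1+(2v_{\min})^{-1}\}/(d^*_i d^*_j)$, so Assumption~2 enters merely to cap the bracketed constant, not to drive the rate.
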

\noindent The proof of Theorem \ref{conv_sc} is given in Appendix B.2.
From (\ref{errorboundsc}), the following conclusions are derived. First, given $K$ and $\lambda_K$, the discrepancy between $\widehat{V}_{n}\Sigma$ and $V_{n}$ measured by Frobenius norm is $O_p\left( \sqrt{\log n / n} \right)$. Thus, the eigenvectors $\widehat{V}_{n}$ converge to $V_{n}$ in probability as $n \to \infty$. Because $V_{n}$ indicates the correct partition, the convergence of $\widehat{V}_{n}$ suggests that the resulting clusters converge to the correct partition of all merchants. Second, the estimation error may be reduced if $\lambda_K$ is relatively large, which suggests that an appropriate value has been chosen for the number of clusters.

Based on the theoretical results mentioned above, we further discuss the clustering error in WSC. According to Lemma \ref{eigenstructure}, the K-means clustering on $V_n$ leads to $K$ distinct cluster centers. For each merchant $i$, $1 \leq i \leq n$, the corresponding correct cluster center can thus be described using the vector $V_{n, i\cdot}$. For the K-means clustering on $\widehat{V}_{n}$, let $c_i$ denote the center of the cluster that merchant $i$ is assigned to. Intuitively, merchant $i$ is correctly clustered if $\| c_i - V_{n, i\cdot}\| < \| c_i - V_{n, j\cdot} \|$ holds for all $j$ with $\gamma_j \neq \gamma_i$. By Lemma B.1 in Appendix B.3, a sufficient condition for the correct assignment is $\|c_i - V_{n, i\cdot}\| < \{2n_{\max}(1+c_0)\}^{-1/2}$, where $c_0$ is a constant. Thus, we define the clustering error rate as $P_e = \# \{i: \|c_i - V_{n, i\cdot}\| \geq \{2n_{\max}(1+c_0)\}^{-1/2} \}/n$, where $\#\{\cdot\}$ is the number of elements within a set.
\begin{theorem} \textbf{(Clustering error rate)}
Under Assumptions 1--3, there exists a constant $c_0$ and a positive integer $n_0$ such that for $n \geq n_0$,
\begin{equation*}
    P_e \leq  \frac{c_0 K n_{\max} \log n }{\lambda^2_K n^2}
\end{equation*}
holds with a probability of at least $(1-n^{-1}) [ 1 - n^{-1} \exp \left\{(n-1)^{-2}\right\} ] $.
\label{clustererrorrate}
\end{theorem}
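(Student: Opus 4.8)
The plan is to convert the Frobenius-norm control of the estimated eigenvectors supplied by Theorem~\ref{conv_sc} into a count of misclustered merchants, using the sufficient condition for correct assignment recalled just before the statement (Lemma~B.1): merchant $i$ is correctly assigned whenever $\|c_i - V_{n,i\cdot}\| < \tau$, where $\tau = \{2n_{\max}(1+c_0)\}^{-1/2}$.

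First I would reduce everything to a single coordinate frame. Running $K$-means on the rows of $\widehat{V}_n$ and then rotating the resulting centroids by the orthogonal matrix $\Sigma$ of Theorem~\ref{conv_sc} produces exactly the clustering obtained by running $K$-means on the rows of $\widehat{V}_n\Sigma$, because the $K$-means objective and the induced assignment are invariant under a common orthogonal transformation of all points; here the orthogonality of $\Sigma$ is essential. Hence, without loss of generality, I may assume the $K$-means step acts on $\widehat{V}_n\Sigma$, and let $c_i$ be the row of the centroid matrix $\widehat{C}\in\mR^{n\times K}$ assigned to merchant $i$, so that $\|\widehat{C}-V_n\|_F^2 = \sum_{i=1}^n \|c_i - V_{n,i\cdot}\|^2$.

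Second I would bound $\|\widehat{C}-V_n\|_F$. By Lemma~\ref{eigenstructure}, $V_n$ has at most $K$ distinct rows, so $V_n$ itself is the centroid matrix of a feasible $K$-clustering of the rows of $\widehat{V}_n\Sigma$, with objective value $\|\widehat{V}_n\Sigma - V_n\|_F^2$. Assuming the $K$-means routine returns a globally optimal (or constant-factor approximately optimal) solution, this gives $\|\widehat{C}-\widehat{V}_n\Sigma\|_F \le \|V_n - \widehat{V}_n\Sigma\|_F$ up to the approximation constant, and the triangle inequality yields $\|\widehat{C}-V_n\|_F \le 2\|\widehat{V}_n\Sigma - V_n\|_F$. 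Inserting the bound of Theorem~\ref{conv_sc}, on the event of probability at least $(1-n^{-1})[1-n^{-1}\exp\{(n-1)^{-2}\}]$ we get $\|\widehat{C}-V_n\|_F^2 \le 4c_0^2 K \log n/(\lambda_K^2 n)$.

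Third I would count the merchants violating the sufficient condition: every index $i$ with $\|c_i - V_{n,i\cdot}\| \ge \tau$ contributes at least $\tau^2$ to $\|\widehat{C}-V_n\|_F^2$, so their number is at most $\tau^{-2}\|\widehat{C}-V_n\|_F^2 = 2n_{\max}(1+c_0)\|\widehat{C}-V_n\|_F^2$. Combining with the previous bound and dividing by $n$ gives $P_e \le 8c_0^2(1+c_0) K n_{\max}\log n/(\lambda_K^2 n^2)$; absorbing all absolute constants into a single $c_0$ yields the claimed inequality, and the probability is inherited verbatim from Theorem~\ref{conv_sc}, which is the only stochastic ingredient. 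The main obstacle I anticipate is the second step: the comparison with $V_n$ is legitimate only under an explicit hypothesis on the quality of the $K$-means solver (exact or constant-factor approximate global optimum), and one must check that this approximation constant and the rotation-matching of the first step are handled cleanly so they can be folded into $c_0$ without affecting the stated rate.
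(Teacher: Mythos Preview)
Your proposal is correct and follows essentially the same route as the paper: bound $\|\widehat{C}-V_n\|_F\le 2\|\widehat{V}_n\Sigma-V_n\|_F$ via the feasibility of $V_n$ for the $K$-means objective plus the triangle inequality, then convert to a misclustering count via the threshold $\tau$ from Lemma~B.1 and plug in Theorem~\ref{conv_sc}. If anything, you are more careful than the paper on two points it leaves implicit: the orthogonal-invariance argument that lets one work with $\widehat{V}_n\Sigma$ rather than $\widehat{V}_n$, and the explicit assumption that the $K$-means solver attains (or approximates) the global optimum.
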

\noindent Thus, given $\lambda_K$ away from 0, the clustering error rate converges to 0 in probability as $n \to \infty$. This guarantees the theoretical performance of the proposed method. The proof of Theorem \ref{clustererrorrate} is given in Appendix B.4.

\subsection{Theoretical analysis of SubWSC}
This subsection discusses the convergence and clustering error rate of SubWSC in detail. We define an underlying matrix $\mathcal{L}^*_n$ given a subsampling matrix $\mathcal{C}$. The definition of $\mathcal{L}^*_n$ is similar to that of $L^*_n$. The main difference between $\mathcal{L}^*_n$ and $L^*_n$ is that $\mathcal{L}^*_n$ is compiled on the subsample while $L^*_n$ is based on all $n$ merchants. Using the notations in Section 3.2, we express this matrix as
\begin{equation*}
    \mathcal{L}^*_n = (D^*_n)^{-\frac{1}{2}} S^*_n \mathcal{C} (\mathcal{D}^*_{n_s})^{-\frac{1}{2}},
\end{equation*}
where $\mathcal{D}^*_{n_s} \in \mR^{n_s \times n_s}$ is a diagonal matrix with $\mathcal{D}^*_{n_s, jj}=d^*_{\phi(j)}$, $1\leq j \leq n_s$. Then, matrix $\mathcal{U}_K$ can be defined using the left-eigenvectors of $\mathcal{L}^*_n$. Similar to Lemma \ref{eigenstructure} and Theorem \ref{conv_sc}, we prove that $\mathcal{U}_K$ describes the correct partition of $n$ merchants and $\widehat{\mathcal{U}}_K$ converges to $\mathcal{U}_K$ as $n \to \infty$ (see Appendix C.2).

However, there is a key issue in subsampling algorithm: we must discuss the subsample size in SubWSC. A smaller subsample size $n_s$ may lead to lower computational complexity, but also reduce the probability of covering merchants with $K$ different underlying distributions. Thus, the selection of $n_s$ needs to be discussed. Moreover, to implement SubWSC, we have to find the $K$ non-zero eigenvalues of $(\mathcal{L}_n)^{\top} \mathcal{L}_n$ and the corresponding eigenvectors. The existence of the $K$ non-zero eigenvalues requires that for each of the underlying distributions $F^*_1,\cdots,F^*_K$, there must be at least one relevant merchant selected in the subsample. For each $F^*_k$, $1\leq k \leq K$, let $n_{s,k}$ denote the number of merchants relevant to this distribution within the sample. Then, an acceptable subsampling solution can be described by an event $e^* = \{n_{s,k}: n_{s,k} \geq 1, 1 \leq k \leq K, \sum_{k=1}^K n_{s,k} = n_s \}$. With respect to $e^*$, we provide a discussion of $n_s$ in the following theorem, where $\alpha = -\{\log(1-n_{\min}/n)\}^{-1}$ and $n_{\min}=\min \limits_k n^*_k$.
\begin{theorem} \textbf{(Required subsample size)} If $n_s \geq \alpha (\log n + \log K)$, the event $e^*$ happens with a probability of at least $1-n^{-1}$.
\label{subsizens}
\end{theorem}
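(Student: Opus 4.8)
The plan is to reduce the event $e^*$ to a coupon-collector-type argument. Sampling $n_s$ merchants (with replacement, or equivalently under the sampling scheme used to build $\mathcal{C}$), the probability that a particular fixed latent class $F^*_k$ is \emph{missed} entirely by the subsample is at most $(1 - n^*_k/n)^{n_s} \le (1 - n_{\min}/n)^{n_s}$, since each draw independently lands in class $k$ with probability $n^*_k/n \ge n_{\min}/n$. A union bound over the $K$ classes then gives
\begin{equation*}
    P\bigl( (e^*)^c \bigr) \;\le\; \sum_{k=1}^K \Bigl(1 - \frac{n^*_k}{n}\Bigr)^{n_s} \;\le\; K \Bigl(1 - \frac{n_{\min}}{n}\Bigr)^{n_s}.
\end{equation*}

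Next I would plug in the hypothesized lower bound on $n_s$. Writing $\beta = -\log(1 - n_{\min}/n) > 0$ so that $\alpha = 1/\beta$, the requirement $n_s \ge \alpha(\log n + \log K)$ is exactly $n_s \beta \ge \log n + \log K = \log(nK)$, i.e. $(1 - n_{\min}/n)^{n_s} = e^{-n_s\beta} \le 1/(nK)$. Substituting into the union bound yields $P\bigl((e^*)^c\bigr) \le K \cdot \frac{1}{nK} = \frac{1}{n}$, hence $P(e^*) \ge 1 - n^{-1}$, which is the claim. A short remark should be added that if $n_{\min}/n$ is bounded away from $1$ (equivalently no class is all-but-vanishingly dominant), then $\alpha = O(1)$ and the bound $n_s = \Omega(\log n)$ claimed in the introduction follows; more precisely, using $-\log(1-x) \ge x$, we have $\alpha \le n/n_{\min}$, so $n_s \ge (n/n_{\min})(\log n + \log K)$ already suffices.

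The only genuine subtlety — and the step I would be most careful about — is the independence/with-replacement assumption underpinning the bound $P(\text{class }k\text{ missed}) \le (1-n^*_k/n)^{n_s}$. If the subsample is drawn \emph{without} replacement, one should instead use the hypergeometric tail, but since $(1-n^*_k/n)^{n_s}$ dominates the without-replacement miss probability $\binom{n-n^*_k}{n_s}/\binom{n}{n_s}$ (sampling without replacement only makes it \emph{harder} to keep missing a fixed class), the same inequality still holds and the argument goes through unchanged. I would state explicitly which sampling model is in force (matching the definition of $\mathcal{C}$ in Section 2.3) and invoke the appropriate one of these two bounds. Everything else is the elementary manipulation of logarithms above, so no further obstacles are expected.
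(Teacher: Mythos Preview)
Your proposal is correct and follows essentially the same route as the paper: the paper also bounds the miss probability of a fixed class by $\binom{n-n^*_k}{n_s}/\binom{n}{n_s} \le (1-n^*_k/n)^{n_s} \le (1-n_{\min}/n)^{n_s}$ under sampling without replacement, applies a union bound over the $K$ classes, and then solves $K(1-n_{\min}/n)^{n_s} \le n^{-1}$ for $n_s$. The only cosmetic difference is that the paper first packages the argument as a lemma for general tolerance $\epsilon$ and then specializes to $\epsilon = n^{-1}$, whereas you go directly; your explicit handling of the with/without-replacement comparison is exactly the inequality the paper invokes.
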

\noindent  The proof of Theorem \ref{subsizens} is given in Appendix C.1.  Thus, the following conclusions are drawn. First, the required subsample size is related to the number of clusters $K$. If merchants show different behavior patterns, a large $n_s$ is recommended. Second, if the dataset is imbalanced, where $n_{\min}/n$ could be small, a relatively large $n_s$ is preferred. Finally, for fixed $\alpha$ and $K$, we derive $n_s=\Omega(\log n)$, which is feasible for large-scale datasets. If we select $n_s=c\log n$, where $c$ is a constant, the computational complexity of SubWSC is $O(n \log^2 n)$.

Based on Theorem \ref{subsizens}, we analyze the theoretical properties of SubWSC. Similar to Section 3.2, we prove the convergence of $\widehat{\mathcal{U}}_K$ first. The detailed discussion is given in Appendix C.2. With respect to the subsampling version of WSC, we define the misclustering event and find a sufficient condition (see Lemma C.4 in Appendix C.3). More specifically, for each merchant $i$, $1 \leq i \leq n$, the clustering assignment is correct if $\|c_i - \mathcal{U}_{K, i\cdot}\| < \{2n_{\max}(1+c_0)\}^{-1/2}$, where $c_0$ is a constant. The clustering error rate $\mathcal{P}_e$ is thus defined as $\mathcal{P}_e = \# \{i: \|c_i - \mathcal{U}_{K, i\cdot}\| \geq \{2n_{\max}(1+c_0)\}^{-1/2} \}/n$. The upper bound of $\mathcal{P}_e$ is proved in the following theorem.
\begin{theorem} \textbf{(Clustering error rate of SubWSC)}
Assume that $n_s$ satisfies the condition in Theorem \ref{subsizens}. Let $\omega^2_1 \geq \cdots \geq \omega^2_K$ denote the $K$ eigenvalues of $\mathcal{L}^*_n (\mathcal{L}^*_n)^{\top}$. Then, under Assumptions 1--3, there exists a constant $c_0$ and a positive integer $n_0$ such that for $n \geq n_0$,
\begin{equation*}
\mathcal{P}_e \leq \frac{c_0 K n_{\max} \log n}{\omega^2_K n^2 }
\end{equation*}
holds with a probability of at least $(1-n^{-1})^2 \{ 1-n^{-1} \exp (n^{-2}) \}$.
\label{clustererrorrate_subwsc}
\end{theorem}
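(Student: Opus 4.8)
\noindent\emph{Proof plan.} The plan is to transcribe the argument behind Theorem \ref{clustererrorrate}, replacing $L^*_n$, $\lambda_K$ and $V_n$ by their subsampling analogues $\mathcal{L}^*_n$, $\omega_K$ and $\mathcal{U}_K$, together with one extra conditioning step that guarantees the $K$ leading singular vectors of $\mathcal{L}^*_n$ actually exist. First I would condition on the event $e^*$ of Theorem \ref{subsizens}: since $n_s \geq \alpha(\log n + \log K)$, $e^*$ holds with probability at least $1-n^{-1}$, and on $e^*$ every latent distribution $F^*_k$ has at least one relevant merchant in the subsample, so $\rank(\mathcal{L}^*_n) = K$ and $\mathcal{L}^*_n (\mathcal{L}^*_n)^{\top}$ has exactly $K$ nonzero eigenvalues $\omega^2_1 \geq \cdots \geq \omega^2_K > 0$ with $\omega_{K+1} = 0$. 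Hence $\mathcal{U}_K$, the stacked $K$ leading left singular vectors of $\mathcal{L}^*_n$, is well defined, and by the SubWSC analogue of Lemma \ref{eigenstructure} (Appendix C.2) it factors as $\mathcal{U}_K = \mu_1 Z_n \mu_0$ for a diagonal $\mu_1 \in \mR^{n\times n}$ and $\mu_0 \in \mR^{K\times K}$, so it has exactly $K$ distinct, mutually separated rows, one per cluster, and those rows encode the correct partition.

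Conditionally on $e^*$, I would then invoke the convergence bound of Appendix C.2: there is an orthogonal $\Sigma \in \mR^{K\times K}$ and a constant $n_0$ such that for $n \geq n_0$,
\[
\|\widehat{\mathcal{U}}_K \Sigma - \mathcal{U}_K\|_F \;\leq\; \frac{c_0 \sqrt{K \log n}}{\omega_K \sqrt{n}},
\]
with probability at least $(1-n^{-1})\{1 - n^{-1}\exp(n^{-2})\}$. This is obtained exactly as Theorem \ref{conv_sc}: bound $\|\mathcal{L}_n - \mathcal{L}^*_n\|$, equivalently $\|\mathcal{L}_n \mathcal{L}_n^{\top} - \mathcal{L}^*_n (\mathcal{L}^*_n)^{\top}\|$, by a concentration inequality combining (a) the concentration of $S_n$ around $S^*_n$ from the $O(v_i^{-1/2}) = O((\log n)^{-1/2})$ rate of the empirical Wasserstein distance under Assumption 2, and (b) the concentration of the degrees $d_i$ around $d^*_i = \Omega(n)$ under Assumption 3; then apply a Davis--Kahan $\sin\Theta$ bound to the $K$-dimensional leading left singular subspace, using the eigengap $\omega_K^2 - \omega_{K+1}^2 = \omega_K^2$ available on $e^*$.

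Finally I would run the standard K-means perturbation argument. By Lemma C.4 (Appendix C.3), merchant $i$ is clustered correctly whenever $\|c_i - \mathcal{U}_{K,i\cdot}\| < \delta$ with $\delta = \{2 n_{\max}(1+c_0)\}^{-1/2}$, so $n\,\mathcal{P}_e = \#\{i : \|c_i - \mathcal{U}_{K,i\cdot}\| \geq \delta\}$. Because $\mathcal{U}_K$ has only $K$ distinct, well-separated rows, a counting bound gives $\#\{i : \|c_i - \mathcal{U}_{K,i\cdot}\| \geq \delta\} \leq c\,\delta^{-2}\|\widehat{\mathcal{U}}_K \Sigma - \mathcal{U}_K\|_F^2$ for an absolute constant $c$; substituting $\delta^{-2} = 2 n_{\max}(1+c_0)$ and the Frobenius bound above and renaming the constant yields $\mathcal{P}_e \leq c_0 K n_{\max} \log n / (\omega_K^2 n^2)$. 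A union bound over $e^*$ and the two events from the convergence step gives the stated probability $(1-n^{-1})^2 \{1 - n^{-1}\exp(n^{-2})\}$.

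The hard part will be the convergence bound of the second step: $\mathcal{L}_n$ is rectangular and $\widehat{\mathcal{U}}_K$ is computed indirectly via $\widehat{\mathcal{U}}_K = \mathcal{L}_n \widehat{\mathcal{V}}_K \Sigma_K^{-1/2}$, so one must track how the perturbation of $\mathcal{L}^{\top}_n \mathcal{L}_n$ (an $n_s \times n_s$ matrix) propagates to the $n$-dimensional left singular subspace, and in particular show that the relevant eigengap stays of order $\omega_K^2$ rather than degrading with the subsample size. Controlling the degree normalization $D_n^{-1/2}$ jointly with the rectangular structure, and checking that Assumption 3 still delivers $d^*_{\phi(j)} = \Omega(n)$ for the sampled columns, is where the estimates are heaviest; everything else is a transcription of the WSC proof.
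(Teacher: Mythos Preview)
Your proposal is correct and follows essentially the same route as the paper: condition on $e^*$ (Theorem \ref{subsizens}), invoke the convergence of $\widehat{\mathcal{U}}_K$ to $\mathcal{U}_K$ from Appendix C.2, and then run the Markov-type counting argument via Lemma C.4 exactly as in the proof of Theorem \ref{clustererrorrate}, with the intermediate step $\|\mathbb{C}-\mathcal{U}_K\|_F \le 2\|\widehat{\mathcal{U}}_K\Sigma-\mathcal{U}_K\|_F$ supplying your constant $c=4$. The ``hard part'' you flag is handled in the paper not by tracking the perturbation through $\widehat{\mathcal{U}}_K=\mathcal{L}_n\widehat{\mathcal{V}}_K\Sigma_K^{-1/2}$ but by Hermitian dilation: one symmetrizes the rectangular $\mathcal{L}_n$ to $\mathcal{H}(\mathcal{L}_n)\in\mR^{(n+n_s)\times(n+n_s)}$, bounds $\|\mathcal{H}(\mathcal{L}_n)-\mathcal{H}(\mathcal{L}^*_n)\|=\|\mathcal{L}_n-\mathcal{L}^*_n\|$ by matrix Bernstein plus the degree concentration (this is where the $\log(n+n_s)$, later absorbed into $\log n$, and the factor $(1-n^{-1})\{1-n^{-1}\exp(n^{-2})\}$ arise), and then applies the Davis--Kahan variant of Yu et al.\ directly to the left singular subspace with gap $\omega_K$, so no separate control of the $n_s\times n_s$ problem or of $\Sigma_K^{-1/2}$ is needed.
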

\noindent The proof of this theorem is given in Appendix C.4. Several conclusions can be drawn. First, it suggests that the clustering performance of SubWSC is better when $\omega_K$ is larger. If $K$ is fixed, then $\omega_K$ would be larger if the subsample led by $\mathcal{C}$ covers more merchants, especially those with different underlying CDFs. Second, given the eigenvalue $\omega_K$ is bounded away from 0, we obtain $ \mathcal{P}_e=O_p(\log n /n)$. Thus, the resulting partition of SubWSC converges to the correct solution with $n\to \infty$ in probability. Third, the corresponding probability in Theorem 4 is relatively smaller than that in Theorem 2. Although subsampling makes the proposed method feasible for large-scale datasets, the sampling process may lead to more uncertainty during clustering. Thus, in real applications with massive data, it is helpful to find a balance between the computational efficiency and clustering performance, e.g., enlarging the subsample size when computational resources are sufficient.

\section{Numerical study}

\subsection{Simulated transaction data}

In this section, we investigate the performance of the proposed method via simulation studies. To generate simulated transactions, we implement the following steps in each replication of the simulation.

\noindent {\bf Step 1}. For cluster $k$, $1 \leq k \leq K$, the number of within-cluster merchants is set as $n^*_k$. The total number of merchants is $n = \sum_{k=1}^K n^*_k$.

\noindent {\bf Step 2}. For each cluster $k$, $1 \leq k \leq K$, we specify a distribution $F^*_k$ as the ground truth. For any $k \neq k'$, $F^*_k$ is different from $F^*_{k'}$.

\noindent {\bf Step 3}. For each merchant $i$, $1 \leq i \leq n$, define $ v_i=\max\{v_{0i},\log n\}$ as the number of transactions, where $v_{0i}$ is generated using a Poisson distribution $Poisson(\beta)$.

\noindent {\bf Step 4}. For each simulated merchant $i$, $1 \leq i \leq n$, let $\gamma_i$ denote the cluster it belongs to. Given $\gamma_i$, we generate $v_i$ transactions $(m_{i,1},\cdots,m_{i,v_i})$ from $F^*_{\gamma_i}$. Then, we use the absolute values $\{ |m_{i,q}| \}_{q=1}^{v_i}$ as simulated transactions for merchant $i$.

\noindent We consider two examples by specifying the distributions $\{F^*_1,\cdots,F^*_K\}$. Figure \ref{sim} shows the empirical distributions of the generated clusters with continuous and discrete distributions.

\noindent \textbf{Example 1 (Continuous distributions)}: We set $K=3$. The ground truths $F^*_1, F^*_2$, and $F^*_3$ are $N(2, 2^2)$, $Exp(\frac{1}{2})$, and $Gamma(2,1)$, respectively.

\noindent \textbf{Example 2 (Discrete distributions)}: We again set $K=3$. First, we set three basic distributions: $F^*_1$ is set as $N(4, 2^2)$; $F^*_2$ is a mixture of $Exp(\frac{1}{2})$ and $U[10,12]$, with weights of 0.8 and 0.2, respectively; $F^*_3$ is a mixture of $Exp(\frac{1}{2})$ and $U[4, 6]$, with weights 0.3 and 0.7, respectively. Then, for each merchant $i$, $1 \leq i \leq n$, the transaction amount $m_{i,q}$ is replaced by the rounded value $[m_{i,q}]$, $1 \leq q \leq v_i$, where $[\cdot]$ denotes the integer function. 

For each example, we compare the different methods under three settings: (a) $n^*_1=30, n^*_2=50, n^*_3=75$; (b) $n^*_1=60, n^*_2=100, n^*_3=150$; and (c) $n^*_1=120, n^*_2=200, n^*_3=300$. The parameter related to the transaction amount $\beta=20,50,100$. Thus, the average number of total transactions is $\beta(n^*_1+n^*_2+n^*_3)$. In addition, to demonstrate the performance of SubWSC, we perform a series of experiments with $n^*_1=n^*_2=n^*_3=n^*$ and $n^*=200, 500, 1,000$ to compare SubWSC with WSC. The simulations are implemented with the entire data set size ($3n^*$) ranging from 600 to 3,000 with $\beta=200$. Thus, the average number of total transactions ranges from 120,000 to 600,000. The subsample size $n_s$ ranges from $10\%$ to $50\%$ of the entire dataset size.
\begin{figure}[ht!]
\centering
\includegraphics[width=0.7\textwidth]{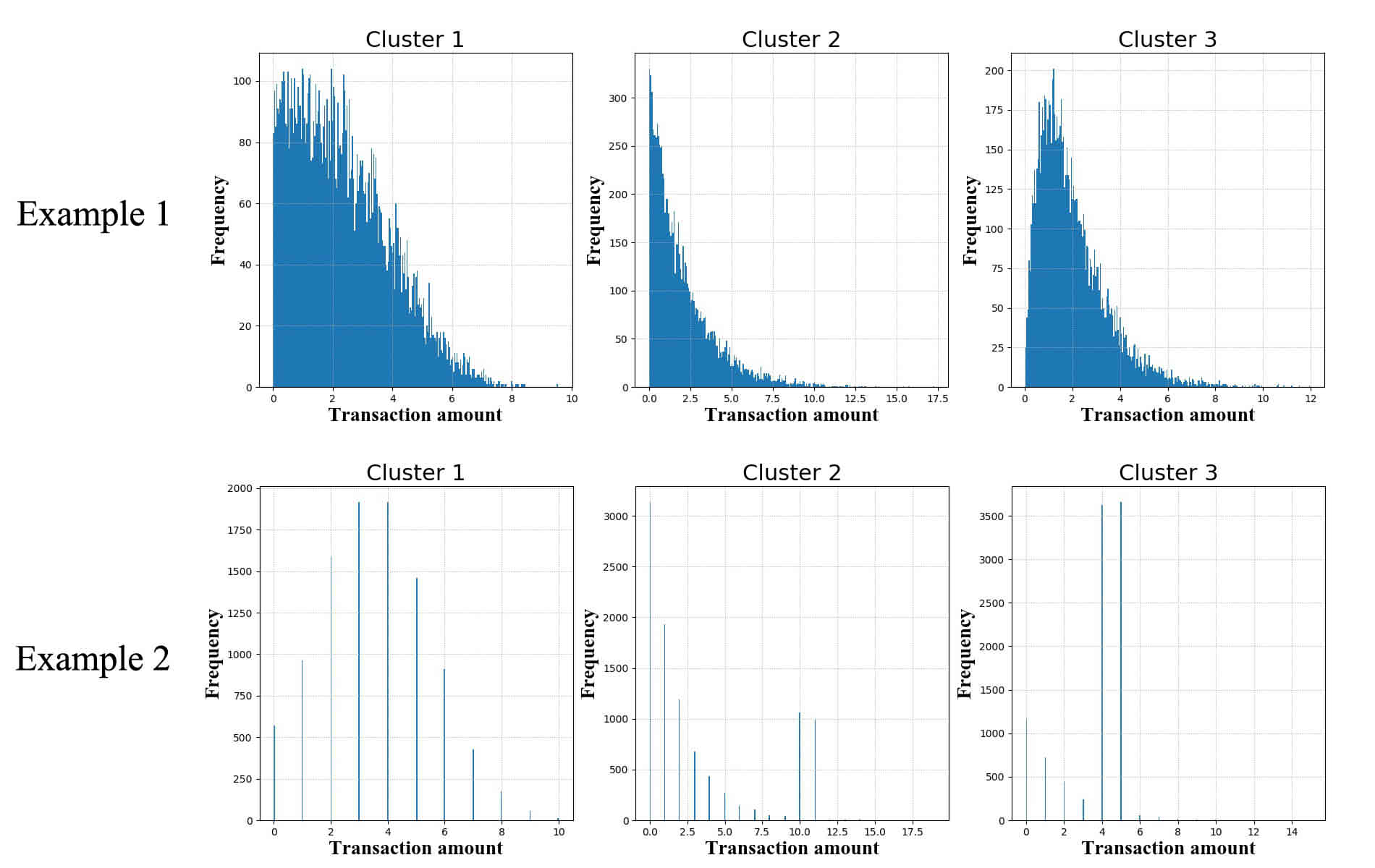}
\caption{Distributions of the generated clusters' transaction amounts in Example 1 (upper row) and Example 2 (bottom row).}
\label{sim}
\end{figure}

\subsection{Comparison and evaluation}

To demonstrate the clustering performance of the proposed method, we compared WSC with the following approaches.

\noindent \textbf{Standard K-means method} \citep{Lloyd1982Least} with transaction-based features: the input features are {\it average amount of transactions} and {\it standard deviation of transactions}.

\noindent \textbf{Hierarchical clustering} (HC)\citep{defays1977efficient} method based on the Wasserstein distance: it uses a $n \times n$ Wasserstein distance matrix with the complete-linkage agglomerative algorithm.

\noindent \textbf{Kolmogorov--Smirnov K-means clustering} (KSKC)\cite{zhu2021clustering}: this is a recently proposed method for clustering transaction data. A distance function based on the Kolmogorov--Smirnov statistic measures the dissimilarity between merchants represented using ECDF. An iterative heuristic algorithm then divides all merchants into $K$ clusters.

To quantitatively evaluate the clustering performance, we adopted the following indices: rand index (RI), cluster accuracy (CA), and normalized mutual information (NMI) \citep{Fahad2014A}. $M=100$ random replications were performed for an effective evaluation. For the $m$-th replication, $1\leq m\leq M$, $\mbox{RI}^{(m)}$ was recorded. Then, $\overline{\mbox{RI}}=M^{-1}\sum_m \mbox{RI}^{(m)}$ was determined. We similarly calculated $\overline{\mbox{CA}}$ and $\overline{\mbox{NMI}}$. All simulations were conducted in Python using a MacBook Pro computer with a 3.1 GHz Intel Core i7 processor.

\subsection{Simulation results}

Numerical analysis was performed to verify the effectiveness of the proposed method using the simulated transaction data. Tables \ref{ex1} and \ref{ex2} show the clustering performances of different methods in Examples 1 and 2, respectively.
\begin{table}[width=.9\linewidth,pos=h]
\caption{Performances of the four clustering methods for Example 1.}
\centering
\begin{tabular*}{\tblwidth}{@{} LLLLLLL@{} }
 \hline
\multicolumn{2}{l}{Setting (a)} & K-means & HC & KSKC & WSC \\
                \hline
  & $\beta$=20
  & 0.554 (0.013) & 0.570 (0.033) & 0.613 (0.026) & \textbf{0.620 (0.030)} \\
RI  & $\beta$=50
  & 0.577 (0.015) & 0.718 (0.068) & 0.731 (0.041) & \textbf{0.787 (0.049)}\\
  & $\beta$=100
  & 0.606 (0.017) & 0.879 (0.051) & 0.860 (0.038) & \textbf{0.928 (0.041)} \\
                \hline
  & $\beta$=20
  & 0.463 (0.040) & 0.526 (0.087) & 0.590 (0.051) & \textbf{0.607 (0.083)} \\
CA  & $\beta$=50
  & 0.504 (0.044) & 0.725 (0.109) & 0.759 (0.049) & \textbf{0.819 (0.079)}\\
  & $\beta$=100
  & 0.544 (0.043) & 0.902 (0.055) & 0.887 (0.035) & \textbf{0.948 (0.036)}\\
                \hline
& $\beta$=20
  & 0.058 (0.025) & 0.159 (0.053) & 0.170 (0.041) & \textbf{0.194 (0.049)} \\
NMI  & $\beta$=50
  & 0.128 (0.033) & 0.397 (0.100) & 0.397 (0.072) & \textbf{0.498 (0.084)}  \\
  & $\beta$=100
  & 0.201 (0.037) & 0.733 (0.085) & 0.686 (0.070) & \textbf{0.801 (0.079)} \\
  \hline
\multicolumn{2}{l}{Setting (b)} & K-means & HC & KSKC & WSC \\
                \hline
  & $\beta$=20
  & 0.557 (0.009) & 0.585 (0.027) & 0.617 (0.030) & \textbf{0.621 (0.033)} \\
RI  & $\beta$=50
  & 0.580 (0.011) & 0.726 (0.061) & 0.731 (0.031) & \textbf{0.805 (0.046)}\\
  & $\beta$=100
  & 0.607 (0.012) & 0.883 (0.047) & 0.887 (0.031) & \textbf{0.936 (0.021)} \\
                \hline
  & $\beta$=20
  & 0.464 (0.036) & 0.534 (0.067) & 0.600 (0.031) & \textbf{0.603 (0.067)} \\
CA  & $\beta$=50
  & 0.511 (0.036) & 0.735 (0.101) & 0.753 (0.031) & \textbf{0.836 (0.055)}\\
  & $\beta$=100
  & 0.542 (0.033) & 0.911 (0.047) & 0.903 (0.025) & \textbf{0.952 (0.018)}\\
                \hline
& $\beta$=20
  & 0.061 (0.015) & 0.161 (0.041) & 0.171 (0.034) & \textbf{0.201 (0.043)} \\
NMI  & $\beta$=50
  & 0.126 (0.023) & 0.410 (0.083) & 0.376 (0.048) & \textbf{0.518 (0.073)}  \\
  & $\beta$=100
  & 0.203 (0.028) & 0.749 (0.064) & 0.718 (0.053) & \textbf{0.811 (0.051)} \\
  \hline
 \multicolumn{2}{l}{Setting (c)} & K-means & HC & KSKC & WSC \\
                \hline
  & $\beta$=20
  & 0.558 (0.007) & 0.587 (0.021) & 0.623 (0.013) & \textbf{0.626 (0.024)} \\
RI  & $\beta$=50
  & 0.578 (0.007) & 0.726 (0.062) & 0.731 (0.018) & \textbf{0.812 (0.030)}\\
  & $\beta$=100
  & 0.607 (0.009) & 0.892 (0.042) & 0.891 (0.017) & \textbf{0.947 (0.051)} \\
                \hline
  & $\beta$=20
  & 0.470 (0.027) & 0.535 (0.052) & 0.613 (0.022) & \textbf{0.618 (0.050)}  \\
CA  & $\beta$=50
  & 0.513 (0.025) & 0.737 (0.081) & 0.761 (0.021) & \textbf{0.841 (0.037)} \\
  & $\beta$=100
  & 0.546 (0.026) & 0.917 (0.032) & 0.912 (0.011) & \textbf{0.957 (0.014)}\\
                \hline
& $\beta$=20
  & 0.060 (0.016) & 0.174 (0.035) & 0.186 (0.024) & \textbf{0.188 (0.039)} \\
NMI  & $\beta$=50
  & 0.127 (0.015) & 0.421 (0.070) & 0.408 (0.033) & \textbf{0.522 (0.048)} \\
  & $\beta$=100
  & 0.201 (0.020) & 0.779 (0.032) & 0.782 (0.031) & \textbf{0.824 (0.034)} \\
  \hline
		\end{tabular*}
\label{ex1}
\end{table}

\begin{table}[width=.9\linewidth,pos=h]
\caption{Performances of the four clustering methods for Example 2.}
\centering
\begin{tabular*}{\tblwidth}{@{} LLLLLLL@{} }
 \hline
\multicolumn{2}{l}{Setting (a)} & K-means & HC & KSKC & WSC \\
                \hline
  & $\beta$=20
  & 0.544 (0.013) & 0.730 (0.017) & 0.741 (0.036) &  \textbf{0.761 (0.023)} \\
RI  & $\beta$=50
  & 0.546 (0.014) & 0.783 (0.029) & 0.817 (0.034) & \textbf{0.929 (0.044)}\\
                \hline
  & $\beta$=20
  & 0.424 (0.044) & 0.686 (0.043) & 0.711 (0.076) & \textbf{0.761 (0.070)} \\
CA  & $\beta$=50
  & 0.422 (0.040) & 0.706 (0.083) & 0.804 (0.073) & \textbf{0.915 (0.063)}\\
                \hline
& $\beta$=20
  & 0.062 (0.029) & 0.507 (0.048) & 0.522 (0.073) & \textbf{0.635 (0.047)} \\
NMI  & $\beta$=50
  & 0.065 (0.030) & 0.693 (0.030) & 0.736 (0.050) & \textbf{0.890 (0.079)}  \\
  \hline
\multicolumn{2}{l}{Setting (b)} & K-means & HC & KSKC & WSC \\
                \hline
  & $\beta$=20
  & 0.541 (0.010) & 0.742 (0.015) & 0.748 (0.033) & \textbf{0.767 (0.018)} \\
RI  & $\beta$=50
  & 0.547 (0.009) & 0.787 (0.026) & 0.828 (0.027) & \textbf{0.955 (0.037)}\\
                \hline
  & $\beta$=20
  & 0.425 (0.034) & 0.696 (0.041) & 0.713 (0.070) & \textbf{0.779 (0.051)} \\
CA  & $\beta$=50
  & 0.427 (0.028) & 0.713 (0.080) & 0.830 (0.046) & \textbf{0.956 (0.041)}\\
                \hline
& $\beta$=20
  & 0.068 (0.020) & 0.527 (0.045) & 0.532 (0.072) & \textbf{0.637 (0.040)} \\
NMI  & $\beta$=50
  & 0.070 (0.021) & 0.703 (0.025) & 0.745 (0.041) & \textbf{0.927 (0.056)}  \\
  \hline
 \multicolumn{2}{l}{Setting (c)} & K-means & HC & KSKC & WSC \\
                \hline
  & $\beta$=20
  & 0.544 (0.008) & 0.764 (0.010) & 0.759 (0.029) & \textbf{0.771 (0.011)} \\
RI  & $\beta$=50
  & 0.547 (0.007) & 0.786 (0.024)  & 0.829 (0.021) & \textbf{0.967 (0.021)} \\
                \hline
  & $\beta$=20
  & 0.424 (0.022) & 0.701 (0.035) & 0.728 (0.067) & \textbf{0.790 (0.045)}  \\
CA  & $\beta$=50
  & 0.426 (0.019)  & 0.710 (0.063) & 0.832 (0.036) & \textbf{0.974 (0.020)} \\
                \hline
& $\beta$=20
  & 0.067 (0.013) & 0.655 (0.025) & 0.656 (0.062)  & \textbf{0.675 (0.024)} \\
NMI  & $\beta$=50
  & 0.071 (0.010) & 0.696 (0.024) & 0.726 (0.032) & \textbf{0.944 (0.030)} \\
  \hline
		\end{tabular*}
\label{ex2}
\end{table}

\noindent \textbf{Comparing clustering performances}. Based on Tables \ref{ex1} and \ref{ex2}, we draw the following conclusions. First, WSC outperforms the other methods in both the continuous and discrete examples. Second, the increase in $\beta$ leads to promising improvements in the proposed method. WSC can archive better clustering performances by collecting more transaction data per merchant. Furthermore, the clustering results in Figures \ref{fex1} and \ref{fex2} show that K-means failed in the two examples, while WSC provided the most similar results to the ground truth.
\begin{figure}[ht!]
\centering
\includegraphics[width=0.7\textwidth]{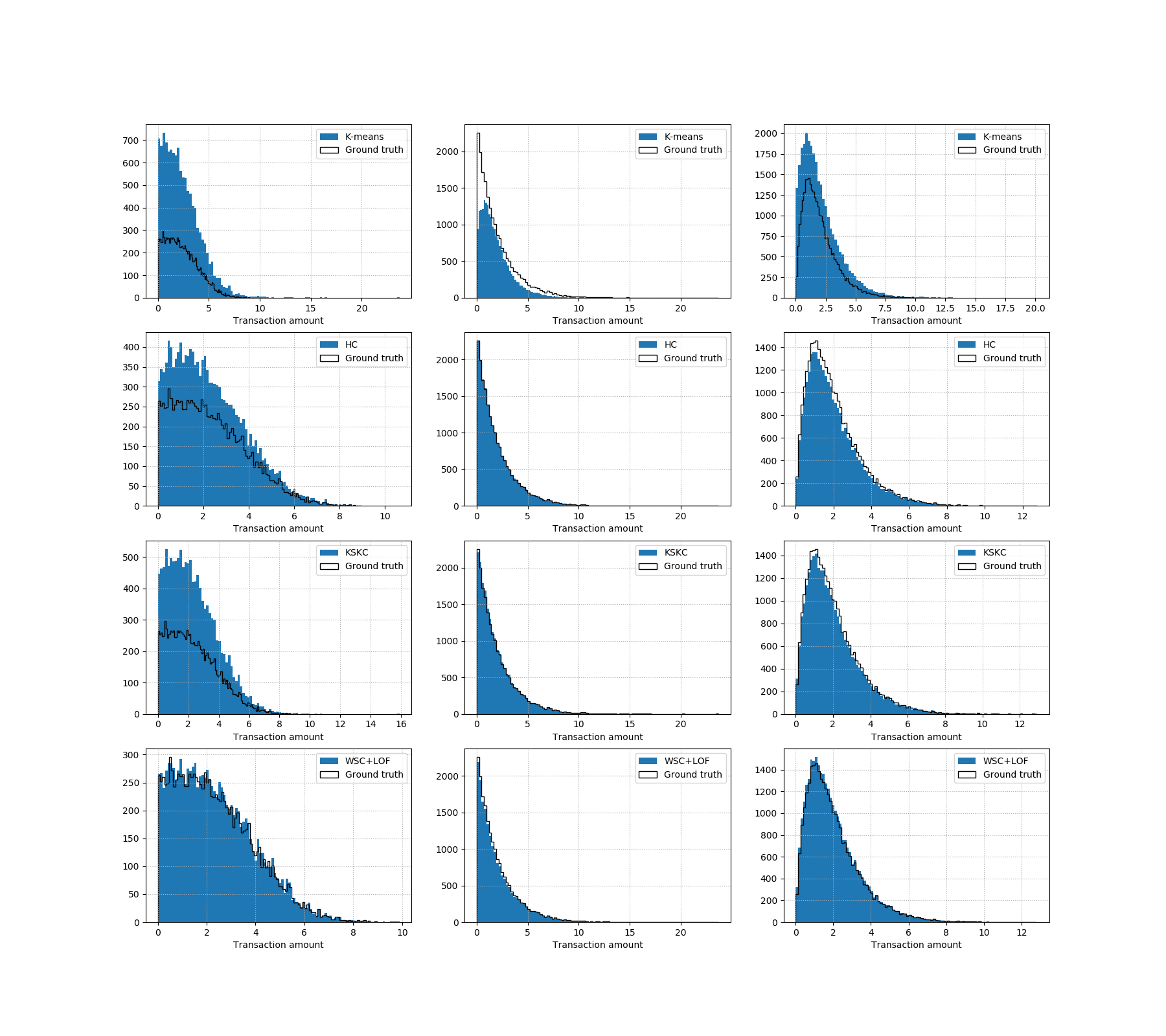}
\caption{Empirical distributions of the transaction amounts of the clusters in Example 1. This visualization is under Setting (c) with $\beta=100$. The black lines denote the shape of the distributions corresponding to the ground truth.}
\label{fex1}
\end{figure}
\begin{figure}[ht!]
\centering
\includegraphics[width=0.7\textwidth]{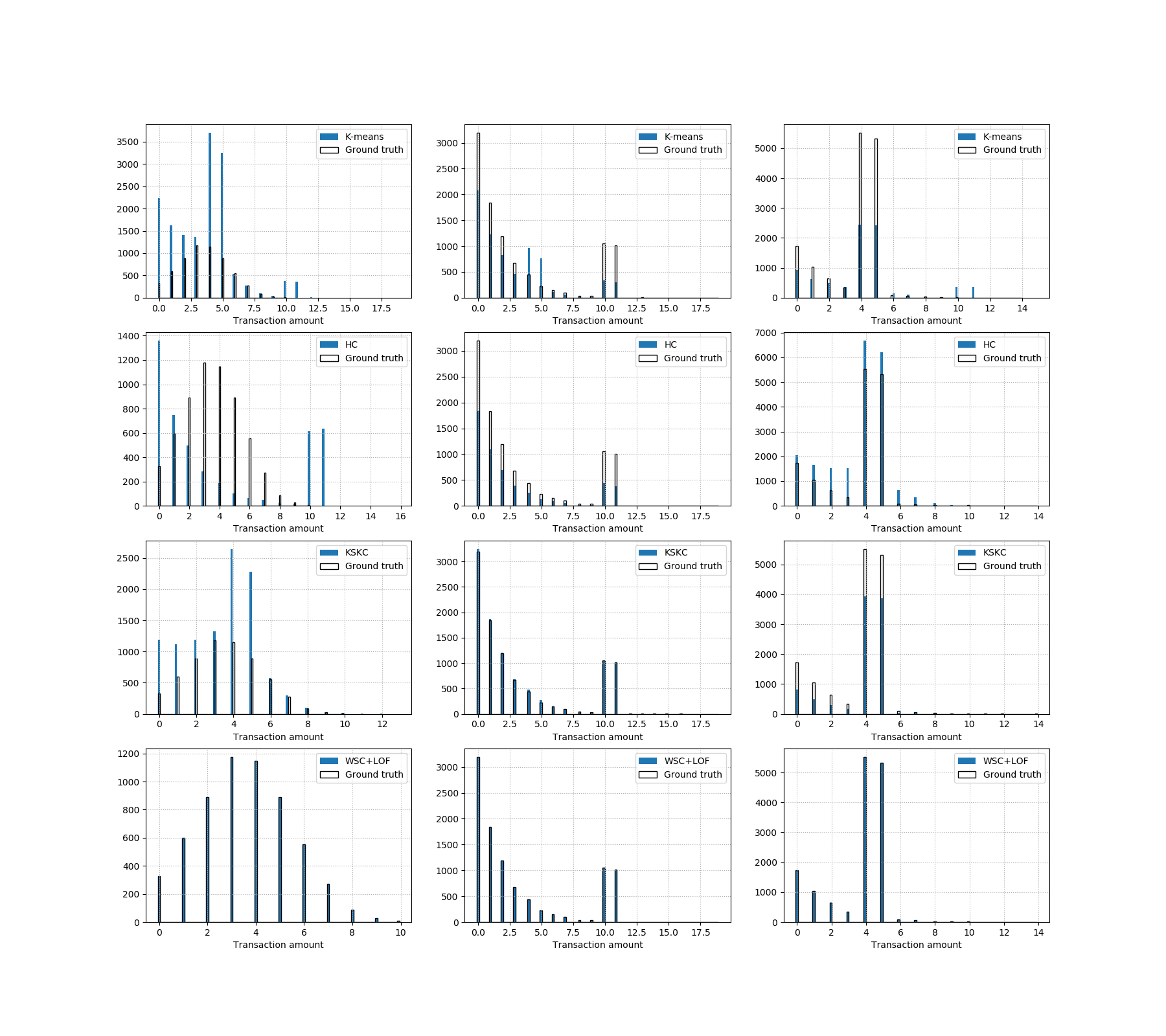}
\caption{Empirical distributions of the transaction amounts of the clusters in Example 2. This visualization is under Setting (c) with $\beta=50$. The black lines denote the shape of the distributions corresponding to the ground truth.}
\label{fex2}
\end{figure}

\noindent \textbf{Performance of SubWSC}.
Figures \ref{subwsc_ex1} and \ref{subwsc_ex2} illustrate the clustering performances and time costs of SubWSC. With increasing subsample size, the performance curves of SubWSC gradually approach those of WSC. SubWSC also achieves comparable performances with small samples. In Example 1 with $n^*=1,000$, when the subsample size was 30 \% of the entire dataset, $\overline{\mbox{RI}}$, $\overline{\mbox{CA}}$, and $\overline{\mbox{NMI}}$ of SubWSC reached 91.7\%, 90.8\%, and 86.5\% of those without subsampling, respectively, while the computational time was 16.2\% of that for WSC. Hence, the subsampling method requires much less computational resources than WSC. Accordingly, SubWSC provides a promising solution for clustering on massive datasets when computational resources are limited.
\begin{figure}[ht!]
\centering
\includegraphics[width=\textwidth]{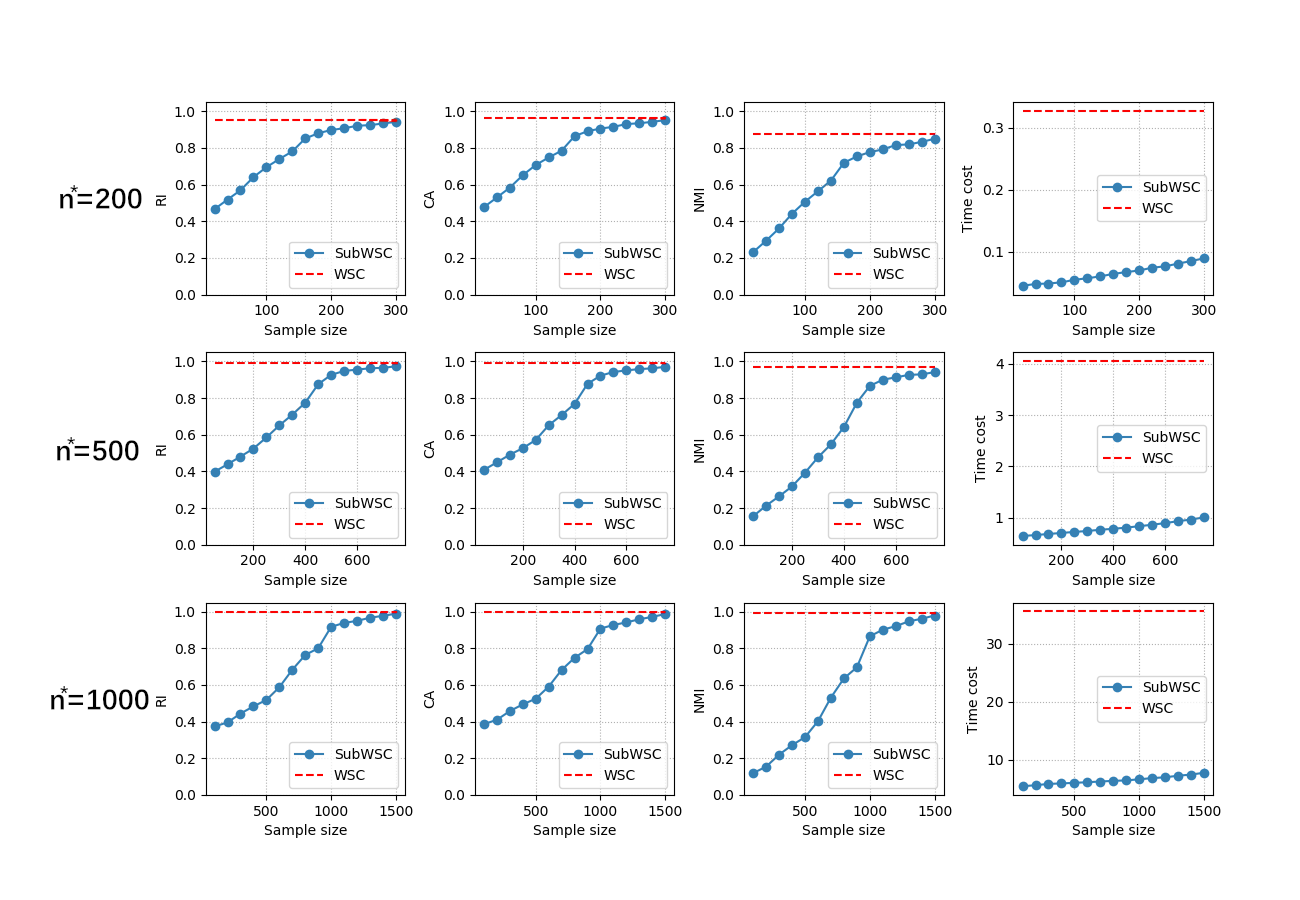}
\caption{Clustering performance of SubWSC in Example 1 with different subsample sizes.}
\label{subwsc_ex1}
\end{figure}
\begin{figure}[ht!]
\centering
\includegraphics[width=\textwidth]{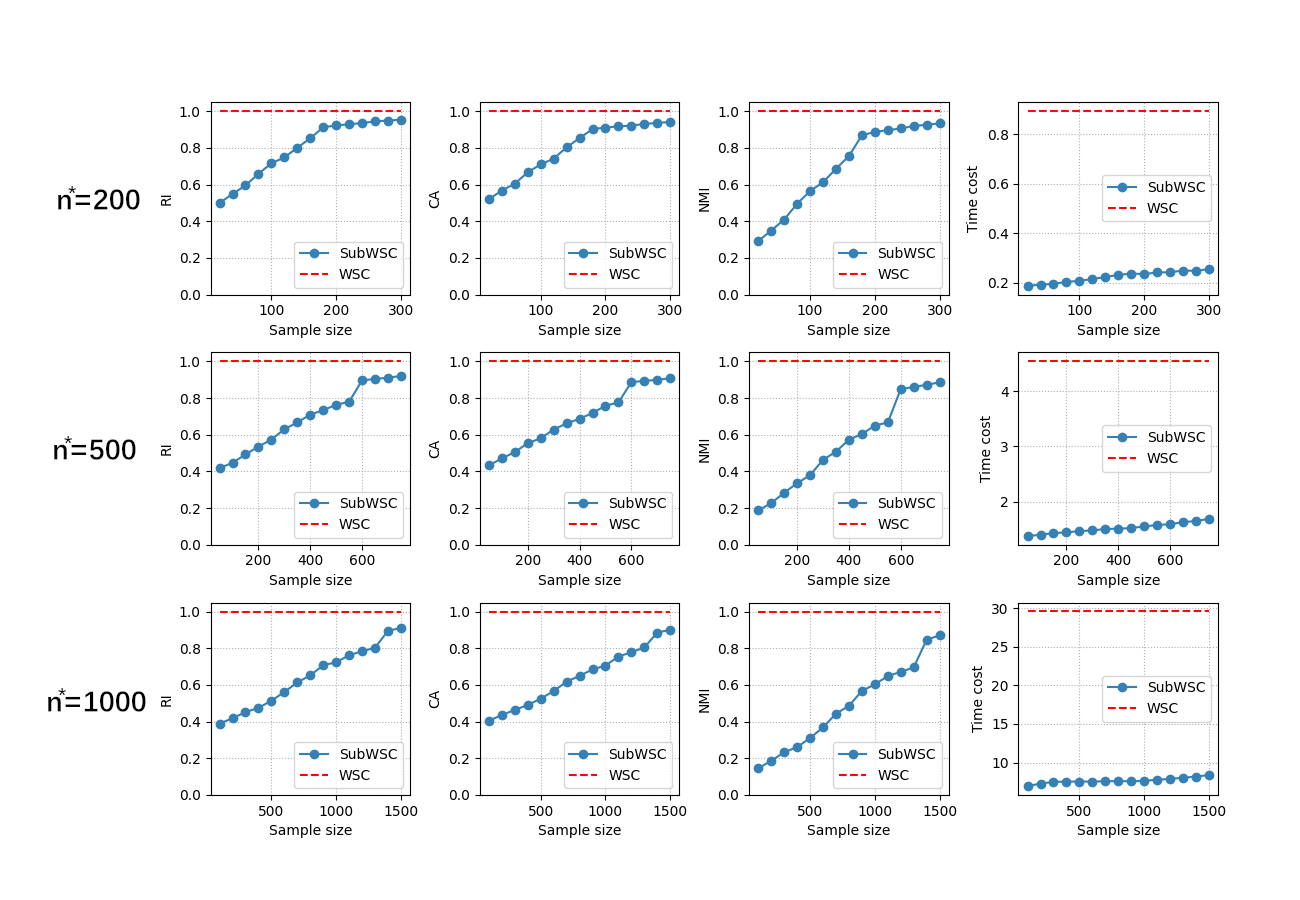}
\caption{Clustering performance of SubWSC in Example 2 with different subsample sizes.}
\label{subwsc_ex2}
\end{figure}

\section{Empirical study}

To demonstrate the advantages of the proposed method in real applications, we conducted an empirical study of merchant clustering using a real dataset. The dataset containing 134,772 transaction records of 1,069 merchants was provided by an anonymous third-party online payment platform\footnote{The platform shared the data on the condition of anonymity.}. For better management of merchants, the payment platform demands a solution to filter out irregular merchants from normal merchants. Such a partitioning allows the platform to develop differentiated marketing strategies and risk control measures.

Irregular merchants can be classified into {\it cash-out merchants}, and {\it speculative merchants}. Cash-out merchants make false transactions to gain cash from credit cards they collect. Their risky behaviors may cause undesirable outcomes, for example, credit card fraud. Speculative merchants do not run regular businesses on the platform but aim at earning the rewards provided by the platform to support normal merchants. The dataset consists of 155 cash-out merchants, 284 speculative merchants, and 630 normal merchants, labeled by the platform's staff. The labels are only used in the evaluation of the clustering performance. The artificial recognition and labeling of merchants is time-consuming and has high labor costs, making it difficult to apply supervised learning methods, such as classification approaches. A data-driven clustering method would be more useful in detecting irregular merchants. Accordingly, the proposed WSC method is suitable for such applications.

\noindent \textbf{Clustering result}. According to the silhouette coefficients, $K$ was set to 3. The clustering results are given in Table \ref{tab:cmclustering}. The result shows a successful clustering of merchants, with the irregular merchants separated from the normal merchants. In addition, we compared the clustering performance of WSC with other methods. The K-means feature-based method failed to distinguish different types of merchants, suggesting limited performances in such applications. In contrast, the proposed method achieved the best performance in all the indices.

\begin{table}[width=.9\linewidth,pos=h]
	\caption{\label{tab:cmclustering} Matching matrix corresponding to the result of WSC (upper panel) and performance of the three clustering methods (lower panel).}
	\centering
		\begin{tabular*}{\tblwidth}{@{} LLLL@{} }
 \hline
 \multicolumn{4}{c}{Matching Matrix}\\
 \hline
			 	 & Cluster 1 & Cluster 2 & Cluster 3 \\
            Speculative merchants & 269 & 0 & 15 \\
            Cash-out merchants & 8 & 120 & 27 \\
            Normal merchants & 8 & 13 & 609 \\
             \hline
             \multicolumn{4}{c}{Performance Comparison}\\
             \hline
            & RI & CA & NMI \\
            WSC & \textbf{0.907} & \textbf{0.931} & \textbf{0.716} \\
            KSKC & 0.859 & 0.901 & 0.637 \\
            K-means & 0.465 & 0.617 & 0.099 \\
            HC & 0.747 & 0.804 & 0.504 \\
            \hline
		\end{tabular*}
\end{table}

\noindent \textbf{Visualization of the clusters}. Figure \ref{real_r} displays the clusters found by the proposed method. The results show that WSC can effectively recognize heterogeneous behavior patterns.
First, as shown in the distribution, cash-out merchants (Cluster 2) frequently make false transactions worth large amounts to obtain cash. Their distributions show a particular pattern, where certain amounts, for example, amounts close to credit card limits, are especially favored. This is consistent with their motivations, namely gaining cash most efficiently. Second, compared to normal merchants (Cluster 3), speculative merchants (Cluster 1) prefer transactions with lower amounts as they aim to earn rewards from the platform and minimize transaction recording costs.
\begin{figure}[ht!]
\centering
\includegraphics[width=0.8\textwidth]{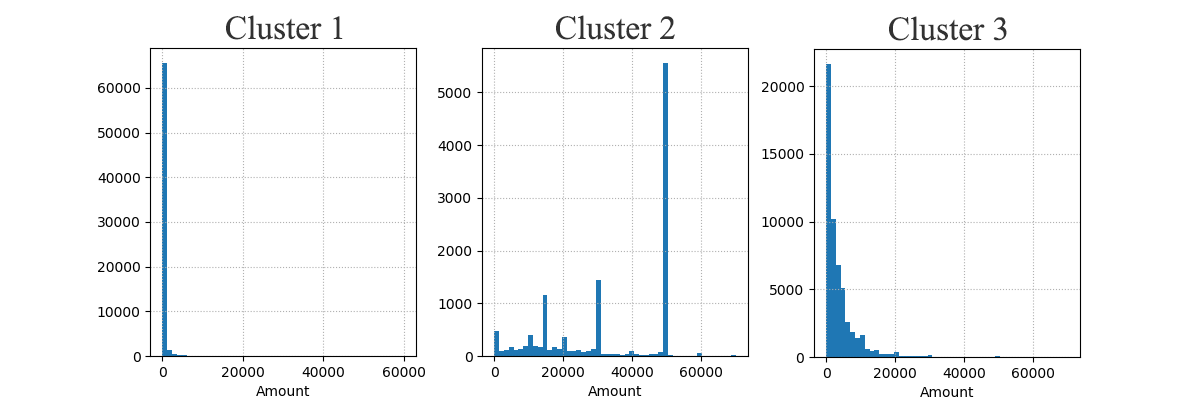}
\caption{Empirical distributions of the clustering results of WSC.}
\label{real_r}
\end{figure}

\noindent \textbf{Performance of SubWSC}. Figure \ref{real_fwsc} displays the clustering performances and time costs of SubWSC for different subsample sizes. For each subsample size, the performance of SubWSC is evaluated through 100 random replications. The horizons in Figure \ref{real_fwsc} denote the results of WSC, which involve the entire dataset. This result shows that SubWSC can achieve similar clustering performance to WSC with reduced time costs. For example, when the subsample size is 300, SubWSC can achieve nearly 85\% clustering performance of WSC in terms of CA at 17\% time cost. Thus, SubWSC is a feasible solution for massive datasets with limited computational resources.
\begin{figure}[ht!]
\centering
\includegraphics[width=0.8\textwidth]{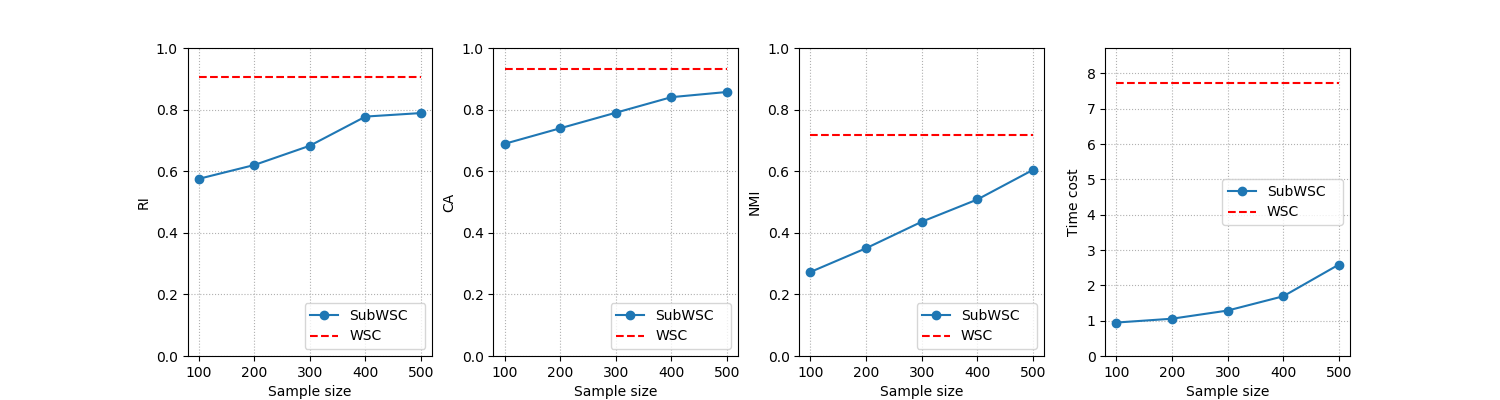}
\caption{Clustering performances and time costs of SubWSC.}
\label{real_fwsc}
\end{figure}

In addition, the proposed method can also be applied to other datasets besides transactions. In essence, it provides a data-driven solution to clustering objects through their empirical distributions. To illustrate its generality, we conducted a clustering analysis of soccer teams\footnote{European Soccer Database in Kaggle. https://www.kaggle.com/hugomathien/soccer} in the Premier League as an example, which can be found in Appendix D.

\section{Concluding remarks}
In this study, we propose a WSC method and its fast approximation, which is called SubWSC, as useful tools for data mining based on empirical distributions. The merchants are characterized using ECDFs of the transaction amount, and the Wasserstein distance is incorporated to measure the dissimilarity between two ECDFs. We then compile a spectral clustering algorithm to divide merchants with different behavior patterns. We further provide SubWSC, which is computationally feasible for large-scale datasets, especially when resources are limited. The simulations and empirical studies demonstrate that the proposed method outperforms other clustering methods, especially feature-based methods. The numerical results also illustrate that the SubWSC method is a cost-effective solution for large datasets with limited computational resources.

In future works, the following topics will be discussed. First, the proposed method will be further extended via integrating more information, for example, transaction time. Such an integration may bring about more improvements in the clustering performance. An alternative solution involves integrating various information using multivariate ECDFs with multidimensional Wasserstein distance. Second, we will consider the generalization of WSC as another interesting future work. Besides transaction data, the proposed method can also be applied in other fields. A generalized framework based on WSC will be considered to efficiently use data distributions.

\section*{Acknowledgements}
This work was supported by the National Natural Science Foundation of China (No. 12071477, 71873137) and building world-class universities (disciplines) of the Renmin University of China.

\renewcommand{\theequation}{A.\arabic{equation}}
\setcounter{equation}{0}
\section*{Appendix A: Auxiliary information}

\subsection*{Appendix A.1: Cover tree structure for the similarity matrix}
Given $n$ merchants, the computation of the similarity matrix $S_n$ requires calculating $2^{-1}n(n-1)$ Wasserstein distances. For accelerating this computation, a data structure called {\it cover tree} \citep{10.1145/1143844.1143857} can be adopted. Figure \ref{ct} displays the structure of the cover tree. Given $n$ merchants, a multi-level tree can be built, where the nodes are associated with merchants. For example, nodes $i’$ and $j’$ in Figure \ref{ct} are associated with merchants $i$ and $j$, respectively. The levels are assigned decreasing orders, e.g., $l=-1,-2,\cdots$. Each level $l$ describes a partition of merchants given $2^l$ as the threshold for the dissimilarity among ECDFs of merchants.
\begin{figure}[ht!]
\centering
\includegraphics[width=0.6\textwidth]{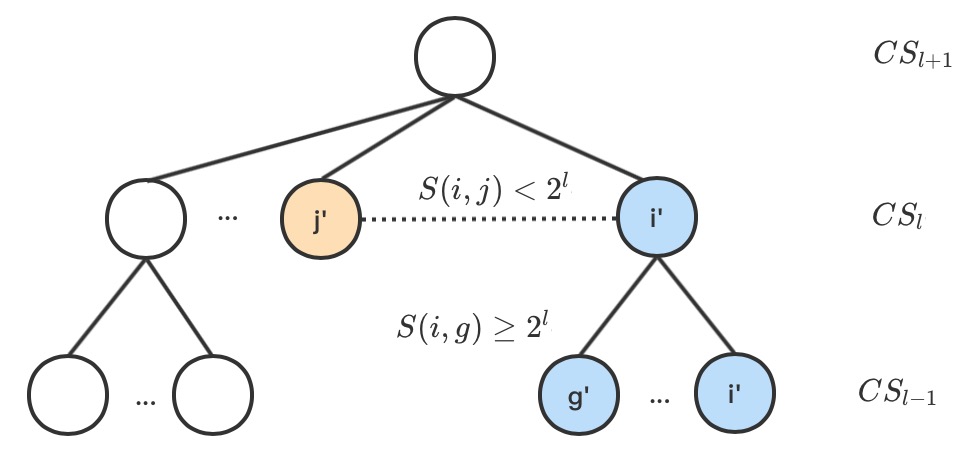}
\caption{Cover tree data structure.}
\label{ct}
\end{figure}

The nodes are used to record the dissimilarities between merchants. A merchant can be associated with multiple nodes at different levels of the tree. Let $CS_l$ denote the set of merchants associated with the level $l$ nodes. The implementation of the cover tree must satisfy the following requirements. For simplicity, we use quotation marks to denote the node corresponding to a merchant.

\noindent \textbf{Nesting}: Once merchant $i$ appears in $CS_l$, then for every lower levels within the tree, there is a node associated with merchant $i$. Thus, we have $CS_l \subset CS_{l-1}$ for every $l$.

\noindent \textbf{Covering}: If merchant $g \in CS_{l-1}$, then there exists a merchant $i \in CS_l$ such that $S(i,g) \geq 2^l$. Correspondingly, node $i’$ is a parent node of node $g’$.

\noindent \textbf{Separation}: For any two nodes $i’$ and $j’$ within the same level $l$, the associated merchants $i,j \in CS_l$ satisfy $S(i,j) < 2^l$.

The cover tree provides a solution to record dissimilarities between merchants on multiple scales. The nearest neighbors of merchant $i$ can be found by investigating the neighbor nodes of $i’$. Next, we discuss the computational complexity when using the cover tree structure. The computational complexity of calculating Wasserstein distance is related to the number of transactions. In real applications, if a merchant has a very large number of transactions, we draw a sample, e.g., randomly selecting 1,000 transactions, to characterize the ECDF with limited cost. Thus, the computational cost of Wasserstein distance calculation can be bounded by a constant. With the computational cost bounded, the construction time cost of a cover tree is $O(n \log n)$ according to \cite{10.1145/1143844.1143857}. Then, based on a constructed cover tree, a nearest neighbor query can be conducted in $O(\log n)$ time. Thus, the computational complexity of the calculation of $S_n$ is $O(n \log n)$. In conclusion, it is useful to reduce the time cost of $S_n$ by applying the cover tree structure.

\subsection*{Appendix A.2: Computational complexity of WSC and SubWSC}
\noindent \textbf{WSC}: According to Algorithm 1, there are two main computational costs of WSC. One is the computation of Wasserstein distance for $2^{-1}n(n-1)$ pairs of merchants. As discussed in Appendix A.1, the computational complexity of this part is at least $O(n^2)$. The computational complexity can be improved to $O(n \log n)$ using the cover tree structure. The other is the computation of eigenvectors using SVD on $L_n$. The computational complexity of this step is $O(n^3)$ because of SVD.

Other steps, such as K-means clustering on $n$ rows of the stacked matrix, are less relevant in calculating computational complexity. For example, given a number of clusters $K$, the time complexity of the K-means step is $O(n)$. Thus, the overall computational complexity of WSC is $O(n^3)$.

\noindent \textbf{SubWSC}: Suppose that the similarity matrix $S_n$ is given. According to the processes of SubWSC, the computational complexity of the most time-consuming part of WSC, i.e., SVD of the Laplacian matrix, is improved to $O(n^3_s)$. Regarding the detailed processing of SubSC, we derive the following conclusions. The computational complexity of subsampling is $O(n)$, while that of eigenvector calculation, including SVD of $\mathcal{L}^{\top}_n \mathcal{L}_n$, is $O(n n^2_s + n^3_s)$. Moreover, the complexity of K-means clustering on $\widehat{\mathcal{U}}_K$ is $O(n)$. Hence, the overall computational complexity is $O(n n^2_s)$.

\subsection*{Appendix A.3: Detailed discussion of $L^*_n$}
In this part, we provide the detailed definition of $L^*_n$, which plays a critical role in the theoretical analysis of WSC. For merchant $i$, $1\leq i \leq n$, let $d_i$ denote $D_{n,ii}=\sum_j S(i,j)$. Let $d^*_i$ denote the expectation of $d_i$, which can be calculated as $d^*_i = \sum_{j \neq i} E\{S(i,j)\}$. According to the definitions, we have $d^*_{\min}=\min \limits_{1\leq i \leq n} d^*_i$. Then, we can define a Laplacian matrix $L^*_n=\left[ L^*_{n, ij} \right] \in \mR^{n \times n}$. The diagonal elements of $L^*_n$ are 1. For $1 \leq i, j \leq n$, the element is $L^*_{n, ij} = (d^*_i d^*_j)^{-\frac{1}{2}}E\{S(i,j)\}$. The matrix form of $L^*_n$ can be expressed as
\begin{equation*}
    L^*_n = (D^*_n)^{-\frac{1}{2}} S^*_n (D^*_n)^{-\frac{1}{2}},
\end{equation*}
where $S^*_n = [E\{S(i,j)\}] \in \mR^{n \times n}$ and $D^*_n \in \mR^{n \times n}$ is a diagonal matrix with $D^*_{n,ii}=d^*_i$.

The definition of $L^*_n$ depends on the expectations of the similarity function among merchants. Thus, we investigate the expectations $E\{S(i,j)\}$, $1 \leq i, j \leq n$. For merchants $i$ and $j$, we have $E\{ W(i,j) \} - W(F_i,F_j) \leq E\{W(F_i, \hat{F}_i)\} + E\{W(F_j, \hat{F}_j)\}$. \cite{Panaretos2019Statistical}, in Section 3.3 of their paper, have reviewed the bounds for the expected empirical Wasserstein distances. Hence, we have $E\{W(F_i,\hat{F}_i)\}$ as $O(v_i^{-\frac{1}{2}})$ for each $1\leq i \leq n$. Furthermore, for any merchants $i$ and $j$, we can find two terms $|E_i| = O(v_i^{-\frac{1}{2}})$ and $|E_j| = O(v_j^{-\frac{1}{2}})$ such that $E\{W(i,j)\} = E_i + E_j + W(F_i,F_j)$. Thus, with respect to $E\{S(i,j)\}$, there exist two terms $g_i$ and $g_j$, $g_i,g_j > 0$, such that
\begin{equation}
  E\{S(i,j)\}=g_i g_j e^{-W(F_i,F_j)}, \label{aecomp}
\end{equation}
where $|\log g_i| = O(v_i^{-\frac{1}{2}})$ and $|\log g_j| = O(v_j^{-\frac{1}{2}})$. Then, we have $\lim \limits_{v_{\min}\to \infty} g_i = 1$ for any $1 \leq i \leq n$. According to (\ref{aecomp}), we derive that
\begin{equation*}
  \lim \limits_{v_{\min}\to \infty} E\{S(i,j)\} = e^{-W(F_i,F_j)}= e^{-W(F^*_{\gamma_i}, F^*_{\gamma_j})}.
\end{equation*}
Based on Assumption 1, we define the difference between distributions $\delta_{k,k'}=W(F^*_k, F^*_{k'})$ for $1 \leq k, k' \leq K$. According to Assumption 2, we have $v_{\min}\to \infty$ as $n \to \infty$, indicating that $\lim \limits_{n \to \infty} g_i g_j=1$. Thus, $ E\{S(i,j)\} \to e^{-W(F_i,F_j)} = e^{-\delta_{\gamma_i,\gamma_j}}$ as $n \to \infty$. For any two merchants $i$ and $j$ satisfying that $\gamma_i = \gamma_j$, we can derive that $\lim \limits_{n \to \infty} E\{S(i,j)\} = \lim \limits_{n \to \infty} e^{-\delta_{\gamma_i,\gamma_j}} = 1$.

\subsection*{Appendix A.4: KNN construction of WSC}
Theorem 1 shows that the clustering result may converge to the correct partition faster if $\lambda_K$, the $K$-th largest eigenvalue of $L^*_n$, is larger. Here, we discuss the influence of KNN construction on the eigenvalues of $L^*_n$. According to \cite{Ger31}, the value of $\lambda_K$ lies in a {\it Ger$\check{s}$gorin disk}, i.e., $\lambda_K \in \{z: |z| \leq R_K(L^*_n)\}$, where $R_K(L^*_n)$ is the radius of the Ger$\check{s}$gorin disk. Suppose that $\lambda_K$ is the $k$-th of the $n$ eigenvalues. Then, the radius is $R_K(L^*_n) = \sum_{j \neq k} [ (d^*_k d^*_j)^{-1/2} E\{S(k,j)\} ]$. Then, regarding the radius, it can be found that
\begin{align*}
\lim \limits_{n \to \infty}  R_K(L^*_n) &= \lim \limits_{n \to \infty} \frac{1}{\sqrt{d^*_k}} \sum_{j\neq k} \frac{E\{S(k,j)\}}{\sqrt{d^*_j}}
  \leq 1.
\end{align*}

Suppose we can conduct the KNN construction with a threshold $k_0$ on the similarity matrix $S^*_n$ (not $S_n$) to yield a reconstructed Laplacian matrix $\mathbb{L}^*_n$. Recall that for each merchant $i$ ($1\leq i \leq n$), there exit neighbor merchants who follow the same distribution in $\{F^*_1,\cdots,F^*_K\}$ as merchant $i$. Let $n^*_k$ denote the number of merchants corresponding to $F^*_k$. Assume that $k_0 \leq \min \limits_{k} n^*_k$. Then, with respect to $\mathbb{L}^*_n$, we have $\lim \limits_{n \to \infty} d^*_i = \sum_{j\in \mathcal{N}(i,k_0)} E\{S(i,j)\}=k_0$ for each merchant $i$. The limit of the radius $R_K(\mathbb{L}^*_n)$ is
\begin{equation*}
\lim \limits_{n \to \infty} R_K (\mathbb{L}^*_n) = \lim \limits_{n \to \infty} \sum_{j \in \mathcal{N}(k,k_0)} \frac{E\{S(k,j)\}}{\sqrt{d^*_k d^*_j}} = \frac{k_0}{k_0} = 1.
\end{equation*}
Thus, based on KNN construction, we conclude that it is more likely to obtain a large $\lambda_K$, yielding a fast convergence. When we apply KNN construction on $S_n$ in WSC, the Laplacian matrix $L_n$ will converge to $\mathbb{L}^*_n$ at a high convergence rate. However, the precise selection of $k_0$ may be difficult in real applications because $\{n^*_k:k=1,\cdots, K\}$ are unknown. It suffices to use default settings, for example, $k_0=10$. In contrast, if $k_0$ is set to be too small, the reconstructed similarity matrix will be too sparse for clustering. Thus, the setting of $k_0$ should be consistent with Assumption 3.

\renewcommand{\theequation}{B.\arabic{equation}}
\renewcommand\thesection{B}

\setcounter{equation}{0}
\setcounter{theorem}{0}
\section*{Appendix B: Theoretical results of WSC}
In this part, we provide technical proofs for the theoretical results of WSC, including Lemma 1 and Theorems 1--2.

\subsection*{B.1 Proof of Lemma 1}
\begin{proof}
Lemma 1 describes the linear relationship between $V_{n}$ and membership matrix $Z_n$. It is proven in two steps. First, we illustrate the relationship between $L^*_n$ and $Z_n$. Second, we investigate the structure of the eigenvectors of $L^*_n$.

\noindent Step 1. To express $L^*_n$ based on the membership matrix $Z_n$, we define intermediate matrices $B_K=[B_{K,kk'}] \in \mR^{K \times K}$ and $G_n=[G_{n,ij}] \in \mR^{n \times n}$. For $1 \leq k \leq K$, $B_{K,kk}=1$, while for $1 \leq k \neq k' \leq K$, $B_{K,kk'}= e^{-\delta_{k,k'}}$. The matrix $G_n$ is diagonal. For $1 \leq i \leq n$, $G_{n,ii}= (d^*_i)^{-\frac{1}{2}} g_i$. Based on the intermediate matrices, we represent the Laplacian matrix as $L^*_n = G_n Z_n B_K Z_n^{\top} G_n $.

\noindent Step 2. To analyze the structure of the eigenvectors of $L^*_n$, we investigate $(L^*_n)^{\top}L^*_n$ from two aspects. On the one hand, based on the finding in Step 1, $L^*_n$ can be expressed as $L^*_n = G_n Z_nB_K Z_n^{\top}G_n$; we have
\begin{align*}
  (L^*_n)^{\top}L^*_n &= G_nZ_nB_KZ_n^{\top}G^2_nZ_nB_KZ_n^{\top}G_n \\
  &= G_n Z_n (Z_n^{\top} G^2_n Z_n)^{-\frac{1}{2}} \widetilde{B}^{\top}_K \widetilde{B}_K (Z_n^{\top} G^2_n Z_n)^{-\frac{1}{2}} Z^{\top}_n G^{\top}_n,
\end{align*}
where $\widetilde{B}_K=(Z_n^{\top} G^2_n Z_n)^{\frac{1}{2}} B_K (Z_n^{\top} G^2_n Z_n)^{\frac{1}{2}}$. Let $\Sigma_{\widetilde{B}_K}$ and $V_{\widetilde{B}_K}$ denote the matrix comprising the eigenvalues of $\widetilde{B}_K$ and corresponding eigenvectors, respectively. We obtain $\Sigma_{\widetilde{B}_K}$ and $V_{\widetilde{B}_K}$ through SVD as $\widetilde{B}_K=V_{\widetilde{B}_K}\Sigma_{\widetilde{B}_K} V_{\widetilde{B}_K}^{\top}$. Then, we have
\begin{equation}
    (L^*_n)^{\top}L^*_n =
    G_n Z_n (Z_n^{\top} G^2_n Z_n)^{-\frac{1}{2}} V_{\widetilde{B}_K} (\Sigma_{\widetilde{B}_K})^2 V_{\widetilde{B}_K}^{\top} (Z_n^{\top} G^2_n Z_n)^{-\frac{1}{2}} Z^{\top}_n G^{\top}_n.
\label{SVDBT}
\end{equation}

On the other hand, for $L^*_n$, the matrices of eigenvalues $\Sigma_{L^*_n}$ and corresponding eigenvectors $V_{n}$ can also be determined through SVD, namely $L^*_n = V_{n} \Sigma_{L^*_n} V^{\top}_{n}$. Thus,
\begin{equation}
    (L^*_n)^{\top}L^*_n = V_{n} (\Sigma_{L^*_n})^2 V^{\top}_{n}.
\label{SVDLNSTAR}
\end{equation}

According to (\ref{SVDBT}) and (\ref{SVDLNSTAR}), we conclude that $V_{n} = G_n Z_n (Z_n^{\top} G^2_n Z_n)^{-\frac{1}{2}} V_{\widetilde{B}_K} $, which demonstrates the linear relationship between $V_{n}$ and $Z_n$. Because $\det\{(Z_n^{\top} G^2_n Z_n)^{-\frac{1}{2}}\}>0$, $\det(V_{\widetilde{B}_K}) > 0$, and $\det(G_n) > 0$, we have $Z_{n,i\cdot}=Z_{n,j\cdot}$ if and only if $V_{n,i\cdot}=V_{n,j\cdot}$ for any $1 \leq i, j \leq n$.
\end{proof}

\subsection*{B.2 Proof of Theorem 1}
\begin{proof}
Here, we prove the convergence of the stacked eigenvectors $\widehat{V}_{n}$ in two steps. First, because the eigenvectors are derived from the Laplacian matrices, we discuss the difference between $L_n$ and $L^*_n$. Second, we provide an upper bound of the difference between $\widehat{V}_{n}$ and $V_{n}$ using that between $L_n$ and $L^*_n$.

\noindent Step 1. First, we investigate the upper bound of the difference $\|L_n-L^*_n\|$. To achieve this, we define an intermediate matrix $\widetilde{L}_n = \left[ \widetilde{L}_{n, ij} \right] \in \mR^{n \times n}$. The diagonal elements of $\widetilde{L}_n$ are 1. For $1 \leq i,j \leq n$, $\widetilde{L}_{n, ij} = (d^*_i d^*_j)^{-\frac{1}{2}}S(i,j)$. Then, we have
\begin{equation}
    \| L_n - L^*_n \| \leq \| \widetilde{L}_n - L^*_n \| + \| L_n - \widetilde{L}_n\|.
\label{twopart}
\end{equation}

Let $Q$ and $Q'$ denote $\widetilde{L}_n - L^*_n $ and $L_n - \widetilde{L}_n$, respectively. In the following sub-steps, we separately discuss the upper bounds of $Q$ and $Q'$.

\noindent Step 1.1. We investigate the bound of the term $Q$. For $1 \leq i \leq n$, the diagonal element $Q_{ii}=0$. For $1\leq i \neq j \leq n$, $Q_{ij} = (d^*_i d^*_j)^{-\frac{1}{2}}[E\{S(i,j)\}-S(i,j)]$. For each pair of merchants $i$ and $j$, we can define a matrix $Y^{ij} \in \mathbb{R}^{n \times n}$ such that $Y^{ij}_{i'j'}=1$ for $i=i',j=j'$ or $i=j',j=i'$, and $Y^{ij}_{i'j'}=0$ otherwise. Based on $Y^{ij}$, we can define a matrix
\begin{equation*}
    X_{ij} = \frac{E\{S(i,j)\}-S(i,j)}{\sqrt{d^*_i d^*_j}} Y^{ij}.
\end{equation*}
For each matrix $X_{ij}$, we have $E(X_{ij})=0$. Let $Q_i$ denote $\sum_{j=i+1}^{n} X_{ij}$. Then, we can express $Q$ as $Q=\sum_{i=1}^{n-1} Q_i$. To study $Q$, we have to carefully investigate each of the matrices $\{Q_i:i=1,\cdots, n-1\}$.

For $Q_i (1\leq i \leq n-1)$, we first explain that the corresponding matrices $\{X_{ij}:j=i+1,\cdots,n\}$ are independent, given $\hat{F}_i$. It is noteworthy that the only random term within $X_{ij}$ is $S(i,j)=\exp \{-W(i,j)\}$. Thus, to illustrate the independence, it suffices to prove that given $\hat{F}_i$, $\{W(i,j): j=i+1,\cdots,n-1\}$ are independent. According to the formula of the distance function $W(i,j)$, for any $i+1 \leq j \leq n-1$, the value of $W(i,j)$ only depends on $\hat{F}_j$ with fixed $\hat{F}_i$. Thus, $\{X_{ij}:j=i+1,\cdots,n\}$ is a series of independent matrices, given $\hat{F}_i$. Then, we can apply Theorem 1.4 of \cite{2012User} (matrix Bernstein) to study the upper bound of $\|Q_i\|$, given $\hat{F}_i$. According to this theorem, the bound of $Q_i$ can be determined using those of $\|X_{ij}-E(X_{ij})\|$ and $\|\sum_{i,j} E(X^2_{j}) \|$. For $X_{ij}$, the norm can be bounded as $\| X_{ij} \| \leq (d^*_i d^*_j)^{-\frac{1}{2}} \leq (d^*_{\min})^{-1}$. Then, we investigate the expectation of $X^2_{ij}$. For $1 < i \neq j \leq n$, we have
\begin{equation*}
    E\left(X^2_{ij}\right) = \frac{\var\{S(i,j)\}}{d^*_i d^*_j} (Y^{ii}+Y^{jj}) =\frac{E\left\{e^{-2W(i,j)}\right\} - \left[E \left\{e^{-W(i,j)}\right\}\right]^2}{d^*_i d^*_j} (Y^{ii}+Y^{jj}).
\end{equation*}

To find a bound for $X^2_{ij}$, we expand the term $e^{-2W(i,j)}$. Based on Tylor expansion, we have
\begin{equation*}
    e^{-W(i,j)} = \sum_{k=0}^{\infty} \frac{\{-W(i,j)\}^k}{k!}.
\end{equation*}
Thus, we have $e^{-W(i,j)}=1-W(i,j)+W^2(i,j)/2-o\left \{W^3(i,j)\right\}$. This result indicates that $E\left[e^{-2W(i,j)}\right] \leq 1-2E[W(i,j)]+2E[W^2(i,j)]$ as well as $[E\left\{e^{-W(i,j)}\right\}]^2 \geq [1-E\{W(i,j)\}]^2$. On the other hand, we have $\var \{W(i,j)\}\leq \var \{W(\hat{F}_i,F_i)\}+\var \{W(\hat{F}_j,F_j)\}$.
From Theorem 3.2 of \cite{2019One}, $E\{W^2(\hat{F}_i,F_i)\} \leq v_i^{-1} \{J(F_i)\}^2$, where $J(F_i)=\int_0^1 \sqrt{F_i(m)\{1-F_i(m)\}}dm \leq 2^{-1}$. Thus, $E\{W^2(\hat{F}_i,F_i)\} \leq (4v_i)^{-1}$. Because $\var \{W(\hat{F}_i,F_i)\} \leq E \{W^2(\hat{F}_i,F_i)\} \leq (4v_i)^{-1}$, we have $\var \{W(i,j)\} \leq (2v_{\min})^{-1}$ for $1\leq i, j \leq n$. Then, $E\left(X^2_{ij}\right)$ can be bounded as
\begin{align*}
&E\left(X^2_{ij}\right) \leq \frac{ 1-2E\{W(i,j)\}+2E\{W^2(i,j)\} - [1-E\{W(i,j)\}]^2}{d^*_i d^*_j} (Y^{ii}+Y^{jj}) \\
&= \frac{ E\{W^2(i,j)\} + \var \{W(i,j)\}}{d^*_i d^*_j} (Y^{ii}+Y^{jj}) \leq \frac{1+(2v_{\min})^{-1}}{d^*_i d^*_j}(Y^{ii}+Y^{jj}),
\end{align*}
where the first inequality is led by the Tylor expansion of $e^{-W(i,j)}$ and the second inequality is based on the bounds of $W(i,j)$ and $\var \{W(i,j)\}$. Let $v$ denote $\|\sum_{i,j} E(X^2_{ij}) \|$. Then, we have
\begin{align*}
    v &= \left\| \sum_{j=i+1}^{n} E(X^2_{ij}) \right\|
     \leq \left\| \sum_{j=i+1}^{n} \frac{1+(2v_{\min})^{-1}}{d^*_i d^*_j}(Y^{ii}+Y^{jj})  \right\| \\
     &= \left\| \frac{2v_{\min}+1}{2v_{\min} d^*_i}  \left\{ \left( \sum_{j=i+1}^{n} \frac{1}{d^*_j} \right)Y^{ii} + \sum_{j=i+1}^{n} \frac{1}{d^*_j}Y^{jj} \right\}  \right\| \\
     &= \frac{2v_{\min}+1}{2v_{\min}d^*_i}\sum_{j=i+1}^{n} \frac{1}{d^*_j} \leq \frac{(n-1)(2v_{\min}+1)}{2v_{\min} (d^*_{\min})^2 },
\end{align*}
where the first inequality is based on the bound of $E\left(X^2_{ij}\right)$ and the second inequality is because of $(d^*_i)^{-1} \leq (d^*_{\min})^{-1}$ for any $1\leq i \leq n$. According to Theorem 1.4 of \cite{2012User} (Matrix Bernstein), for $a>0$,
\begin{align*}
    P(\| Q_i \| > a | \hat{F}_i ) &= P\left(\left\| \sum_{j=i+1}^n X_{ij} \right\| > a\right) \leq n \exp \left\{- \frac{a^2}{2v +2a/(3d^*_{\min})}\right\} \\
    & \leq n \exp \left[- \frac{a^2}{ (n-1)(2v_{\min}+1)/ \left \{v_{\min} (d^*_{\min})^2 \right\} + 2a/(3d^*_{\min})}\right],
\end{align*}
where the first inequality is an application of the matrix Bernstein inequality of Theorem 1.4 of \cite{2012User} and the second inequality is because of the upper bound of $v$. According to Assumption 3, for $a=(d^*_{\min}\sqrt{n-1})^{-1} \sqrt{ 9\log n }$, there exists a positive integer $n_0$ such that $a \leq 1$ for $n \geq n_0$. Thus, $\|Q_i\|$ can be bounded as
\begin{align}
    P(\|Q_i \| > a | \hat{F}_i) &\leq n \exp \left\{- \frac{ 9\log n}{ 2 + (v_{\min})^{-1} + 2a/3 \times d^*_{\min}/(n-1)} \cdot \frac{1}{(n-1)^2} \right\} \nonumber \\
    & \leq n \exp \left\{- \frac{ 9\log n }{ 2 + 1/3 + 2/3}\cdot \frac{1}{(n-1)^2}   \right\} = n \exp \left\{ -\log (n^3) \cdot \frac{1}{(n-1)^2} \right\} \nonumber  \\
    &= \frac{1}{n^2} \exp \left\{\frac{1}{(n-1)^2}\right\}, \label{condqi}
\end{align}
where the first inequality is derived according to the setting of $a$ and the second inequality is because of $a \leq 1$, $d^*_{\min}/(n-1) \leq 1$, and $v_{\min}=\Omega(\log n)$. It is noteworthy that the right side of (\ref{condqi}) does not depend on $\hat{F}_i$. The inequality (\ref{condqi}) holds for any possible values of $\hat{F}_i$. Because of the arbitrariness of $\hat{F}_i$, we conclude that
\begin{equation*}
  P(\|Q_i \| > a) \leq \frac{1}{n^2} \exp \left\{\frac{1}{(n-1)^2}\right\}.
\end{equation*}

Then, let $e_{i,a}$ denote the event that $\|Q_i\|>a$. Then, we have
\begin{equation*}
  P( \cup_{i=1}^{n-1} e_{i,a}) \leq \sum_{i=1}^{n-1} P(e_{i,a}) = \sum_{i=1}^{n-1} P(\|Q_i \| > a) \\
   \leq \frac{1}{n}\exp \left\{\frac{1}{(n-1)^2}\right\}.
\end{equation*}

Because $\|Q\| \leq \sum_{i=1}^{n-1} \|Q_i\|$, we can derive that $\|Q\| \leq (n-1)a$ if $\|Q_i\|\leq a$ holds for any $1\leq i \leq n-1$. Thus,
\begin{align}
  P\{\|Q\| \leq (n-1)a\} &\geq P( \|Q_1\| \leq a, \cdots, \|Q_{n-1}\| \leq a ) \nonumber \\
  &= P( \cap_{i}^{n-1} e_{i,a}^c ) = P\{ (\cup_{i=1}^{n-1} e_{i,a})^c \}  \geq 1 - \frac{1}{n} \exp \left\{\frac{1}{(n-1)^2}\right\} \label{addpq1}
\end{align}

\noindent Step 1.2. For the term $Q'$, we have
\begin{equation*}
    \| L_n - \widetilde{L}_n \| = \left\| \left\{ (D^*_n)^{-\frac{1}{2}} D^{\frac{1}{2}}_n - I_n\right\} L_n D^{\frac{1}{2}}_n (D^*_n)^{-\frac{1}{2}} - L_n \left\{I_n - D^{\frac{1}{2}}_n (D^*_n)^{-\frac{1}{2}}\right\} \right\|.
\end{equation*}
Because of the property of the normalized Laplacian matrix, namely $\| L_n \| \leq 1$, we have $\| L_n - \widetilde{L}_n \| \leq \| (D^*_n)^{-\frac{1}{2}} D^{\frac{1}{2}}_n - I_n\| \|D^{\frac{1}{2}}_n (D^*_n)^{-\frac{1}{2}}\| + \| I_n - D^{\frac{1}{2}}_n (D^*_n)^{-\frac{1}{2}} \|$. For the terms on the right side, we have
\begin{equation}
    \left\| (D^*_n)^{-\frac{1}{2}} D^{\frac{1}{2}}_n - I_n \right\| = \max \limits_{i=1,\cdots,n} \left|\sqrt{\frac{d_i}{d^*_i}}-1 \right|, \label{ddt1}
\end{equation}
\begin{equation}
    \left\|D^{\frac{1}{2}}_n (D^*_n)^{-\frac{1}{2}}\right\| = \max \limits_{i=1,\cdots,n} \sqrt{\frac{d_i}{d^*_i}}.\label{ddt}
\end{equation}
To discuss the bound of $\sqrt{d_i/d^*_i}$, we incorporate Theorem 2.4 proposed by \cite{chung2006complex}, according to which
\begin{equation*}
    P(|d_i - d^*_i|>b d^*_i) \leq \exp \left(-\frac{b^2 d^*_i}{2}\right) + \exp \left(-\frac{b^2 d^*_i}{2+2b/3}\right) \leq 2 \exp \left(-\frac{b^2 d^*_{\min}}{2+2b/3}\right).
\end{equation*}
Substituting $b=\sqrt{ 3\log ( 2n ) /d^*_{\min}}$, we can verify that there exists a positive integer $n_0$ such that $b \leq 1$ for $n \geq n_0$ according to Assumption 3. Then, the above inequality can be written as
\begin{equation}
    P\left(\left|\frac{d_i}{d^*_i} - 1\right|>b \right) \leq 2 \exp \left\{-\frac{ 3\log ( 2n )}{2+2b/3}\right\} \leq 2 \exp \left\{-\log ( 2n )\right\} = \frac{1}{n}. \label{addpd2}
\end{equation}
Because $\| d_i / d^*_i -1 \| \geq \| \sqrt{d_i / d^*_i} -1\|$, with a probability of at least $1-n^{-1}$, (\ref{ddt1}) and (\ref{ddt}) can be bounded with $b$ and $b+1$. Then, we have $\|Q'\| =\| L_n - \widetilde{L}_n \| \leq b(b+1)+b = b^2 + 2b$ with a probability of at least $1-n^{-1}$.

According to (\ref{twopart}), (\ref{addpq1}), and (\ref{addpd2}), we have
\begin{align}
  \|L_n - L^*_n \| & \leq (n-1)a + b^2 + 2b \leq (n-1)a + 3b \nonumber \\
   & = \frac{3\sqrt{ (n-1)\log n}}{d^*_{\min}} +  \frac{3\sqrt{3\log(2n)}}{\sqrt{d^*_{\min}}}, \label{addabp}
\end{align}
which holds with a probability of at least $(1-n^{-1}) [ 1 - n^{-1} \exp \left\{(n-1)^{-2}\right\} ] $. Because $d^*_{\min} = \Omega( n )$, there exists a constant $c_0$ and a positive integer $n_0$ such that $d^*_{\min} \geq c_0 n$ for $n \geq n_0$. Thus, for $n \geq n_0$, (\ref{addabp}) leads to
\begin{equation}
  \|L_n - L^*_n \| \leq \frac{3 \sqrt{\log n}}{c_0 \sqrt{n}} + \frac{3 \sqrt{3 \log (2n)}}{\sqrt{c_0 n}} \leq  \frac{c^*_0\sqrt{\log n}}{\sqrt{n}}, \label{addlub}
\end{equation}
where $c^*_0 = 3/c_0 + 6/\sqrt{c_0}$ is a constant.

\noindent Step 2. The second step is based on Corollary 3 of \cite{yu2015A}. Given two matrices, this corollary derives a bound for the difference between their eigenvectors using the difference between the two matrices. We can directly apply this result to find the bound of $\| \widehat{V}_{n} - V_{n} \|_F$ based on $\| L_n - L^*_n \|$. According to \cite{yu2015A}, there exists an orthogonal matrix $\Sigma$, for $0 \leq a \leq 1$, such that
\begin{equation*}
    \| \widehat{V}_{n}\Sigma - V_{n} \|_F \leq \frac{2^{3/2}\sqrt{K} \|L_n-L^*_n \|}{\lambda_K} \leq \frac{2^{3/2} c^*_0 \sqrt{K\log n}}{\lambda_K \sqrt{n}}
\end{equation*}
holds with a probability of at least $(1-n^{-1}) [ 1 - n^{-1} \exp \left\{(n-1)^{-2}\right\} ] $.
\end{proof}

\subsection*{B.3 Sufficient Condition for Correct Assignment for WSC}
The following lemma provides a sufficient condition for the correct assignment for WSC.
\begin{lem}
\textbf{(Condition for correct assignment)} There exists a constant $c_0>0$ and a positive integer $n_0>0$ such that given $n \geq n_0$, for each merchant $i$, $1 \leq i \leq n$, the clustering assignment is correct if $\|c_i - V_{n, i\cdot}\| < \{2n_{\max}(1+c_0)\}^{-1/2}$.
\label{conditioncorrectclustering}
\end{lem}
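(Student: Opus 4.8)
The plan is to turn the claim into a lower bound on how far apart the rows of $V_n$ belonging to different clusters must be, and then to read off the constant via the triangle inequality. Recall that merchant $i$ is deemed correctly clustered once $\|c_i - V_{n,i\cdot}\| < \|c_i - V_{n,j\cdot}\|$ for every $j$ with $\gamma_j \neq \gamma_i$. Since $\|c_i - V_{n,j\cdot}\| \ge \|V_{n,i\cdot} - V_{n,j\cdot}\| - \|c_i - V_{n,i\cdot}\|$, a sufficient condition is
\[
\|c_i - V_{n,i\cdot}\| \;<\; \frac{1}{2}\, \min_{j:\, \gamma_j \neq \gamma_i} \|V_{n,i\cdot} - V_{n,j\cdot}\| .
\]
So it is enough to prove that, for $n$ large, $\min_{j:\,\gamma_j\neq\gamma_i}\|V_{n,i\cdot}-V_{n,j\cdot}\| \ge 2\{2n_{\max}(1+c_0)\}^{-1/2}$ for every merchant $i$.

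I would obtain this separation from the structure of $V_n$. By Lemma 1 and its proof we may write $V_n = \mu_1 Z_n \mu_0$ with $\mu_1$ diagonal, $\mu_{1,ii} = g_i (d^*_i)^{-1/2} > 0$, and $\mu_0 \in \mR^{K\times K}$ invertible; hence $V_{n,i\cdot} = \mu_{1,ii}\, u_{\gamma_i}$, where $u_k$ denotes the $k$-th row of $\mu_0$, so within each cluster all rows of $V_n$ point along one direction and $u_1,\dots,u_K$ are pairwise non-parallel. On the other hand the columns of $V_n$ are orthonormal eigenvectors of $L^*_n$, so $I_K = V_n^{\top}V_n = \sum_{i=1}^n V_{n,i\cdot}^{\top}V_{n,i\cdot} = \sum_{k=1}^K h_k\, u_k^{\top}u_k$ with $h_k = \sum_{i:\,\gamma_i=k}\mu_{1,ii}^2$. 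Stacking $\sqrt{h_k}\,u_k$ as the rows of a $K\times K$ matrix $A$, this reads $A^{\top}A = I_K$; hence $A$ is orthogonal, so $AA^{\top}=I_K$ as well, which forces $u_k\perp u_{k'}$ for $k\neq k'$ and $h_k\|u_k\|^2 = 1$. Consequently $V_{n,i\cdot}\perp V_{n,j\cdot}$ whenever $\gamma_j\neq\gamma_i$, so $\|V_{n,i\cdot}-V_{n,j\cdot}\|^2 = \|V_{n,i\cdot}\|^2 + \|V_{n,j\cdot}\|^2$, while
\[
\|V_{n,i\cdot}\|^2 = \frac{\mu_{1,ii}^2}{h_{\gamma_i}} = \Big( \sum_{i':\,\gamma_{i'}=\gamma_i} \frac{g_{i'}^2\, d^*_i}{g_i^2\, d^*_{i'}} \Big)^{-1} .
\]

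It then remains to bound the $n^*_{\gamma_i}$ ratios $g_{i'}^2 d^*_i/(g_i^2 d^*_{i'})$ over same-cluster pairs, and this is where the assumptions enter. By Assumption 2, $|\log g_i| = O(v_i^{-1/2}) = O(v_{\min}^{-1/2}) = O((\log n)^{-1/2})$, so $g_{i'}/g_i \to 1$ uniformly in $i,i'$. Writing $d^*_i = g_i\sum_{j\neq i} g_j e^{-\delta_{\gamma_i,\gamma_j}}$ and using $\delta_{\gamma_i,\gamma_j} = \delta_{\gamma_{i'},\gamma_j}$ whenever $\gamma_i=\gamma_{i'}$, the sums defining $d^*_i$ and $d^*_{i'}$ differ by $O(1)$, whereas each is of order $d^*_{\min}=\Omega(n)$ by Assumption 3; hence $d^*_i/d^*_{i'}\to1$ uniformly over same-cluster pairs. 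Therefore every ratio $g_{i'}^2 d^*_i/(g_i^2 d^*_{i'})\to1$ uniformly, so for the prescribed $c_0$ (any fixed $c_0>0$ works) there is $n_0$ with all such ratios $\le 1+c_0$ once $n\ge n_0$, giving $\|V_{n,i\cdot}\|^2 \ge \{n^*_{\gamma_i}(1+c_0)\}^{-1} \ge \{n_{\max}(1+c_0)\}^{-1}$. Hence $\|V_{n,i\cdot}-V_{n,j\cdot}\| \ge \{2/(n_{\max}(1+c_0))\}^{1/2} = 2\{2n_{\max}(1+c_0)\}^{-1/2}$ for all $j$ with $\gamma_j\neq\gamma_i$, and combined with the first paragraph the hypothesis $\|c_i - V_{n,i\cdot}\| < \{2n_{\max}(1+c_0)\}^{-1/2}$, being exactly half this gap, forces the correct assignment.

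The one genuinely delicate point is the uniformity just used: a single $n_0$ must work simultaneously for all merchants in all clusters, i.e. the ratios $d^*_i/d^*_{i'}$ and $g_i/g_{i'}$ must be within $1+c_0$ across the whole dataset at once. This is precisely what Assumptions 2 and 3 provide---a uniform rate $O((\log n)^{-1/2})$ for $g_i\to1$, and $d^*_i=\Omega(n)$ so that the $O(1)$ perturbations in the definition of $d^*_i$ are swamped---so that all the per-merchant deviations collapse into the single constant $c_0$. The remaining ingredients (the triangle inequality and the orthogonal-matrix identity $A^{\top}A=I_K\Rightarrow AA^{\top}=I_K$) are elementary.
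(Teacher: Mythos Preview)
Your proof is correct and follows the same overall architecture as the paper: reduce the correct-assignment condition to a lower bound on $\|V_{n,i\cdot}-V_{n,j\cdot}\|$ via the triangle inequality, establish that bound as $\sqrt{2/\{n_{\max}(1+c_0)\}}$, and control the ratios $g'_{i'}/g'_i$ for same-cluster pairs using Assumptions~2--3.

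The one genuine difference is how you extract the row-separation formula. The paper plugs in the explicit factorization $V_n = G_n Z_n (Z_n^{\top}G_n^2 Z_n)^{-1/2}V_{\widetilde{B}_K}$ from the proof of Lemma~1 and computes the norm directly, using that $V_{\widetilde{B}_K}$ is orthogonal and $(Z_n^{\top}G_n^2 Z_n)^{-1/2}$ is diagonal. You instead use only the abstract facts that $V_n=\mu_1 Z_n\mu_0$ with $\mu_1$ diagonal and that $V_n^{\top}V_n=I_K$; the orthogonal-matrix trick $A^{\top}A=I_K\Rightarrow AA^{\top}=I_K$ then forces $u_k\perp u_{k'}$ and $h_k\|u_k\|^2=1$, yielding the same expression $\|V_{n,i\cdot}\|^2 = g'_i/\sum_{i':\gamma_{i'}=\gamma_i} g'_{i'}$ without ever unpacking $\mu_0$. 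This is a cleaner route---it shows the separation bound follows from orthonormality alone, not from the particular spectral structure of $\widetilde{B}_K$---and in fact upgrades the paper's stated inequality for $\|V_{n,i\cdot}-V_{n,j\cdot}\|$ to an equality via the orthogonality of cluster-rows. Your explicit justification of uniformity (all ratios $g'_{i'}/g'_i$ controlled by a single $n_0$) is also more careful than the paper's one-line appeal to $\lim_{n\to\infty} g'_{i'}/g'_i=1$.
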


\begin{proof}
The objective of this lemma is to find a sufficient condition for the inequality $\|c_i- V_{n, i\cdot}\|<\|c_i- V_{n, j\cdot}\|$, $ \gamma_j \neq \gamma_i$. Because $\|c_i- V_{n, j\cdot}\| \geq \|V_{n, i\cdot}- V_{n, j\cdot}\| - \|c_i- V_{n, i\cdot}\|$, we investigate $\|V_{n, i\cdot}- V_{n, j\cdot}\|$ and $\|c_i- V_{n, i\cdot}\|$, respectively. For $\|V_{n, i\cdot}- V_{n, j\cdot}\|$, we have
\begin{align*}
    \|V_{n, i\cdot}- V_{n, j\cdot}\| &= \left\|\left( \frac{g_i}{\sqrt{d^*_i}} Z_{n,i\cdot}- \frac{g_j}{\sqrt{d^*_j}} Z_{n,j\cdot}\right) (Z_n^{\top} G^2_n Z_n)^{-\frac{1}{2}} V_{\widetilde{B}_K} \right\| \\
   &\geq \sqrt{ \frac{g^2_i}{d^*_i} \left(\sum_{i': \gamma_{i'}=\gamma_i} \frac{g_{i'}^2}{d^*_{i'}} \right)^{-1} +
   \frac{g^2_j}{d^*_j} \left(\sum_{i': \gamma_{i'}=\gamma_j}  \frac{g_{i'}^2}{d^*_{i'}} \right)^{-1} },
\end{align*}
where $V_{\widetilde{B}_K}$ is defined in Appendix B.1. For each $i$, $1\leq i \leq n$, let $g'_i$ denote $(d^*_i)^{-1}g^2_i$. Then, the above inequality can be expressed as
\begin{equation*}
    \| V_{n, i\cdot}- V_{n, j\cdot} \| \geq \sqrt{ \frac{g'_i}{ \sum_{i':\gamma_{i'}=\gamma_i} g'_{i'} } + \frac{g'_j}{ \sum_{i':\gamma_{i'}=\gamma_j} g'_{i'} }}.
\end{equation*}
For any merchant $i'$ with the same underlying distribution as merchant $i$ (i.e., $\gamma_i = \gamma_{i'}$), $\lim \limits_{n \to \infty} g'_{i'} / g'_{i} = 1$. Thus, given a small constant $c_0>0$, there exists an integer $n_0$ such that for $n \geq n_0$, $g'_{i'} / g'_i \leq 1+c_0$ holds for any merchant $i'$ with $\gamma_i=\gamma_{i'}$. This leads to $ \sum_{\gamma_i=\gamma_{i'}} g'_{i'} / g'_i \leq n^*_{\gamma_i}(1+c_0)$. Accordingly, we have $\| V_{n, i\cdot}- V_{n, j\cdot} \| \geq \sqrt{ \{n^*_{\gamma_i}(1+c_0)\}^{-1}+\{n^*_{\gamma_j}(1+c_0)\}^{-1} } \geq \sqrt{2 \{n_{\max}(1+c_0)\}^{-1}}$. Then, if $\|c_i - V_{n, i\cdot}\| < \{2n_{\max}(1+c_0)\}^{-1/2}$, we have
\begin{align*}
    \|c_i - V_{n, j\cdot}\| &\geq \|V_{n, i\cdot}- V_{n, j\cdot}\| - \|c_i - V_{n, i\cdot}\| \\
    &\geq  \frac{1}{\sqrt{2n_{\max}(1+c_0)}} > \|c_i - V_{n, i\cdot}\|
\end{align*},
which holds for any $i$ and $j$ with $\gamma_i \neq \gamma_j$.

\end{proof}

\noindent Based on this lemma, the clustering error rate is defined as
\begin{equation*}
    P_e = \frac{\# \{i: \|c_i - V_{n, i\cdot}\| \geq \{2n_{\max}(1+c_0)\}^{-1/2} \}}{n},
\end{equation*}
where $\#\{\cdot\}$ is the number of elements within a set.

\subsection*{B.4 Proof of Theorem 2}
\begin{proof}
Here, we provide an upper bound for the clustering error rate. First, we investigate the difference between the results led by K-means clustering on $\widehat{V}_{n}$, namely $\{c_i:1\leq i \leq n\}$, and underlying cluster centers $\{V_{n, i\cdot}:1\leq i \leq n\}$. Second, we provide a detailed discussion of the clustering error rate.

\noindent Step 1. Let $\mathbb{C} \in \mR^{n \times K}$ denote the matrix comprising the results of K-means clustering on $\widehat{V}_{n}$. For $1 \leq I \leq n$, the $i$-th row of $\mathbb{C}$ is $\mathbb{C}_{i\cdot} = c_i$. This matrix can be determined as $\mathbb{C} = \min \limits_{M \in \mathcal{M}} \| M - \widehat{V}_{n} \|^2_F$, where $\mathcal{M}$ is the set of all $n \times K$ matrices with possible K-means results on $\widehat{V}_{n}$. More specifically, each matrix in $\mathcal{M}$ has at most $K$ distinct rows. Then, we have $\| \mathbb{C}-V_{n} \|_F \leq \| \mathbb{C} - \widehat{V}_{n}\Sigma \|_F + \|V_{n}- \widehat{V}_{n}\Sigma \|_F \leq 2 \| V_{n}- \widehat{V}_{n}\Sigma\|_F$.

\noindent Step 2. According to the definition of the cluster error, given the constant $c_0$ introduced in Lemma \ref{conditioncorrectclustering}, we have
\begin{align*}
    P_e &= \frac{1}{n} \sum_{i=1}^n I\left[\|c_i - V_{n, i\cdot}\|^2 \geq \{2n_{\max}(1+c_0)\}^{-1} \right]\\
    &\leq  \frac{2n_{\max}(1+c_0)}{n} \sum_{i=1}^n I\left[\|c_i - V_{n, i\cdot}\|^2 \geq \{2n_{\max}(1+c_0)\}^{-1} \right] \|c_i - V_{n, i\cdot}\|^2 \\
    &\leq \frac{2n_{\max}(1+c_0)}{n} \sum_{i=1}^n \|c_i - V_{n, i\cdot}\|^2 = \frac{2n_{\max}(1+c_0)}{n} \| \mathbb{C}-V_{n} \|^2_F \\
    &\leq \frac{8n_{\max}(1+c_0)}{n} \| V_{n}- \widehat{V}_{n}\Sigma\|^2_F,
\end{align*}
where the first inequality is because of $\|c_i - V_{n, i\cdot}\|^2 / \{2n_{\max}(1+c_0)\} \geq 1$. The second inequality is because of the definition of the indicator function $I(\cdot)\leq 1$. The last inequality is based on $\| \mathbb{C}-V_{n} \|_F \leq 2 \| V_{n}- \widehat{V}_{n}\Sigma\|_F$. According to Theorem 1, we further derive that
\begin{equation*}
    P_e \leq \frac{8c'^2_0 K n_{\max} (1+c_0) \log n }{\lambda^2_K n^2} = \frac{c^*_0 K n_{\max} \log n }{\lambda^2_K n^2}
\end{equation*}
holds with a probability of at least $(1-n^{-1}) [ 1 - n^{-1} \exp \left\{(n-1)^{-2}\right\} ] $ for $n \geq n_0$, where $c^*_0 = 8c'^2_0 (1+c_0)$, $c'_0$, and $n_0$ denote the constant and positive integers introduced in Theorem 1, respectively. Thus, we prove this theorem.
\end{proof}

\renewcommand{\theequation}{C.\arabic{equation}}

\renewcommand\thesection{C}

\setcounter{equation}{0}
\setcounter{thm}{0}

\section*{Appendix C: Theoretical results of SubWSC}
In this part, we provide technical proofs for the theoretical results of SubWSC, including Theorems 3--4 and the necessary lemmas to support the theorems.

\subsection*{C.1 Proof of Theorem 3}
As a preliminary for this theorem, we first prove the following lemma.
\begin{lem}
For $\epsilon > 0$, the event $e^*$ happens with a probability of at least $1-\epsilon$ if the sample size $n_s$ satisfies $n_s \geq \alpha_0 \log (\epsilon/K)$, where $\alpha_0 = \{\log(n-n_{\min}) - \log n\}^{-1}$.
\label{subsizens_prop}
\end{lem}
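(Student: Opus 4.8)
The plan is to control the complementary event $(e^*)^c$ --- that at least one of the $K$ underlying distributions fails to be represented by any sampled merchant --- through a union bound over the $K$ clusters, and then rewrite the resulting exponential bound as the stated lower bound on $n_s$.

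First I would fix $k$ with $1\le k\le K$ and let $A_k$ denote the event $\{n_{s,k}=0\}$, i.e., that none of the $n_s$ sampled merchants is relevant to $F^*_k$. Since the $n_s$ merchants are drawn uniformly at random, a single draw avoids cluster $k$ with probability $1-n^*_k/n$; under sampling with replacement the probability that all $n_s$ draws avoid cluster $k$ equals $(1-n^*_k/n)^{n_s}$, and under sampling without replacement it equals $\binom{n-n^*_k}{n_s}/\binom{n}{n_s}=\prod_{i=0}^{n_s-1}\frac{n-n^*_k-i}{n-i}$, which is again at most $(1-n^*_k/n)^{n_s}$ because each factor is bounded by $1-n^*_k/n$. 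As $n^*_k\ge n_{\min}$ and $x\mapsto 1-x$ is decreasing, both cases yield $P(A_k)\le (1-n_{\min}/n)^{n_s}$ uniformly in $k$.

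Next, since $e^*=\bigcap_{k=1}^K A_k^c$, a union bound gives
\[
P\big((e^*)^c\big)=P\Big(\bigcup_{k=1}^K A_k\Big)\le\sum_{k=1}^K P(A_k)\le K\,(1-n_{\min}/n)^{n_s}.
\]
It then suffices to force $K(1-n_{\min}/n)^{n_s}\le\epsilon$. Taking logarithms and noting that $\log(1-n_{\min}/n)<0$ --- so that dividing by it reverses the inequality --- this is equivalent to $n_s\ge \log(\epsilon/K)/\log(1-n_{\min}/n)=\alpha_0\log(\epsilon/K)$, which is exactly the hypothesis of the lemma. For $\epsilon$ small, both $\log(\epsilon/K)$ and $\log(1-n_{\min}/n)$ are negative, so $\alpha_0\log(\epsilon/K)>0$ and the requirement is well posed.

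The estimate is of coupon-collector type and no step is genuinely delicate; the only point deserving care is the justification of $P(A_k)\le (1-n_{\min}/n)^{n_s}$ for whichever sampling scheme is used to construct the sampling matrix $\mathcal{C}$. Once the lemma is in hand, Theorem \ref{subsizens} follows by taking $\epsilon=n^{-1}$, which converts $\alpha_0\log(\epsilon/K)$ into $\alpha(\log n+\log K)$ with $\alpha=-\alpha_0$.
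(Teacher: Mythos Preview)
Your argument is correct and follows essentially the same route as the paper: define the miss events $A_k=\{n_{s,k}=0\}$, bound each by $(1-n_{\min}/n)^{n_s}$ via the factor-by-factor inequality for the hypergeometric probability, apply a union bound over $k$, and invert the resulting exponential condition. The only minor difference is that the paper explicitly commits to sampling without replacement (writing $P(A_k)=\binom{n-n^*_k}{n_s}/\binom{n}{n_s}$), whereas you cover both schemes; otherwise the proofs coincide.
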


\begin{proof}
To explore the probability of the event $e^*$, we decompose it into a series of simple events and then investigate the probability of each simple event.

\noindent \textbf{Step 1.} To decompose $e^*$, we define $e^*_k = \{n_{s,k} \geq 1\}$ for each $k$, $1\leq k \leq K$. Additionally, we define $e^*_0=\{n_{s,k}: n_{s,k} \geq 0, 1 \leq k \leq K, \sum_{k=1}^K n_{s,k} = n_s \}$. Then, we have $e^* = \bigcap_{k=1}^K e^*_k \cap e^*_0$.

\noindent \textbf{Step 2.} To find the lower bound for the probability of $e^*$, we investigate its complement $(e^*)^c$. According to De Morgan’s laws, $(e^*)^c = \bigcup_{k=1}^K (e_k^* \cap e^*_0)^c$. Thus, \begin{equation*}
    P\{(e^*)^c\} \leq \sum_{k=1}^K P\{(e_k^* \cap e^*_0)^c\}.
\end{equation*}
Because each sample is selected by simple random subsampling without replacement, we have
\begin{align*}
    P\{(e_k^* \cap e^*_0)^c\} &= \frac{C^{n_s}_{n-n^*_k}}{C^{n_s}_{n}} =
    \frac{(n-n^*_k)(n-n^*_k-1)\cdots(n-n^*_k-n_s+1) }{n(n-1)\cdots (n-n_s+1)} \\
    &\leq \left(\frac{n-n^*_k}{n}\right)^{n_s} \leq   (1-\alpha^*_0)^{n_s},
\end{align*}
where $\alpha^*_0 = n_{\min} / n$. The first inequality is because of the fact that $(n-n^*_k-i)/(n-i) \leq (n-n^*_k)/n$ holds for $i < n-n^*_k$. The second inequality is obtained using the definition of $\alpha^*_0$. Hence, $P\{(e^*)^c\} \leq K (1-\alpha^*_0)^{n_s}$. This leads to the lower bound of $P(e^*)$ as $P(e^*) \geq 1- K\left(1-\alpha^*_0\right)^{n_s}$. Suppose that $1- K\left(1-\alpha^*_0\right)^{n_s} \geq 1 - \epsilon$. Then, we can verify that
\begin{equation*}
    n_s \geq \frac{ \log (\epsilon/K) }{ \log (1-\alpha^*_0)}.
\end{equation*}
Because $\alpha_0 = \{\log (1-\alpha^*_0)\}^{-1}$, we have $n_s \geq \alpha_0 \log (\epsilon/K)$.
\end{proof}

Given a small value of $\epsilon$, a subsample with size  $n_s$ satisfying Lemma \ref{subsizens_prop} can cover $K$ different distributions with a high probability. Then, taking $\epsilon = n^{-1}$, we can directly derive that $e^*$ happens with a probability of at least $1-n^{-1}$ if
\begin{equation*}
    n_s \geq -\alpha_0 \log (nK) =  \alpha \left( \log n + \log K \right).
\end{equation*}

\subsection*{C.2 Convergence of $\widehat{\mathcal{U}}_K$}

We denote the decomposition of the underlying Laplacian matrix $\mathcal{L}^*_n$  as $\mathcal{L}^*_n = \mathcal{U}_K \Sigma_{\mathcal{L}^*_n} \mathcal{V}^{\top}_K$, where $\mathcal{U}_K$ and $\mathcal{V}_K$ are the matrices corresponding to the left-eigenvectors and the right-eigenvectors, respectively. The matrix $\Sigma_{\mathcal{L}^*_n}$ is a diagonal matrix consisting of singular values. Here, we focus on $\mathcal{U}_K = \mathcal{L}^*_n \mathcal{V}_K \Sigma_{\mathcal{L}^*_n}^{-1}$. We prove that $\widehat{\mathcal{U}}_K$ converges to $\mathcal{U}_K$ as $n \to \infty$. Before investigating the convergence of $\widehat{\mathcal{U}}_K$, the critical role of $\mathcal{U}_K$ in the analysis of SubWSC is explained. The following lemma demonstrates that $\mathcal{U}_K$ is directly related to $Z_n$. Thus, $\mathcal{U}_K$ can describe the correct partition of $n$ merchants.
\begin{lem}
\textbf{(Structure of $\mathcal{U}_K$)} Given that the event $e^*$ happens, there exist a matrix $\mu_0 \in \mathbb{R}^{K \times K}$ and diagonal matrix $\mu_1 \in \mR^{n \times n}$ such that $\mathcal{U}_K = \mu_1 Z_n \mu_0$. For any two merchants $i$ and $j$, $1 \leq i ,j \leq n$, $Z_{n,i\cdot}=Z_{n,j\cdot}$ if and only if $\mathcal{U}_{K,i\cdot} = \mathcal{U}_{K,j\cdot}$.
\label{prop_1_sub}
\end{lem}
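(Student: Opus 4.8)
The plan is to mirror the argument of Appendix~B.1 (proof of Lemma~1), adapting it to the rectangular Laplacian $\mathcal{L}^*_n$ and its left singular vectors. First I would put $S^*_n$ in the factored form used there: by the identity $E\{S(i,j)\}=g_ig_je^{-\delta_{\gamma_i,\gamma_j}}$ established in Appendix~A.3 (together with $B_{K,kk}=1=e^{-\delta_{k,k}}$), one has $S^*_n=G'_nZ_nB_KZ_n^{\top}G'_n$, where $G'_n=\diag(g_1,\dots,g_n)$ and $B_K$ is the $K\times K$ matrix of Appendix~B.1. Substituting into $\mathcal{L}^*_n=(D^*_n)^{-1/2}S^*_n\mathcal{C}(\mathcal{D}^*_{n_s})^{-1/2}$ and using that $\mathcal{C}$ merely selects and reorders rows — so that $Z_n^{\top}\mathcal{C}=Z_{n_s}^{\top}$ with $Z_{n_s}:=\mathcal{C}^{\top}Z_n\in\mR^{n_s\times K}$ the (one-hot) membership matrix of the sampled merchants, and $G'_n\mathcal{C}=\mathcal{C}\,\diag(g_{\phi(1)},\dots,g_{\phi(n_s)})$ — I would obtain the compact form
\begin{equation*}
\mathcal{L}^*_n=G_nZ_nB_KZ_{n_s}^{\top}\mathcal{G}_{n_s},
\end{equation*}
with $G_n=(D^*_n)^{-1/2}G'_n$ a positive diagonal $n\times n$ matrix and $\mathcal{G}_{n_s}=(\mathcal{D}^*_{n_s})^{-1/2}\diag(g_{\phi(1)},\dots,g_{\phi(n_s)})$ a positive diagonal $n_s\times n_s$ matrix.

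Next, following Step~2 of Appendix~B.1, I would study $\mathcal{L}^*_n(\mathcal{L}^*_n)^{\top}$, whose eigenvectors are exactly the left singular vectors collected in $\mathcal{U}_K$. A direct computation gives $\mathcal{L}^*_n(\mathcal{L}^*_n)^{\top}=G_nZ_nB_KHB_KZ_n^{\top}G_n$, where $H:=Z_{n_s}^{\top}\mathcal{G}_{n_s}^2Z_{n_s}\in\mR^{K\times K}$ is diagonal with $H_{kk}=\sum_{j:\gamma_{\phi(j)}=k}g_{\phi(j)}^2/d^*_{\phi(j)}$. This is the single place the hypothesis enters: under the event $e^*$ every cluster contains at least one sampled merchant, hence $H\succ0$. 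Writing $R:=Z_n^{\top}G_n^2Z_n$ (diagonal, and $\succ0$ since every cluster is nonempty by Assumption~1), I would factor
\begin{equation*}
\mathcal{L}^*_n(\mathcal{L}^*_n)^{\top}=\big(G_nZ_nR^{-1/2}\big)\big(R^{1/2}B_KHB_KR^{1/2}\big)\big(G_nZ_nR^{-1/2}\big)^{\top},
\end{equation*}
and observe that $G_nZ_nR^{-1/2}$ has orthonormal columns, since $(G_nZ_nR^{-1/2})^{\top}(G_nZ_nR^{-1/2})=R^{-1/2}RR^{-1/2}=I_K$. Diagonalizing the middle factor $W:=R^{1/2}B_KHB_KR^{1/2}=(H^{1/2}B_KR^{1/2})^{\top}(H^{1/2}B_KR^{1/2})$ as $W=V_W\Lambda V_W^{\top}$ with $V_W$ orthogonal, the columns of $G_nZ_nR^{-1/2}V_W$ remain orthonormal and diagonalize $\mathcal{L}^*_n(\mathcal{L}^*_n)^{\top}$; because $B_K$ has full rank $K$ (the $K$ latent distributions are distinct, so $\delta_{k,k'}>0$ off the diagonal), $W$ is positive definite, so these columns span the eigenspaces of the $K$ nonzero eigenvalues. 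Hence one may take $\mathcal{U}_K=G_nZ_nR^{-1/2}V_W$, i.e.\ $\mathcal{U}_K=\mu_1Z_n\mu_0$ with $\mu_1:=G_n$ diagonal and $\mu_0:=R^{-1/2}V_W$ invertible.

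Finally I would read off the equivalence from this factorization exactly as for $V_n$ in Lemma~1. Each row satisfies $\mathcal{U}_{K,i\cdot}=(\mu_1)_{ii}\,(\mu_0)_{\gamma_i,\cdot}$ with $(\mu_1)_{ii}=g_i/\sqrt{d^*_i}>0$. If $Z_{n,i\cdot}=Z_{n,j\cdot}$, then $\gamma_i=\gamma_j$; using $E\{S(i,\ell)\}=g_ig_\ell e^{-\delta_{\gamma_i,\gamma_\ell}}$ one checks that $d^*_i$ and $d^*_j$ agree up to the discrepancy between $g_i$ and $g_j$, which vanishes by Assumption~2, so $(\mu_1)_{ii}=(\mu_1)_{jj}$ and $\mathcal{U}_{K,i\cdot}=\mathcal{U}_{K,j\cdot}$. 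Conversely, if $\gamma_i\ne\gamma_j$ then $(\mu_0)_{\gamma_i,\cdot}$ and $(\mu_0)_{\gamma_j,\cdot}$ are two distinct rows of the invertible matrix $\mu_0$, hence linearly independent and in particular not positive scalar multiples of each other, so $(\mu_1)_{ii}(\mu_0)_{\gamma_i,\cdot}\ne(\mu_1)_{jj}(\mu_0)_{\gamma_j,\cdot}$, i.e.\ $\mathcal{U}_{K,i\cdot}\ne\mathcal{U}_{K,j\cdot}$. I expect the main obstacle to be Step~2 — confirming that $G_nZ_nR^{-1/2}V_W$ genuinely furnishes the top-$K$ left singular vectors (rank and positive definiteness of $W$, and the harmless rotational ambiguity among coincident singular values) — together with the index bookkeeping needed to track the $n$- and $n_s$-indexed objects cleanly through the substitution involving $\mathcal{C}$; the role of $e^*$ itself is confined to ensuring $H\succ0$.
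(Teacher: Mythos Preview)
Your proof is correct and follows essentially the same approach as the paper: factor $\mathcal{L}^*_n$ through the membership matrix, reduce to a $K\times K$ eigenproblem, and arrive at $\mathcal{U}_K=G_nZ_n(Z_n^{\top}G_n^2Z_n)^{-1/2}V$ for an orthogonal $V$. The only cosmetic difference is that you diagonalize $\mathcal{L}^*_n(\mathcal{L}^*_n)^{\top}$ directly, whereas the paper first diagonalizes $(\mathcal{L}^*_n)^{\top}\mathcal{L}^*_n$ to obtain $\mathcal{V}_K$ and then passes to $\mathcal{U}_K=\mathcal{L}^*_n\mathcal{V}_K\Sigma_{\mathcal{L}^*_n}^{-1}$; both routes land on the same factorization, and your identification of where $e^*$ enters (positive definiteness of $H$) is exactly right. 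One small remark on the forward implication: your claim that $\gamma_i=\gamma_j$ forces $(\mu_1)_{ii}=(\mu_1)_{jj}$ via Assumption~2 is only asymptotic (Assumption~2 gives $g_i,g_j\to 1$, not exact equality for finite $n$), but the paper's proof simply asserts the equivalence from $\det(\mu_0)\neq 0$ without addressing this point at all, so your treatment is in fact more careful and the residual gap is inherent to the lemma as stated rather than to your argument.
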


\begin{proof}
Lemma \ref{prop_1_sub} is proven in two steps. First, we describe the relationship between $\mathcal{L}^*_n$ and $Z_n$. Second, we investigate the structure of $\mathcal{U}_K$ to prove this lemma.

\noindent Step 1. Based on the definitions given in Appendix B.1, we express $\mathcal{L}^*_n$ as
\begin{equation*}
    \mathcal{L}^*_n = (D^*_n)^{-\frac{1}{2}} S^*_n \mathcal{C} (\mathcal{D}^*_{n_s})^{-\frac{1}{2}}
    = G_n Z_n B_K Z^{\top}_n \mathcal{C} G_{n_s},
\end{equation*}
where $G_{n_s}$ is an $n_s \times n_s$ diagonal matrix with $G_{n_s,jj}=(d^*_{\phi(j)})^{-1/2}g_{\phi(j)} (1\leq j \leq n_s)$.

\noindent Step 2. We focus on the SVD of $(\mathcal{L}^*_n)^{\top} \mathcal{L}^*_n$ for studying the structure of $\mathcal{U}_K$. On the one hand, based on Step 1, we have
\begin{equation*}
    (\mathcal{L}^*_n)^{\top} \mathcal{L}^*_n = G_{n_s} \mathcal{C}^{\top} Z_n B^{\top}_K Z^{\top}_n G^2_n Z_n B_K Z^{\top}_n \mathcal{C} G_{n_s} = \mathcal{Z}_{n_s} \mathcal{B}^{\top}_K \mathcal{B}_K \mathcal{Z}^{\top}_{n_s},
\end{equation*}
where $\mathcal{Z}_{n_s} = G_{n_s}\mathcal{C}^{\top}Z_n (Z^{\top}_n \mathcal{C} G^2_{n_s} \mathcal{C}^{\top} Z_n)^{-\frac{1}{2}}$ and $\mathcal{B}_K =  (Z^{\top}_n G^2_n Z_n)^{\frac{1}{2}} B_K (Z^{\top}_n \mathcal{C} G^2_{n_s} \mathcal{C}^{\top} Z_n)^{\frac{1}{2}}$.

Let the SVD of $\mathcal{B}_K$ be $\mathcal{B}_K = \mathcal{V}_{\mathcal{B}_K} \Sigma_{\mathcal{B}_K} \mathcal{V}_{\mathcal{B}_K}^{\top} $. Then, with respect to $(\mathcal{L}^*_n)^{\top} \mathcal{L}^*_n$, we have $(\mathcal{L}^*_n)^{\top} \mathcal{L}^*_n = \mathcal{Z}_{n_s} \mathcal{V}_{\mathcal{B}_K} \Sigma^2_{\mathcal{B}_K} \mathcal{V}_{\mathcal{B}_K}^{\top} \mathcal{Z}^{\top}_{n_s}$. On the other hand, according to the SVD of $\mathcal{L}^*_n$, we have $(\mathcal{L}^*_n)^{\top} \mathcal{L}^*_n= \mathcal{V}_K \Sigma^2_{\mathcal{L}^*_n} \mathcal{V}^{\top}_K$. Thus, we can derive that $\mathcal{V}_K = \mathcal{Z}_{n_s} \mathcal{V}_{\mathcal{B}_K}$ and $\Sigma_{\mathcal{B}_K} = \Sigma_{\mathcal{L}^*_n}$. Because $\mathcal{U}_K = \mathcal{L}^*_n \mathcal{V}_K \Sigma_{\mathcal{L}^*_n}^{-1}$, we have
\begin{align*}
    \mathcal{U}_K &= \mathcal{L}^*_n \mathcal{Z}_{n_s} \mathcal{V}_{\mathcal{B}_K} \Sigma_{\mathcal{B}_K} ^{-1} = G_n Z_n (Z^{\top}_n G^2_n Z_n)^{-\frac{1}{2}} \mathcal{V}_{\mathcal{B}_K} .
\end{align*}
Let $\mu_0 \in \mR^{K \times K}$ and $\mu_1 \in \mR^{n \times n}$ denote $(Z^{\top}_n G^2_n Z_n)^{-\frac{1}{2}} \mathcal{V}_{\mathcal{B}_K} $ and $G_n$, respectively. Then, we have $\mathcal{U}_K = \mu_1 Z_n \mu_0$. Because $\det \{(Z^{\top}_n G^2_n Z_n)^{-\frac{1}{2}}\} \det(\mathcal{V}_{\mathcal{B}_K}) >0$, we can verify that $Z_{n,i\cdot}=Z_{n,j\cdot}$ if and only if $\mathcal{U}_{K,i\cdot} = \mathcal{U}_{K,j\cdot}$ for any $1\leq i, j \leq n$. Thus, we prove this lemma.
\end{proof}

\begin{lem}
\textbf{(Convergence of $\widehat{\mathcal{U}}_K$)} Given that $e^*$ happens, let $\omega^2_1 \geq \cdots \geq \omega^2_K$ denote the $K$ eigenvalues of $\mathcal{L}^*_n (\mathcal{L}^*_n)^{\top}$. Under Assumptions 1--3, there exists a constant $c_0$, a positive integer $n_0$, and an orthogonal matrix $\Sigma$ such that for $n \geq n_0$,
\begin{equation}
    \| \widehat{\mathcal{U}}_K \Sigma - \mathcal{U}_K \|_F \leq  \frac{ c_0 \sqrt{K \log (n+n_s)}} { \omega_K \sqrt{n}}
\label{errorbound_wsc}
\end{equation}
holds with a probability of at least $(1-n^{-1}) \{ 1-n^{-1} \exp (n^{-2}) \}$.
\label{conv_subwsc}
\end{lem}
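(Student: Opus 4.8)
The plan is to mirror the two-stage argument behind Theorem~\ref{conv_sc}, carried out on the event $e^*$ (which holds with probability at least $1-n^{-1}$ under the assumed subsample size, by Theorem~\ref{subsizens}): first bound the operator-norm discrepancy $\|\mathcal{L}_n-\mathcal{L}^*_n\|$ on a high-probability event, then pass to the left singular subspaces via a Davis--Kahan-type inequality. As in (\ref{twopart}), I would introduce the intermediate matrix $\widetilde{\mathcal{L}}_n:=(D^*_n)^{-1/2}S_n\mathcal{C}(\mathcal{D}^*_{n_s})^{-1/2}$, which keeps the random similarities but uses the population degree normalizations, and split $\|\mathcal{L}_n-\mathcal{L}^*_n\|\le\|\widetilde{\mathcal{L}}_n-\mathcal{L}^*_n\|+\|\mathcal{L}_n-\widetilde{\mathcal{L}}_n\|$.

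For the first term, the identity $\mathcal{C}(\mathcal{D}^*_{n_s})^{-1/2}=(D^*_n)^{-1/2}\mathcal{C}$ gives $\widetilde{\mathcal{L}}_n-\mathcal{L}^*_n=Q\mathcal{C}$, where $Q=\widetilde{L}_n-L^*_n$ is exactly the matrix treated in the proof of Theorem~\ref{conv_sc} (the self-similarity entries drop out because $S(i,i)\equiv1$). Since $\mathcal{C}$ has orthonormal columns, column restriction does not increase the operator norm, so the matrix-Bernstein estimate already established for $\|Q\|$ applies directly; re-running that step for the $n\times n_s$ matrix $Q\mathcal{C}$ — whose Hermitian dilation has dimension $n+n_s$ — sharpens it to $\|\widetilde{\mathcal{L}}_n-\mathcal{L}^*_n\|=O_p(\sqrt{\log(n+n_s)/n})$ under Assumption~3 and yields the displayed probability factor.

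For the second term, write $\widetilde{\mathcal{L}}_n=A\,\mathcal{L}_n\,\mathcal{A}$ with the diagonal matrices $A=(D^*_n)^{-1/2}D_n^{1/2}$ and $\mathcal{A}=(\mathcal{D}^*_{n_s})^{-1/2}\mathcal{D}_{n_s}^{1/2}$, so that $\mathcal{L}_n-\widetilde{\mathcal{L}}_n=(I_n-A)\mathcal{L}_n\mathcal{A}+\mathcal{L}_n(I_{n_s}-\mathcal{A})$. Since $\mathcal{L}_n\mathcal{L}_n^{\top}\preceq L_n^2\preceq I_n$ we have $\|\mathcal{L}_n\|\le1$, so the right-hand side is bounded by $\|I_n-A\|\,\|\mathcal{A}\|+\|I_{n_s}-\mathcal{A}\|$, and both factors are controlled by $\max_{1\le i\le n}|\sqrt{d_i/d^*_i}-1|$ — note the sampled degrees appearing in $\mathcal{A}$ are a subset of all $n$ degrees, so no additional union bound is needed. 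The concentration inequality of \cite{chung2006complex} used in Theorem~\ref{conv_sc} then gives $\|\mathcal{L}_n-\widetilde{\mathcal{L}}_n\|\le b^2+2b$ with $b=\sqrt{3\log(2n)/d^*_{\min}}$ on an event of probability at least $1-n^{-1}$. Combining the two bounds and using $d^*_{\min}=\Omega(n)$ yields $\|\mathcal{L}_n-\mathcal{L}^*_n\|\le c^*_0\sqrt{\log(n+n_s)/n}$ for $n\ge n_0$.

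It then remains to convert this into (\ref{errorbound_wsc}). By Lemma~\ref{prop_1_sub}, on the event $e^*$ the matrix $\mathcal{L}^*_n$ has rank exactly $K$, so its $(K{+}1)$-th singular value vanishes and its singular-value gap at index $K$ equals $\omega_K$; applying the singular-vector form of the Davis--Kahan theorem (Corollary~3 of \cite{yu2015A}, which is stated for singular vectors and so applies directly to the rectangular $\mathcal{L}_n$) to $\mathcal{L}_n$ and $\mathcal{L}^*_n$ produces an orthogonal $\Sigma$ with $\|\widehat{\mathcal{U}}_K\Sigma-\mathcal{U}_K\|_F\le 2^{3/2}\sqrt{K}\,\|\mathcal{L}_n-\mathcal{L}^*_n\|/\omega_K$, and substituting the operator-norm bound gives (\ref{errorbound_wsc}) with $c_0=2^{3/2}c^*_0$. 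The main obstacle is precisely this transition: unlike in Theorem~\ref{conv_sc}, $\mathcal{L}_n$ is rectangular, so one must verify $\|\mathcal{L}_n\|\le1$ for the normalization step, use the singular-vector rather than eigenvector form of the perturbation bound so that $\omega_K$ — and not $\omega_K^2$ — appears in the denominator (consistent with the $\omega_K^2$ in Theorem~\ref{clustererrorrate_subwsc}), and track the conditioning on $e^*$, which is exactly what makes $\rank(\mathcal{L}^*_n)=K$ and hence the relevant gap nonzero. Everything upstream reduces, via the identity $\widetilde{\mathcal{L}}_n-\mathcal{L}^*_n=Q\mathcal{C}$ and the degree concentration, to estimates already proven for WSC.
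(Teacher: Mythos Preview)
Your proposal is correct and follows the same two-step scaffold as the paper (bound $\|\mathcal{L}_n-\mathcal{L}^*_n\|$ via the intermediate $\widetilde{\mathcal{L}}_n$, then apply the singular-vector Davis--Kahan bound), but your execution of Step~1 is genuinely different and cleaner. The paper treats the rectangular matrices from scratch: it forms the Hermitian dilation $\mathcal{H}(\cdot)$ of each of $\mathcal{L}_n,\widetilde{\mathcal{L}}_n,\mathcal{L}^*_n$, rebuilds the matrix-Bernstein decomposition $X_{ij}$ on the $(n{+}n_s)\times(n{+}n_s)$ dilated matrices, and redoes the degree-concentration step in the dilated coordinates. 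You instead notice the identity $\mathcal{C}(\mathcal{D}^*_{n_s})^{-1/2}=(D^*_n)^{-1/2}\mathcal{C}$ (and its sample analogue), which gives $\widetilde{\mathcal{L}}_n-\mathcal{L}^*_n=Q\mathcal{C}$ and $\mathcal{L}_n=L_n\mathcal{C}$; since $\mathcal{C}$ has orthonormal columns, all the operator-norm work collapses to the square-matrix bounds already proved for WSC. This buys you a shorter argument and, if you keep the direct bound $\|Q\mathcal{C}\|\le\|Q\|$, actually a slightly tighter numerator with $\log n$ rather than $\log(n+n_s)$ (so your word ``sharpens'' for the re-run via the $(n{+}n_s)$-dimensional dilation is backwards --- the re-run matches the paper's stated bound but is not an improvement). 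The paper's route, by contrast, is self-contained and does not rely on recognizing the column-restriction structure, at the cost of repeating the Bernstein and Chung--Lu steps in the dilated setting. Both routes invoke the singular-vector form of the perturbation bound at the end, and you are right to flag that Corollary~3 of \cite{yu2015A} must be read in its singular-vector (Wedin-type) formulation here; the paper is tacit on this point.
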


\begin{proof}
The convergence of $\widehat{\mathcal{U}}_K$ is proven in two steps. First, we investigate the difference between $\mathcal{L}_n$ and $\mathcal{L}^*_n$ by introducing an intermediate matrix. Second, we derive an upper bound of the difference between $\widehat{\mathcal{U}}_K$ and $\mathcal{U}_K$ using that between $\mathcal{L}_n$ and $\mathcal{L}^*_n$.

\noindent Step 1. In a similar way to Appendix B.2, we first define an intermediate matrix $\widetilde{\mathcal{L}}_n = (D^*_n)^{-1/2} S_n \mathcal{C} (\mathcal{D}^*_{n_s})^{-1/2}$ and compile an upper bound for the difference $\|\mathcal{L}_n - \mathcal{L}^*_n \|$ as
\begin{equation}
    \|\mathcal{L}_n - \mathcal{L}^*_n \| \leq \|\widetilde{\mathcal{L}}_n- \mathcal{L}^*_n \| + \|\mathcal{L}_n- \widetilde{\mathcal{L}}_n\|.
\label{upperboundsubwsc}
\end{equation}
Let $Q$ and $Q'$ denote the terms $\widetilde{\mathcal{L}}_n- \mathcal{L}^*_n$ and $\mathcal{L}_n- \widetilde{\mathcal{L}}_n$, respectively. In this proof, we apply {\it Hermitian dilation} \citep{Ma2014Matrix} as a technical tool to study the matrix norms. For $\widetilde{\mathcal{L}}_n$, the Hermitian dilation is defined as
\begin{equation*}
    \mathcal{H} \left(\mathcal{L}^*_n \right) = \left[\begin{array}{cc}
        0 & \mathcal{L}^*_n \\
        (\mathcal{L}^*_n)^{\top} & 0
    \end{array}\right].
\end{equation*}
The Hermitian dilation satisfies $\| \mathcal{H} \left(\mathcal{L}^*_n\right) \| = \| \mathcal{L}^*_n \|$. For $\mathcal{L}_n$ and $\widetilde{\mathcal{L}}_n$, the Hermitian dilations can be similarly defined. Based on the definition of Hermitian dilation, we have $\|Q\| = \|\mathcal{H}(\widetilde{\mathcal{L}}_n) - \mathcal{H}(\mathcal{L}^*_n) \|$ and $\|Q'\|=\| \mathcal{H}(\mathcal{L}_n)- \mathcal{H}(\widetilde{\mathcal{L}}_n) \|$. Then, we analyze those terms in the following two sub-steps.

\noindent Step 1.1. First, we investigate the upper bound of $Q$. For each pair of $i$ and $j$, $1\leq i, j \leq n+n_s$, we define a matrix $Y^{ij} \in \mR^{(n+n_s) \times (n+n_s)}$ satisfying $Y^{ij}_{i'j'}=1$ for $i'=i,j'=j$ or $i'=j,j'=i$, $Y^{ij}_{i'j'}=0$ otherwise. Based on $Y^{ij}$, for $1 \leq i \leq n, n+1 \leq j \leq n+n_s$, we define a matrix
\begin{equation*}
    X_{ij} = \frac{ E[S\{ i,\phi(j-n) \} ] - S\{i,\phi(j-n) \}}{\sqrt{d^*_{i} d^*_{\phi(j-n) }} } Y^{ij}.
\end{equation*}
For any $i$ and $j$ such that $i=\phi(j-n)$, we set $X_{ij}=0$. Let $Q_i$ denote $\sum_{j=n+1}^{n+n_s} X_{ij}$ here. Hence, $\mathcal{H}(\mathcal{L}_n)- \mathcal{H}(\widetilde{\mathcal{L}}_n)= \sum_{i=1}^{n} Q_i$. For each $Q_i (1\leq i \leq n)$, given $\hat{F}_i$, we can derive conclusions similar to those in Appendix B.2. The corresponding matrices $\{X_{ij}:j=n+1,\cdots,n+n_s\}$ are independent given $\hat{F}_i$. Thus, an upper bound of $\|Q_i\|$ can be determined using $\|X_{ij}\|$ and $E(X^2_{ij})$ according to Theorem 1.4 of \cite{2012User}. Similarly, we have $\|X_{ij}\| \leq (d^*_{\min})^{-1}$ and $E(X^2_{ij}) \leq \{d^*_i d^*_{\phi(j-n)}\}^{-1}\{1+(2v_{\min})^{-1}\}(Y^{ii}+Y^{jj})$. Let $v$ denote $\| \sum_{i,j} E(X^2_{ij}) \|$. Then,
\begin{align*}
    v &= \left\| \sum_{j=n+1}^{n+n_s} E(X^2_{ij}) \right\| \leq
    \left\|  \sum_{j=n+1 }^{n+n_s} \frac{1+(2v_{\min})^{-1}}{d^*_i d^*_{\phi(j-n)}} \left( Y^{ii}+Y^{jj} \right) \right\| \\
    &= \frac{2 v_{\min}+1}{2v_{\min} d^*_i} \left\|   \left \{ \sum_{j=n+1 }^{n+n_s} \frac{1}{d^*_{\phi(j-n)}} \right \} Y^{ii} + \sum_{j=n+1 }^{n+n_s} \frac{1}{d^*_{\phi(j-n)}} Y^{jj}  \right\| \\
    &= \frac{2 v_{\min}+1}{2v_{\min} d^*_i} \sum_{j=n+1 }^{n+n_s} \frac{1}{d^*_{\phi(j-n)}} \leq \frac{ n_s (2 v_{\min}+1)}{2v_{\min} (d^*_{\min})^2},
\end{align*}
where the first inequality is because of the upper bound of $E(X^2_{ij})$ and the second inequality is because of $(d^*_i)^{-1} \leq (d^*_{\min})^{-1}$ for $1\leq i \leq n$. According to Theorem 1.4 of \cite{2012User}, for $a>0$, we have
\begin{align*}
    P(\|Q_i\| > a | \hat{F}_i) &\leq (n+n_s) \exp \left\{ - \frac{a^2}{2v+2a/(3d^*_{\min})} \right\} \\
    & \leq (n+n_s) \exp \left[ - \frac{a^2}{ n_s(2v_{\min}+1)/\{v_{\min} (d^*_{\min})^2\}+2a/(3d^*_{\min})} \right],
\end{align*}
where the first inequality is an application of the matrix Bernstein inequality of \cite{2012User} and the second inequality is because of the bound of $v$.

For $a=(d^*_{\min} \sqrt{n})^{-1} \sqrt{3 \log\{ n^2 (n+n_s) \}}$,based on Assumption 3, there exists a positive integer $n_0$ such that $a \leq 1$ holds for $n \geq n_0$. Then, we obtain
\begin{align}
    P(\|Q_i\|>a|\hat{F}_i) & \leq (n+n_s) \exp \left[ -\frac{3 \log\{ n^2 (n+n_s) \} }{ (2+1/v_{\min}) \times n_s/n + 2a/3 \times d^*_{\min} /n } \cdot \frac{1}{n^2} \right] \nonumber\\
    & \leq (n+n_s) \exp \left[ -\frac{3 \log\{ n^2 (n+n_s) \} }{ 2+1/3 + 2/3  } \cdot \frac{1}{n^2} \right] \nonumber\\
    &= (n+n_s) \exp \left[ - \log\{  n^2 (n+n_s) \} \cdot \frac{1}{n^2} \right]  = \frac{1}{n^2}\exp \left( \frac{1}{n^2} \right), \label{condqsub}
\end{align}
where the first inequality is derived using the setting of $a$ and the second inequality is because of $a \leq 1$, $n_s/n \leq 1$, $d^*_{\min} / n \leq 1$, and $v_{\min}=\Omega(\log n)$. Because the right side of (\ref{condqsub}) does not depend on $\hat{F}_i$, this inequity holds for any possible values of $\hat{F}_i$. According to the arbitrariness of $\hat{F}_i$, we have $P(\|Q_i\|>a) \leq  n^{-2} \exp(n^{-2})$. Then, let $e_{i,a}$ denote the event $\|Q_i\| > a$. We derive that
\begin{equation*}
  P(\cup_{i=1}^n e_{i,a}) \leq \sum_{i=1}^{n} P(e_{i,a}) \leq \frac{1}{n} \exp \left( \frac{1}{n^2} \right).
\end{equation*}

Because $\|Q\| = \|\mathcal{H}(\widetilde{\mathcal{L}}_n) - \mathcal{H}(\mathcal{L}^*_n) \| = \sum_{i=1}^{n} Q_i$, we have $\|Q\| \leq na$ if $\|Q_i\| \leq a$ holds for any $1\leq i \leq n$. Consequently,
\begin{align}
  P(\|Q\|\leq na) &\geq P(\|Q_1\| \leq a, \cdots, \|Q_n\| \leq a) \nonumber \\
   &= P(\cap_{i=1}^n e^c_{i,a}) = P\{ (\cup_{i=1}^n e_{i,a})^c \} \nonumber \\
   &\geq 1-\frac{1}{n} \exp \left( \frac{1}{n^2} \right). \label{addsubqb1}
\end{align}

\noindent Step 1.2. Similar to Step 1.1, we investigate $\|Q'\|$ via $\| \mathcal{H}(\mathcal{L}_n) - \mathcal{H}(\widetilde{\mathcal{L}}_n) \|$. According to the definition of Hermitian dilation, we have
\begin{equation*}
    \mathcal{H}(\widetilde{\mathcal{L}}_n) = \left[\begin{array}{cc}
        0 & \widetilde{\mathcal{L}}_n \\
        \widetilde{\mathcal{L}}_n^{\top} & 0
    \end{array}\right] =
    \{\mathcal{G}(D^*_{n}, \mathcal{D}^*_{n_s})\}^{-\frac{1}{2}}
    \mathcal{H}(S_n \mathcal{C})
    \{\mathcal{G}(D^*_{n}, \mathcal{D}^*_{n_s})\}^{-\frac{1}{2}}
\end{equation*}
where
\begin{equation*}
    \mathcal{G}(D^*_{n}, \mathcal{D}^*_{n_s}) = \left[ \begin{array}{cc}
        D^*_n & 0 \\
        0 & \mathcal{D}^*_{n_s}
    \end{array}  \right].
\end{equation*}

Then, we can derive
\begin{align*}
    &\| \mathcal{H}(\mathcal{L}_n) - \mathcal{H}(\widetilde{\mathcal{L}}_n) \| = \left\| \left[ \{\mathcal{G}(D^*_{n}, \mathcal{D}^*_{n_s})\}^{-\frac{1}{2}}
    \{\mathcal{G}(D_{n}, \mathcal{D}_{n_s})\}^{\frac{1}{2}} - I_{n+n_s} \right] \times \right .\\
    &~~ \left. \mathcal{H}(\mathcal{L}_n) \{\mathcal{G}(D_{n}, \mathcal{D}_{n_s})\}^{\frac{1}{2}}
    \{\mathcal{G}(D^*_{n}, \mathcal{D}^*_{n_s})\}^{-\frac{1}{2}} - \mathcal{H}(\mathcal{L}_n)
    \left[ I_{n+n_s} - \{\mathcal{G}(D_{n}, \mathcal{D}_{n_s})\}^{\frac{1}{2}}
    \{\mathcal{G}(D^*_{n}, \mathcal{D}^*_{n_s})\}^{-\frac{1}{2}}\right] \right\|.
\end{align*}

Because $\| \mathcal{H}(\mathcal{L}_n) \|=\|\mathcal{L}_n\| \leq 1$, we have
\begin{align*}
    &\| \mathcal{H}(\mathcal{L}_n) - \mathcal{H}(\widetilde{\mathcal{L}}_n) \| \leq \left\| I_{n+n_s} - \{\mathcal{G}(D_{n}, \mathcal{D}_{n_s})\}^{\frac{1}{2}}
    \{\mathcal{G}(D^*_{n}, \mathcal{D}^*_{n_s})\}^{-\frac{1}{2}} \right\| \\
    &~~~~ +\left\| \{\mathcal{G}(D^*_{n}, \mathcal{D}^*_{n_s})\}^{-\frac{1}{2}}
    \{\mathcal{G}(D_{n}, \mathcal{D}_{n_s})\}^{\frac{1}{2}} - I_{n+n_s} \right\| \left \| \{\mathcal{G}(D_{n}, \mathcal{D}_{n_s})\}^{\frac{1}{2}}
    \{\mathcal{G}(D^*_{n}, \mathcal{D}^*_{n_s})\}^{-\frac{1}{2}} \right\|.
\end{align*}

Similar to Step 1.2 in Appendix B.2, we have
\begin{align*}
    \left\| \{\mathcal{G}(D^*_{n}, \mathcal{D}^*_{n_s})\}^{-\frac{1}{2}}
    \{\mathcal{G}(D_{n}, \mathcal{D}_{n_s})\}^{\frac{1}{2}} - I_{n+n_s} \right\| &= \max \limits_{i} \left| \sqrt{\frac{d_i}{d^*_i} }-1 \right|,\\
    \left \| \{\mathcal{G}(D_{n}, \mathcal{D}_{n_s})\}^{\frac{1}{2}}
    \{\mathcal{G}(D^*_{n}, \mathcal{D}^*_{n_s})\}^{-\frac{1}{2}} \right\| &= \max \limits_{i} \sqrt{\frac{d_i}{d^*_i} }.
\end{align*}

Then, according to \cite{chung2006complex}, for $b>0$,
\begin{equation*}
    P(|d_i-d^*_i|>b) \leq \exp \left( -\frac{b^2d^*_i}{2} \right) + \exp \left( -\frac{b^2 d^*_i}{2+2b/3} \right) \leq 2 \exp \left( - \frac{b^2 d^*_{\min}}{2+2b/3} \right).
\end{equation*}
Take $b=\sqrt{3 \log (2n)/d^*_{\min}}$. According to Assumption 3, $b < 1$. Then, we have
\begin{equation}
    P\left( \left| \frac{d_i}{d^*_i} -1 \right| > b \right) \leq 2\exp \left\{ - \frac{3\log (2n)}{2+2b/3} \right\} \leq 2\exp\{-\log(2n) \} = \frac{1}{n}. \label{addsubqb2}
\end{equation}
Based on $b$, the upper bound of $\|Q'\|$ can be simplified as $\|Q'\|\leq b(b+1)+b=b^2+2b$ with a probability of at least $1-n^{-1}$. According to (\ref{upperboundsubwsc}), (\ref{addsubqb1}), and (\ref{addsubqb2}), we have
\begin{align}
  \|\mathcal{L}_n - \mathcal{L}^*_n \| &\leq na + b^2 + 2b \leq na+3b \nonumber \\
  & = \frac{\sqrt{3 n \log\{ n^2 (n+n_s) \}}}{d^*_{\min}} + \frac{3\sqrt{3 \log(2n)}}{\sqrt{d^*_{\min}}} \label{addsublb}
\end{align},
Which holds with a probability of at least $(1-n^{-1}) \{ 1-n^{-1} \exp (n^{-2}) \}$. Furthermore, because $d^*_{\min}=\Omega(n)$, there exists a constant $c_0$ and a positive integer $n_0$ such that $d^*_{\min} \geq c_0n$ for $n \geq n_0$. Accordingly, (\ref{addsublb}) leads to
\begin{equation}\label{addsublbpre}
  \|\mathcal{L}_n - \mathcal{L}^*_n \| \leq \frac{\sqrt{3 \log\{ n^2 (n+n_s) \}}}{c_0\sqrt{ n}} + \frac{3\sqrt{3 \log(2n)}}{\sqrt{c_0n}}
  \leq \frac{c^*_0 \sqrt{ \log (n+n_s)}}{\sqrt{n}},
\end{equation}
Which holds for $n \geq n_0$ with a probability of at least $(1-n^{-1}) \{ 1-n^{-1} \exp (n^{-2}) \}$, where $c^*_0 = 3/c_0+6/\sqrt{c_0}$ is a constant.

\noindent Step 2. Based on the conclusion in Step 1 and Corollary 3 of \cite{yu2015A}, there exists an orthogonal matrix $\Sigma$ for $0 \leq a \leq 1$ such that
\begin{equation*}
    \| \hat{\mathcal{U}}_K \Sigma - \mathcal{U}_K \|_F \leq \frac{ 2^{3/2} \sqrt{K} \|\mathcal{L}_n - \mathcal{L}^*_n \| }{\omega_K} \leq \frac{ 2^{3/2} c^*_0 \sqrt{K \log (n+n_s)}} { \omega_K \sqrt{n}}
\end{equation*}
holds with a probability of at least $(1-n^{-1}) \{ 1-n^{-1} \exp (n^{-2}) \}$ if $e^*$ happens.
\end{proof}

Several conclusions can be drawn based on this lemma. First, it is demonstrated that under the above assumptions, the subsampling version of WSC can also converge to the correct clustering result for all $n$ merchants with $n \to \infty$. Thus, SubWSC can be a feasible solution for the clustering of merchants. Second, Lemma \ref{conv_subwsc} means that SubWSC may converge faster if $\omega_K$ is larger. If $K$ is fixed, then $\omega_K$ will be larger if the subsample led by $\mathcal{C}$ covers more merchants, especially with different underlying distributions. Third, the corresponding probability in Lemma \ref{conv_subwsc} is relatively smaller than that in Theorem 2. Although subsampling makes the proposed method feasible for large-scale datasets, the sampling process may lead to more uncertainty during clustering. Thus, in real applications with massive datasets, it will be helpful to find a balance between computational efficiency and clustering performance, e.g., by enlarging the sample size when the computational resources are sufficient.

\subsection*{C.3 Clustering error rate of SubWSC}
To describe the clustering error rate of SubWSC, we first define a sufficient condition for the correct assignment, as shown in the following lemma. According to Lemma \ref{prop_1_sub}, for each merchant $i$, $1\leq i \leq n$, the corresponding correct cluster center is indicated by $\mathcal{U}_{K,i\cdot}$. Let $c_i$ denote the center of the cluster that merchant $i$ is assigned to by SubWSC. Merchant $i$ is correctly clustered if $\|c_i - \mathcal{U}_{K,i\cdot} \| < \|c_i - \mathcal{U}_{K,j\cdot} \|$ holds for any $j$ with $\gamma_i \neq \gamma_j$. The following lemma gives a sufficient condition for such correct assignments.
\begin{lem}
\textbf{(Condition for correct assignment for SubWSC)} There exists a constant $c_0>0$ and a positive integer $n_0 >0$ such that given $n \geq n_0$, for each merchant $i$, $1 \leq i \leq n$, the clustering assignment is correct if $\|c_i - \mathcal{U}_{K, i\cdot}\| < \{2n_{\max}(1+c_0)\}^{-1/2}$.
\label{conditioncorrectclustering_sub}
\end{lem}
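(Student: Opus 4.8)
The plan is to run the argument used for the WSC counterpart, Lemma~\ref{conditioncorrectclustering}, with $\mathcal{U}_K$ playing the role of $V_n$, the only new input being the structural identity of Lemma~\ref{prop_1_sub}. As in that proof, it suffices to establish a uniform lower bound on the separation between rows of $\mathcal{U}_K$ coming from different clusters, namely $\|\mathcal{U}_{K,i\cdot}-\mathcal{U}_{K,j\cdot}\|\ge\sqrt{2}\,\{n_{\max}(1+c_0)\}^{-1/2}$ for $n\ge n_0$ whenever $\gamma_i\neq\gamma_j$; the claimed sufficient condition then follows from a single triangle-inequality step exactly as at the end of the proof of Lemma~\ref{conditioncorrectclustering}: if $\|c_i-\mathcal{U}_{K,i\cdot}\|<\{2n_{\max}(1+c_0)\}^{-1/2}$, then $\|c_i-\mathcal{U}_{K,j\cdot}\|\ge\|\mathcal{U}_{K,i\cdot}-\mathcal{U}_{K,j\cdot}\|-\|c_i-\mathcal{U}_{K,i\cdot}\|>\|c_i-\mathcal{U}_{K,i\cdot}\|$ for every $j$ with $\gamma_j\neq\gamma_i$.

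To obtain the separation bound I would first invoke Lemma~\ref{prop_1_sub} on the event $e^*$ to write $\mathcal{U}_K = G_n Z_n A$ with $A = (Z^{\top}_n G^2_n Z_n)^{-\frac{1}{2}}\mathcal{V}_{\mathcal{B}_K}\in\mR^{K\times K}$, so that $\mathcal{U}_{K,i\cdot}=(g_i/\sqrt{d^*_i})\,Z_{n,i\cdot}A$. Since the columns of $\mathcal{U}_K$ are orthonormal (they are left singular vectors of $\mathcal{L}^*_n$, which has rank $K$ on $e^*$) and $A$ is invertible, $A^{\top}(Z^{\top}_n G^2_n Z_n)A=I_K$ forces $AA^{\top}=(Z^{\top}_n G^2_n Z_n)^{-1}$, a diagonal matrix with $k$-th entry $\big(\sum_{i':\gamma_{i'}=k}g'_{i'}\big)^{-1}$, where $g'_{i'}=(d^*_{i'})^{-1}g^2_{i'}$ as in Appendix~B.3. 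Hence, for $\gamma_i\neq\gamma_j$,
\begin{equation*}
\|\mathcal{U}_{K,i\cdot}-\mathcal{U}_{K,j\cdot}\|^2 = \frac{g'_i}{\sum_{i':\gamma_{i'}=\gamma_i} g'_{i'}} + \frac{g'_j}{\sum_{i':\gamma_{i'}=\gamma_j} g'_{i'}},
\end{equation*}
which is literally the expression obtained for $\|V_{n,i\cdot}-V_{n,j\cdot}\|^2$ in the proof of Lemma~\ref{conditioncorrectclustering}. From here the remaining argument is a verbatim copy of the WSC case: by Appendix~A.3, $g_i\to1$ and, for merchants $i,i'$ with $\gamma_i=\gamma_{i'}$, $d^*_i/d^*_{i'}\to1$ (both numerator and denominator share the leading term $\sum_k n^*_k e^{-\delta_{\gamma_i,k}}$), so $g'_{i'}/g'_i\to1$; thus for a small constant $c_0>0$ there is an $n_0$ with $g'_{i'}/g'_i\le1+c_0$ whenever $\gamma_{i'}=\gamma_i$ and $n\ge n_0$, whence $\sum_{i':\gamma_{i'}=\gamma_i}g'_{i'}\le n^*_{\gamma_i}(1+c_0)g'_i\le n_{\max}(1+c_0)g'_i$, and likewise for $j$. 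This gives $\|\mathcal{U}_{K,i\cdot}-\mathcal{U}_{K,j\cdot}\|\ge\sqrt{2\{n_{\max}(1+c_0)\}^{-1}}$, completing the reduction.

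The step that needs a little care is precisely the cancellation of the orthogonal factor $\mathcal{V}_{\mathcal{B}_K}$ in passing from $\mathcal{U}_K=G_nZ_nA$ to the closed-form row distances. Unlike $\widetilde{B}_K$ in Appendix~B.1, the matrix $\mathcal{B}_K$ assembled from the subsample mixes the full-sample factor $(Z^{\top}_n G^2_n Z_n)^{1/2}$ with the subsample factor $(Z^{\top}_n \mathcal{C} G^2_{n_s}\mathcal{C}^{\top}Z_n)^{1/2}$ and is therefore not symmetric, so one should not derive $AA^{\top}=(Z^{\top}_n G^2_n Z_n)^{-1}$ from a symmetric eigendecomposition of $\mathcal{B}_K$. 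The clean route, as indicated above, is to use only the orthonormality of the columns of $\mathcal{U}_K$ together with the factorisation $\mathcal{U}_K=G_nZ_nA$ from Lemma~\ref{prop_1_sub}; this alone pins down $AA^{\top}$, after which no subsampling-specific considerations enter and the conclusion follows as in the WSC proof.
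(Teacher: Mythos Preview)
Your proof is correct and follows essentially the same argument as the paper: represent $\mathcal{U}_K$ via Lemma~\ref{prop_1_sub}, obtain the identical row-separation expression $\|\mathcal{U}_{K,i\cdot}-\mathcal{U}_{K,j\cdot}\|^2=g'_i/\sum_{i':\gamma_{i'}=\gamma_i}g'_{i'}+g'_j/\sum_{i':\gamma_{i'}=\gamma_j}g'_{i'}$ as in Appendix~B.3, apply the $g'_{i'}/g'_i\to1$ limit to get the $\{2/(n_{\max}(1+c_0))\}^{1/2}$ lower bound, and finish with the triangle inequality. Your device of deriving $AA^{\top}=(Z_n^{\top}G_n^2 Z_n)^{-1}$ directly from the orthonormality of the columns of $\mathcal{U}_K$ is in fact tidier than the paper's route, which in the proof of Lemma~\ref{prop_1_sub} writes a symmetric-looking decomposition $\mathcal{B}_K=\mathcal{V}_{\mathcal{B}_K}\Sigma_{\mathcal{B}_K}\mathcal{V}_{\mathcal{B}_K}^{\top}$ for the asymmetric $\mathcal{B}_K$ and then in C.3 simply drops $\mathcal{V}_{\mathcal{B}_K}$ as an isometry.
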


\begin{proof}
In this proof, we give a sufficient condition for the inequality $\|c_i - \mathcal{U}_{K,i\cdot}\| < \|c_i - \mathcal{U}_{K,j\cdot}\|$, $\gamma_i \neq \gamma_j$. This proof is similar to that in Appendix B.3. Because $\|c_i - \mathcal{U}_{K,j\cdot}\| \geq \|\mathcal{U}_{K,i\cdot} - \mathcal{U}_{K,j\cdot}\| - \|c_i - \mathcal{U}_{K,i\cdot}\|$, we investigate the two terms $\|\mathcal{U}_{K,i\cdot} - \mathcal{U}_{K,j\cdot}\|$ and $\|c_i - \mathcal{U}_{K,i\cdot}\|$, respectively. First, for $\|\mathcal{U}_{K,i\cdot} - \mathcal{U}_{K,j\cdot}\|$, it can be observed that
\begin{align*}
   \| \mathcal{U}_{K,i\cdot} - \mathcal{U}_{K,j\cdot} \| &= \left\| \left( \frac{g_i}{\sqrt{d^*_i}} Z_{n,i\cdot} - \frac{g_j}{\sqrt{d^*_j}} Z_{n,j\cdot} \right) (Z_n^{\top} G^2_n Z_n)^{-\frac{1}{2}} \mathcal{V}_{\mathcal{B}_K}  \right\| \\
   &\geq \sqrt{ \frac{g^2_i}{d^*_i} \left(\sum_{i': \gamma_{i'}=\gamma_i} \frac{g_{i'}^2}{d^*_{i'}} \right)^{-1} +
   \frac{g^2_j}{d^*_j} \left(\sum_{i': \gamma_{i'}=\gamma_j}  \frac{g_{i'}^2}{d^*_{i'}} \right)^{-1} } \\
   &\geq \sqrt{ \frac{g'_i}{ \sum_{i':\gamma_{i'}=\gamma_i} g'_{i'} } + \frac{g'_j}{ \sum_{i':\gamma_{i'}=\gamma_j} g'_{i'} }},
\end{align*}
where $g'_i=(d^*_i)^{-1}g^2_i$ for $1 \leq i \leq n$. In a way similar to that in Appendix B.3, there exists a constant $c_0$ and a positive integer $n_0$ such that for $n \geq n_0$, $g'_{i'} / g'_i \leq 1+c_0$ holds for any $i'$ with $\gamma_i = \gamma_{i'}$. Thus, $ \sum_{\gamma_i=\gamma_{i'}} g'_{i'} / g'_i \leq n^*_{\gamma_i}(1+c_0)$. Accordingly, we have $\| \mathcal{U}_{K,i\cdot} - \mathcal{U}_{K,j\cdot} \| \geq \sqrt{ \{n^*_{\gamma_i}(1+c_0)\}^{-1}+\{n^*_{\gamma_j}(1+c_0)\}^{-1} } \geq \sqrt{2 \{n_{\max}(1+c_0)\}^{-1}}$. Then, if $\|c_i - \mathcal{U}_{K,i\cdot}\| < \{2n_{\max}(1+c_0)\}^{-1/2}$, we have
\begin{align*}
    \|c_i - \mathcal{U}_{K,j\cdot}\| &\geq \|\mathcal{U}_{K,i\cdot} - \mathcal{U}_{K,j\cdot}\| - \|c_i - \mathcal{U}_{K,i\cdot}\| \\
    &\geq  \frac{1}{\sqrt{2n_{\max}(1+c_0)}} > \|c_i - \mathcal{U}_{K,i\cdot}\|
\end{align*},
which holds for any $i$ and $j$ with $\gamma_i \neq \gamma_j$.
\end{proof}

Based on this condition, we can define the misclustering event as $\|c_i - \mathcal{U}_{K, i\cdot}\| \geq  \{2n_{\max}(1+c_0)\}^{-1/2}$ for each merchant $i$, $1\leq i \leq n$. The clustering error rate is defined as
\begin{equation*}
    \mathcal{P}_e = \frac{\# \{i: \|c_i - \mathcal{U}_{K, i\cdot}\| \geq \{2n_{\max}(1+c_0)\}^{-1/2} \}}{n}.
\end{equation*}

\subsection*{C.4 Proof of Theorem 4}
\begin{proof}
The proof regarding the upper bound of the clustering error rate of SubWSC consists of two steps. First, we compare the clustering results $\{c_i:1\leq i \leq n\}$, generated via K-means clustering on the $n$ rows of $\widehat{\mathcal{U}}_K$, with the underlying cluster centers $\{\mathcal{U}_{K,i\cdot}:1\leq i \leq n\}$. Second, we further prove the upper bound of $\mathcal{P}_e$. In addition, these two steps are based on the condition related to $n_s$. Based on Theorem 3, this condition refers to the occurrence of event $e^*$ with a probability of at least $1-n^{-1}$. 

\noindent Step 1. We use a matrix $\mathbb{C} \in \mR^{n \times K}$ to denote the results of K-means clustering on $\widehat{\mathcal{U}}_{K}$. For $1 \leq i \leq n$, the $i$-th row of $\mathbb{C}$ is $\mathbb{C}_{i\cdot} = c_i$. In this case, $\mathbb{C}$ is determined as $\mathbb{C} = \min \limits_{M \in \mathcal{M}} \| M - \widehat{\mathcal{U}}_{K} \|^2_F$, where $\mathcal{M}$ is the set of all $n \times K$ matrices of possible K-means results on $\widehat{\mathcal{U}}_{K}$. Then, we have $\| \mathbb{C}-\mathcal{U}_{K} \|_F \leq \| \mathbb{C} - \widehat{\mathcal{U}}_{K}\Sigma \|_F + \|\mathcal{U}_{K}- \widehat{\mathcal{U}}_{K}\Sigma \|_F \leq 2 \| \mathcal{U}_{K}- \widehat{\mathcal{U}}_{K}\Sigma\|_F$.

\noindent Step 2. For the clustering error rate of SubWSC, we have
\begin{align*}
    \mathcal{P}_e &= \frac{1}{n} \sum_{i=1}^n I\left[ \|c_i - \mathcal{U}_{K, i\cdot}\|^2 \geq \{2n_{\max}(1+c_0)\}^{-1}  \right]\\
    &\leq \frac{2n_{\max}(1+c_0)}{n} \sum_{i=1}^n I\left[ \|c_i - \mathcal{U}_{K, i\cdot}\|^2 \geq \{2n_{\max}(1+c_0)\}^{-1}  \right] \|c_i - \mathcal{U}_{K, i\cdot}\|^2 \\
    &\leq \frac{2n_{\max}(1+c_0)}{n} \sum_{i=1}^n \|c_i - \mathcal{U}_{K, i\cdot}\|^2
    = \frac{2n_{\max}(1+c_0)}{n} \| \mathbb{C}-\mathcal{U}_{K} \|^2_F \\
    & \leq \frac{8n_{\max}(1+c_0)}{n} \| \mathcal{U}_{K}- \widehat{\mathcal{U}}_{K}\Sigma\|^2_F,
\end{align*}
where the first inequality is because of $\|c_i - \mathcal{U}_{K, i\cdot}\|^2 /  \{2n_{\max}(1+c_0)\}^{-1}  \geq 1$, the second inequality is because of $I(\cdot) \leq 1$, and the last inequality is because of $\| \mathbb{C}-\mathcal{U}_{K} \|_F \leq 2 \| \mathcal{U}_{K}- \widehat{\mathcal{U}}_{K}\Sigma\|_F$. According to (\ref{addsublbpre}) in Lemma \ref{conv_subwsc}, if $e^*$ happens, there exists a constant $c'_0$ and a positive integer $n_0$, for $n \geq n_0$, such that
\begin{equation*}
    \| \hat{\mathcal{U}}_K \Sigma - \mathcal{U}_K \|_F \leq \frac{c'_0\sqrt{K \log (n+n_s)}}{\omega_K \sqrt{n}}  \leq \frac{c''_0\sqrt{K \log n}}{\omega_K \sqrt{n}}
\end{equation*}
holds with a probability of at least $(1-n^{-1}) \{ 1-n^{-1} \exp (n^{-2}) \}$, where $c''_0$ is a constant scalar. With the probability of $e^*$ involved, the clustering error rate
\begin{equation*}
    \mathcal{P}_e \leq \frac{ 8n_{\max}(1+c_0) (c''_0)^2 K \log n}{ \omega_K^2 n^2} =  \frac{c^*_0 K n_{\max} \log n}{\omega^2_K n^2 }
\end{equation*}
holds with a probability of at least $(1-n^{-1})^2 \{ 1-n^{-1} \exp (n^{-2}) \}$ for $n \geq n_0$ as well as $n_s \geq \alpha (\log n + \log K)$, where $c^*_0 = 8(1+c_0) (c''_0)^2$ denotes the constant scalar in the above inequality.
\end{proof}

\renewcommand{\theequation}{D.\arabic{equation}}
\setcounter{equation}{0}
\section*{Appendix D: Clustering analysis of soccer teams}
The proposed method can also be applied in other datasets besides transactions. In essence, it provides a data-driven solution to clustering objects through their empirical distributions. To illustrate its generality, we perform a clustering analysis of soccer teams in the Premier League as an example. The dataset is a subset of the European Soccer Database in Kaggle\footnote{European Soccer Database https://www.kaggle.com/hugomathien/soccer}, containing 3,021 match records of 33 teams from 2008 to 2016. For each team, the number of goals\footnote{For each match, only the goals scored by the home team are recorded in the dataset. Thus, we adopt the number of goals these teams score on their home grounds to evaluate their performances.} per match can be used to reflect its strength. However, because of various factors, such as weather, this number varies a lot between matches. To provide a more robust measure for the strength of a team, we utilize the empirical distribution of the number of goals per match. Then, we apply the proposed method to divide the 33 teams into four clusters. The result is shown in Table \ref{soccer_res}.

\begin{table}[width=.9\linewidth,pos=h]
	\caption{Four clusters of soccer teams in the Premier League}
	\centering
		\begin{tabular*}{\tblwidth}{@{} LLL@{} }
 \hline
 Cluster & Number & Teams\\
 \hline
1 & 5	& Manchester United, Manchester City, Chelsea, Liverpool, Arsenal\\
2 & 10	& Tottenham Hotspur, Everton, Southampton, Swansea City, Leicester City, Blackpool,      \\
& &  Newcastle United, West Ham United, Bolton Wanderers, Fulham \\
3 & 7 & Stoke City, Blackburn Rovers, Portsmouth, Norwich City, West Bromwich Albion,   \\
& & Bournemouth, Sunderland\\
4 & 11 & Aston Villa, Queens Park Rangers, Crystal Palace, Watford, Burnley, Cardiff City,   \\
& & Wigan Athletic, Middlesbrough, Wolverhampton Wanderers, Birmingham City, Hull City \\
             \hline
		\end{tabular*}
\label{soccer_res}
\end{table}

We discuss the clustering results by visualization. The partition in Table \ref{soccer_res} is almost consistent with the reputations of those teams. In terms of the number of goals per match, Figure \ref{real_vis_soccer} (left panel) gives an intuitive visualization for the empirical distributions of the four clusters. Cluster 1 consists of the best teams, which are most likely to win the championship. These teams have more records of a victory with large scores compared to others. Teams in Cluster 2 are competitive backbones in the league. They can often achieve 1 or 2 goals in a match. Cluster 3 consists of middle-level teams. Those in the last cluster are relatively weak, most of whom are often relegated. These teams are more likely to end their matches without goals. In addition, we find an interesting case that shows the advantage of using empirical distributions. The average scores per match are 1.342 and 1.368 for Stoke City and Bolton Wanderers, respectively. The difference in terms of this feature is relatively small. However, a notable difference between their ECDFs can be found. As shown in Figure \ref{real_vis_soccer} (right panel), the performance of Stoke City is comparatively ``stable'', while Bolton Wanderers has a relatively ``chequered'' career. Compared to Stoke City, Bolton Wanderers has more records with either no goals or large scores. Based on the proposed method, the difference between the ECDFs is captured, and the two teams are assigned to different clusters.
\begin{figure}
\centering
\includegraphics[width=0.8\textwidth]{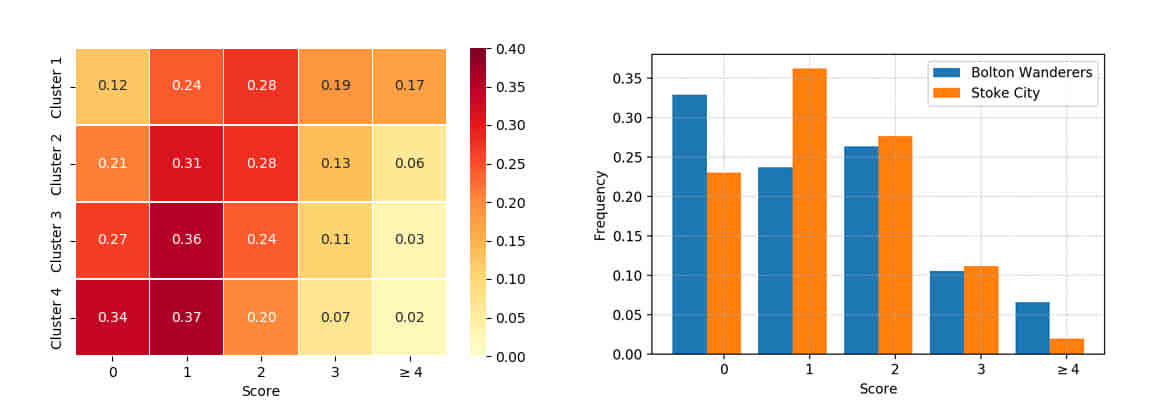}
\caption{Distributions of the number of goals per match for the four clusters (left panel). Comparison between score distributions of two teams (right panel).}
\label{real_vis_soccer}
\end{figure}

\clearpage
\renewcommand{\theequation}{E.\arabic{equation}}
\setcounter{equation}{0}
\section*{Appendix E: WSC algorithm}
\begin{algorithm}[ht!]
\caption{WSC algorithm}
\begin{algorithmic}
\STATE \textbf{Input}: $K$: number of clusters; $\{\hat{F}_i(x): i=1,\cdots, n\}$: ECDFs of all merchants; $k_0$: threshold of the nearest neighbors\

\FOR{$i \in \{1, \cdots, n-1\}$}
    \FOR{$j \in \{i, \cdots, n\}$}
        \STATE Compute the Wasserstein distance $W(i,j)$;
    \ENDFOR
\ENDFOR

\FOR{$i \in \{1, \cdots, n\}$}
    \STATE Find the $k_0$ smallest values from $\{W(i,j): j=1,\cdots,n\}$. The set of neighbors $N(i, k_0 )$ can be obtained using the corresponding indices ;
\ENDFOR

\STATE Initialize an $n \times n$ zero matrix $S_n$;
\FOR{$i \in \{1, \cdots, n-1\}$}
    \FOR{$j \in \{i, \cdots, n\}$}
        \IF{$i \in N(j, k_0 ) \: or \: j \in N(i, k_0 )$}
            \STATE $S_{n,ij}=S_{n,ji}=\exp \left( -\frac{W(i,j)}{\sigma} \right)$;
        \ENDIF
    \ENDFOR
\ENDFOR

\STATE Initialize an $n \times n$ zero matrix $D_n$;
\FOR{$i \in \{1, \cdots, n\}$}
    \STATE $D_{n,ii} = \sum_{j=1}^n S_{n,ij}$;
\ENDFOR

\STATE Compute the Laplacian matrix $L_n = D_n^{-\frac{1}{2}}S_nD_n^{-\frac{1}{2}}$;

\STATE Find the $K$ largest eigenvalues of $L$ and the corresponding eigenvectors. Stack the eigenvectors to form an $n \times K$ matrix $\widehat{V}_{n}$;

\STATE Apply K-means method to cluster the $n$ rows of $\widehat{V}_{n}$ into $K$ clusters. Because each row of $\widehat{V}_{n}$ refers to an individual merchant, the result $\{C_1,\cdots,C_K\}$ shows a partition for the $n$ merchants.

\STATE \textbf{Output}: A partition for $n$ merchants $\{C_1,\cdots,C_K\}$.
\end{algorithmic}\label{alg}
\end{algorithm}

\bibliographystyle{model1-num-names}

\bibliography{paper_bib.bib}

\begin{thebibliography}{62}
\expandafter\ifx\csname natexlab\endcsname\relax\def\natexlab#1{#1}\fi
\providecommand{\url}[1]{\texttt{#1}}
\providecommand{\href}[2]{#2}
\providecommand{\path}[1]{#1}
\providecommand{\DOIprefix}{doi:}
\providecommand{\ArXivprefix}{arXiv:}
\providecommand{\URLprefix}{URL: }
\providecommand{\Pubmedprefix}{pmid:}
\providecommand{\doi}[1]{\href{http://dx.doi.org/#1}{\path{#1}}}
\providecommand{\Pubmed}[1]{\href{pmid:#1}{\path{#1}}}
\providecommand{\bibinfo}[2]{#2}
\ifx\xfnm\relax \def\xfnm[#1]{\unskip,\space#1}\fi
\bibitem[{Peppard(2000)}]{Peppard2000Customer}
\bibinfo{author}{J.~Peppard},
\newblock \bibinfo{title}{Customer relationship management ({CRM}) in financial
  services},
\newblock \bibinfo{journal}{European Management Journal} \bibinfo{volume}{18}
  (\bibinfo{year}{2000}) \bibinfo{pages}{312--327}.
\bibitem[{Park and Kim(2003)}]{park2003framework}
\bibinfo{author}{C.-H. Park}, \bibinfo{author}{Y.-G. Kim},
\newblock \bibinfo{title}{A framework of dynamic {CRM}: Linking marketing with
  information strategy},
\newblock \bibinfo{journal}{Business Process Management Journal}
  \bibinfo{volume}{9} (\bibinfo{year}{2003}) \bibinfo{pages}{652--671}.
\bibitem[{Chan(2008)}]{chan2008intelligent}
\bibinfo{author}{C.~C.~H. Chan},
\newblock \bibinfo{title}{Intelligent value-based customer segmentation method
  for campaign management: A case study of automobile retailer},
\newblock \bibinfo{journal}{Expert Systems with Applications}
  \bibinfo{volume}{34} (\bibinfo{year}{2008}) \bibinfo{pages}{2754--2762}.
\bibitem[{Kim et~al.(2010)Kim, Kim, and Park}]{KIM2010313}
\bibinfo{author}{H.-S. Kim}, \bibinfo{author}{Y.-G. Kim},
  \bibinfo{author}{C.-W. Park},
\newblock \bibinfo{title}{Integration of firm's resource and capability to
  implement enterprise crm: A case study of a retail bank in korea},
\newblock \bibinfo{journal}{Decision Support Systems} \bibinfo{volume}{48}
  (\bibinfo{year}{2010}) \bibinfo{pages}{313--322}.
\bibitem[{Khajvand and Tarokh(2011)}]{khajvand2011estimating}
\bibinfo{author}{M.~Khajvand}, \bibinfo{author}{M.~J. Tarokh},
\newblock \bibinfo{title}{Estimating customer future value of different
  customer segments based on adapted {RFM} model in retail banking context},
\newblock \bibinfo{journal}{Procedia Computer Science} \bibinfo{volume}{3}
  (\bibinfo{year}{2011}) \bibinfo{pages}{1327--1332}.
\bibitem[{Wu and Chou(2011)}]{wu2011customer}
\bibinfo{author}{R.-S. Wu}, \bibinfo{author}{P.-H. Chou},
\newblock \bibinfo{title}{Customer segmentation of multiple category data in
  e-commerce using a soft-clustering approach},
\newblock \bibinfo{journal}{Electronic Commerce Research and Applications}
  \bibinfo{volume}{10} (\bibinfo{year}{2011}) \bibinfo{pages}{331--341}.
\bibitem[{Chiang(2012)}]{chiang2012establish}
\bibinfo{author}{W.-Y. Chiang},
\newblock \bibinfo{title}{To establish online shoppers’ markets and rules for
  dynamic {CRM} systems: an empirical case study in taiwan},
\newblock \bibinfo{journal}{Internet Research} \bibinfo{volume}{22}
  (\bibinfo{year}{2012}) \bibinfo{pages}{613--625}.
\bibitem[{Zhang et~al.(2014)Zhang, Bradlow, and Small}]{zhang2014predicting}
\bibinfo{author}{Y.~Zhang}, \bibinfo{author}{E.~T. Bradlow},
  \bibinfo{author}{D.~S. Small},
\newblock \bibinfo{title}{Predicting customer value using clumpiness: From
  {RFM} to {RFMC}},
\newblock \bibinfo{journal}{Marketing Science} \bibinfo{volume}{34}
  (\bibinfo{year}{2014}) \bibinfo{pages}{195--208}.
\bibitem[{Alborzi and Khanbabaei(2016)}]{alborzi2016using}
\bibinfo{author}{M.~Alborzi}, \bibinfo{author}{M.~Khanbabaei},
\newblock \bibinfo{title}{Using data mining and neural networks techniques to
  propose a new hybrid customer behaviour analysis and credit scoring model in
  banking services based on a developed {RFM} analysis method},
\newblock \bibinfo{journal}{International Journal of Business Information
  Systems} \bibinfo{volume}{23} (\bibinfo{year}{2016}) \bibinfo{pages}{1--22}.
\bibitem[{Eisenmann and Barley(2006)}]{Eisenmann2006PayPal}
\bibinfo{author}{T.~R. Eisenmann}, \bibinfo{author}{L.~Barley},
\newblock \bibinfo{title}{Paypal merchant services},
\newblock \bibinfo{journal}{Harvard Business School Case}
  (\bibinfo{year}{2006}) \bibinfo{pages}{806--188}.
\bibitem[{Lowry et~al.(2006)Lowry, Wells, Moody, Humphreys, and
  Kettles}]{Lowry2006Online}
\bibinfo{author}{P.~B. Lowry}, \bibinfo{author}{T.~M. Wells},
  \bibinfo{author}{G.~D. Moody}, \bibinfo{author}{S.~Humphreys},
  \bibinfo{author}{D.~Kettles},
\newblock \bibinfo{title}{Online payment gateways used to facilitate e-commerce
  transactions and improve risk management},
\newblock \bibinfo{journal}{Communications of the Association for Information
  Systems} \bibinfo{volume}{17} (\bibinfo{year}{2006}) \bibinfo{pages}{1--48}.
\bibitem[{Huo et~al.(2011)Huo, Wei, and Xin}]{Huo2011Risk}
\bibinfo{author}{H.~Huo}, \bibinfo{author}{Y.~Wei}, \bibinfo{author}{S.~Xin},
\newblock \bibinfo{title}{Risk analysis of the third-party payment business},
\newblock in: \bibinfo{booktitle}{International Conference on Management
  Science \& Industrial Engineering}, \bibinfo{year}{2011}, pp.
  \bibinfo{pages}{1143--1147}.
\bibitem[{Dannenberg and Zupancic(2009)}]{Dannenberg2009Customer}
\bibinfo{author}{H.~Dannenberg}, \bibinfo{author}{D.~Zupancic},
  \bibinfo{title}{Customer segmentation}, \bibinfo{publisher}{Gabler},
  \bibinfo{year}{2009}.
\bibitem[{Tsiptsis and Chorianopoulos(2011)}]{tsiptsis2011data}
\bibinfo{author}{K.~K. Tsiptsis}, \bibinfo{author}{A.~Chorianopoulos},
  \bibinfo{title}{Data mining techniques in {CRM}: Inside customer
  segmentation}, \bibinfo{publisher}{John Wiley \& Sons}, \bibinfo{year}{2011}.
\bibitem[{Bult and Wansbeek(1995)}]{bult1995optimal}
\bibinfo{author}{J.~R. Bult}, \bibinfo{author}{T.~Wansbeek},
\newblock \bibinfo{title}{Optimal selection for direct mail},
\newblock \bibinfo{journal}{Marketing Science} \bibinfo{volume}{14}
  (\bibinfo{year}{1995}) \bibinfo{pages}{378--394}.
\bibitem[{Hsu et~al.(2012)Hsu, Lu, and Lin}]{hsu2012segmenting}
\bibinfo{author}{F.-M. Hsu}, \bibinfo{author}{L.-P. Lu}, \bibinfo{author}{C.-M.
  Lin},
\newblock \bibinfo{title}{Segmenting customers by transaction data with concept
  hierarchy},
\newblock \bibinfo{journal}{Expert Systems with Applications}
  \bibinfo{volume}{39} (\bibinfo{year}{2012}) \bibinfo{pages}{6221--6228}.
\bibitem[{Khobzi et~al.(2014)Khobzi, Akhondzadeh-Noughabi, and
  Minaei-Bidgoli}]{khobzi2014new}
\bibinfo{author}{H.~Khobzi}, \bibinfo{author}{E.~Akhondzadeh-Noughabi},
  \bibinfo{author}{B.~Minaei-Bidgoli},
\newblock \bibinfo{title}{A new application of {rfm} clustering for guild
  segmentation to mine the pattern of using banks’ e-payment services},
\newblock \bibinfo{journal}{Journal of Global Marketing} \bibinfo{volume}{27}
  (\bibinfo{year}{2014}) \bibinfo{pages}{178--190}.
\bibitem[{Dhandayudam and Krishnamurthi(2013)}]{dhandayudam2013customer}
\bibinfo{author}{P.~Dhandayudam}, \bibinfo{author}{I.~Krishnamurthi},
\newblock \bibinfo{title}{Customer behavior analysis using rough set approach},
\newblock \bibinfo{journal}{Journal of Theoretical and Applied Electronic
  Commerce Research} \bibinfo{volume}{8} (\bibinfo{year}{2013})
  \bibinfo{pages}{21--33}.
\bibitem[{Van~Vlasselaer et~al.(2015)Van~Vlasselaer, Bravo, Caelen,
  Eliassi-Rad, Akoglu, Snoeck, and Baesens}]{van2015apate}
\bibinfo{author}{V.~Van~Vlasselaer}, \bibinfo{author}{C.~Bravo},
  \bibinfo{author}{O.~Caelen}, \bibinfo{author}{T.~Eliassi-Rad},
  \bibinfo{author}{L.~Akoglu}, \bibinfo{author}{M.~Snoeck},
  \bibinfo{author}{B.~Baesens},
\newblock \bibinfo{title}{Apate: A novel approach for automated credit card
  transaction fraud detection using network-based extensions},
\newblock \bibinfo{journal}{Decision Support Systems} \bibinfo{volume}{75}
  (\bibinfo{year}{2015}) \bibinfo{pages}{38--48}.
\bibitem[{Sakurai et~al.(2008)Sakurai, Li, Chong, and
  Faloutsos}]{sakurai2008efficient}
\bibinfo{author}{Y.~Sakurai}, \bibinfo{author}{L.~Li},
  \bibinfo{author}{R.~Chong}, \bibinfo{author}{C.~Faloutsos},
\newblock \bibinfo{title}{Efficient distribution mining and classification},
\newblock in: \bibinfo{booktitle}{SIAM International Conference on Data
  Mining}, \bibinfo{year}{2008}, pp. \bibinfo{pages}{632--643}.
\bibitem[{Kullback and Leibler(1951)}]{kullback1951information}
\bibinfo{author}{S.~Kullback}, \bibinfo{author}{R.~A. Leibler},
\newblock \bibinfo{title}{On information and sufficiency},
\newblock \bibinfo{journal}{Annals of Mathematical Statistics}
  \bibinfo{volume}{22} (\bibinfo{year}{1951}) \bibinfo{pages}{79--86}.
\bibitem[{R{\'e}nyi et~al.(1961)}]{renyi1961measures}
\bibinfo{author}{A.~R{\'e}nyi}, et~al.,
\newblock \bibinfo{title}{On measures of entropy and information},
\newblock in: \bibinfo{booktitle}{Berkeley Symposium on Mathematical Statistics
  and Probability}, \bibinfo{year}{1961}, pp. \bibinfo{pages}{547--561}.
\bibitem[{Lin(1991)}]{lin1991divergence}
\bibinfo{author}{J.~Lin},
\newblock \bibinfo{title}{Divergence measures based on the shannon entropy},
\newblock \bibinfo{journal}{IEEE Transactions on Information theory}
  \bibinfo{volume}{37} (\bibinfo{year}{1991}) \bibinfo{pages}{145--151}.
\bibitem[{Vallender(1974)}]{vallender1974calculation}
\bibinfo{author}{S.~Vallender},
\newblock \bibinfo{title}{Calculation of the wasserstein distance between
  probability distributions on the line},
\newblock \bibinfo{journal}{Theory of Probability \& Its Applications}
  \bibinfo{volume}{18} (\bibinfo{year}{1974}) \bibinfo{pages}{784--786}.
\bibitem[{Zhu et~al.(2021)Zhu, Deng, Huang, Jing, and
  Zhang}]{zhu2021clustering}
\bibinfo{author}{Y.~Zhu}, \bibinfo{author}{Q.~Deng},
  \bibinfo{author}{D.~Huang}, \bibinfo{author}{B.-Y. Jing},
  \bibinfo{author}{B.~Zhang},
\newblock \bibinfo{title}{Clustering based on kolmogorov–smirnov statistic
  with application to bank card transaction data},
\newblock \bibinfo{journal}{Journal of The Royal Statistical Society: Series C
  (applied Statistics)} \bibinfo{volume}{70} (\bibinfo{year}{2021})
  \bibinfo{pages}{558--578}.
\bibitem[{Del~Barrio et~al.(1999)Del~Barrio, Cuesta-Albertos, Matr{\'a}n, and
  Rodr{\'\i}guez-Rodr{\'\i}guez}]{del1999tests}
\bibinfo{author}{E.~Del~Barrio}, \bibinfo{author}{J.~A. Cuesta-Albertos},
  \bibinfo{author}{C.~Matr{\'a}n}, \bibinfo{author}{J.~M.
  Rodr{\'\i}guez-Rodr{\'\i}guez},
\newblock \bibinfo{title}{Tests of goodness of fit based on the l2-wasserstein
  distance},
\newblock \bibinfo{journal}{Annals of Statistics} \bibinfo{volume}{27}
  (\bibinfo{year}{1999}) \bibinfo{pages}{1230--1239}.
\bibitem[{Piccoli and Rossi(2014)}]{piccoli2014generalized}
\bibinfo{author}{B.~Piccoli}, \bibinfo{author}{F.~Rossi},
\newblock \bibinfo{title}{Generalized wasserstein distance and its application
  to transport equations with source},
\newblock \bibinfo{journal}{Archive for Rational Mechanics and Analysis}
  \bibinfo{volume}{211} (\bibinfo{year}{2014}) \bibinfo{pages}{335--358}.
\bibitem[{Fournier and Guillin(2015)}]{fournier2015rate}
\bibinfo{author}{N.~Fournier}, \bibinfo{author}{A.~Guillin},
\newblock \bibinfo{title}{On the rate of convergence in wasserstein distance of
  the empirical measure},
\newblock \bibinfo{journal}{Probability Theory and Related Fields}
  \bibinfo{volume}{162} (\bibinfo{year}{2015}) \bibinfo{pages}{707--738}.
\bibitem[{Panaretos and Zemel(2019)}]{Panaretos2019Statistical}
\bibinfo{author}{V.~M. Panaretos}, \bibinfo{author}{Y.~Zemel},
\newblock \bibinfo{title}{Statistical aspects of wasserstein distances},
\newblock \bibinfo{journal}{Annual Review of Statistics \& Its Application}
  \bibinfo{volume}{6} (\bibinfo{year}{2019}) \bibinfo{pages}{405--431}.
\bibitem[{R{\"u}schendorf(1985)}]{ruschendorf1985wasserstein}
\bibinfo{author}{L.~R{\"u}schendorf},
\newblock \bibinfo{title}{The wasserstein distance and approximation theorems},
\newblock \bibinfo{journal}{Probability Theory and Related Fields}
  \bibinfo{volume}{70} (\bibinfo{year}{1985}) \bibinfo{pages}{117--129}.
\bibitem[{Del~Barrio et~al.(1999)Del~Barrio, Gin{\'e}, and
  Matr{\'a}n}]{del1999central}
\bibinfo{author}{E.~Del~Barrio}, \bibinfo{author}{E.~Gin{\'e}},
  \bibinfo{author}{C.~Matr{\'a}n},
\newblock \bibinfo{title}{Central limit theorems for the wasserstein distance
  between the empirical and the true distributions},
\newblock \bibinfo{journal}{Annals of Probability} \bibinfo{volume}{27}
  (\bibinfo{year}{1999}) \bibinfo{pages}{1009--1071}.
\bibitem[{Piccoli and Rossi(2016)}]{piccoli2016properties}
\bibinfo{author}{B.~Piccoli}, \bibinfo{author}{F.~Rossi},
\newblock \bibinfo{title}{On properties of the generalized wasserstein
  distance},
\newblock \bibinfo{journal}{Archive for Rational Mechanics and Analysis}
  \bibinfo{volume}{222} (\bibinfo{year}{2016}) \bibinfo{pages}{1339--1365}.
\bibitem[{Lloyd(1982)}]{Lloyd1982Least}
\bibinfo{author}{S.~Lloyd},
\newblock \bibinfo{title}{Least squares quantization in {PCM}},
\newblock \bibinfo{journal}{IEEE Transactions on Information Theory}
  \bibinfo{volume}{28} (\bibinfo{year}{1982}) \bibinfo{pages}{129--137}.
\bibitem[{Hagen and Kahng(2002)}]{2002NewHagen}
\bibinfo{author}{L.~Hagen}, \bibinfo{author}{A.~B. Kahng},
\newblock \bibinfo{title}{New spectral methods for ratio cut partitioning and
  clustering},
\newblock \bibinfo{journal}{IEEE Transactions on Computer-Aided Design of
  Integrated Circuits and Systems} \bibinfo{volume}{11} (\bibinfo{year}{2002})
  \bibinfo{pages}{1074--1085}.
\bibitem[{Chung and Graham(1997)}]{chung1997spectral}
\bibinfo{author}{F.~R. Chung}, \bibinfo{author}{F.~C. Graham},
  \bibinfo{title}{Spectral graph theory}, \bibinfo{publisher}{American
  Mathematical Soc.}, \bibinfo{year}{1997}.
\bibitem[{Nascimento and De~Carvalho(2011)}]{nascimento2011spectral}
\bibinfo{author}{M.~C. Nascimento}, \bibinfo{author}{A.~C. De~Carvalho},
\newblock \bibinfo{title}{Spectral methods for graph clustering--a survey},
\newblock \bibinfo{journal}{European Journal of Operational Research}
  \bibinfo{volume}{211} (\bibinfo{year}{2011}) \bibinfo{pages}{221--231}.
\bibitem[{Kannan et~al.(2004)Kannan, Vempala, and
  Vetta}]{kannan2004clusterings}
\bibinfo{author}{R.~Kannan}, \bibinfo{author}{S.~Vempala},
  \bibinfo{author}{A.~Vetta},
\newblock \bibinfo{title}{On clusterings: Good, bad and spectral},
\newblock \bibinfo{journal}{Journal of the ACM} \bibinfo{volume}{51}
  (\bibinfo{year}{2004}) \bibinfo{pages}{497--515}.
\bibitem[{Von~Luxburg et~al.(2008)Von~Luxburg, Belkin, and
  Bousquet}]{von2008consistency}
\bibinfo{author}{U.~Von~Luxburg}, \bibinfo{author}{M.~Belkin},
  \bibinfo{author}{O.~Bousquet},
\newblock \bibinfo{title}{Consistency of spectral clustering},
\newblock \bibinfo{journal}{Annals of Statistics} \bibinfo{volume}{36}
  (\bibinfo{year}{2008}) \bibinfo{pages}{555--586}.
\bibitem[{Song et~al.(2008)Song, Chen, Bai, Lin, and Chang}]{song2008parallel}
\bibinfo{author}{Y.~Song}, \bibinfo{author}{W.-Y. Chen},
  \bibinfo{author}{H.~Bai}, \bibinfo{author}{C.-J. Lin}, \bibinfo{author}{E.~Y.
  Chang},
\newblock \bibinfo{title}{Parallel spectral clustering},
\newblock in: \bibinfo{booktitle}{Joint European Conference on Machine Learning
  and Knowledge Discovery in Databases}, \bibinfo{year}{2008}, pp.
  \bibinfo{pages}{374--389}.
\bibitem[{Chen and Cai(2011)}]{chen2011large}
\bibinfo{author}{X.~Chen}, \bibinfo{author}{D.~Cai},
\newblock \bibinfo{title}{Large scale spectral clustering with landmark-based
  representation},
\newblock in: \bibinfo{booktitle}{AAAI Conference on Artificial Intelligence},
  \bibinfo{year}{2011}, pp. \bibinfo{pages}{313--318}.
\bibitem[{Law et~al.(2017)Law, Urtasun, and Zemel}]{law2017deep}
\bibinfo{author}{M.~T. Law}, \bibinfo{author}{R.~Urtasun},
  \bibinfo{author}{R.~S. Zemel},
\newblock \bibinfo{title}{Deep spectral clustering learning},
\newblock in: \bibinfo{booktitle}{International Conference on Machine
  Learning}, \bibinfo{year}{2017}, pp. \bibinfo{pages}{1985--1994}.
\bibitem[{Halko et~al.(2011)Halko, Martinsson, and Tropp}]{halko2011finding}
\bibinfo{author}{N.~Halko}, \bibinfo{author}{P.-G. Martinsson},
  \bibinfo{author}{J.~A. Tropp},
\newblock \bibinfo{title}{Finding structure with randomness: Probabilistic
  algorithms for constructing approximate matrix decompositions},
\newblock \bibinfo{journal}{SIAM Review} \bibinfo{volume}{53}
  (\bibinfo{year}{2011}) \bibinfo{pages}{217--288}.
\bibitem[{Menon and Elkan(2011)}]{menon2011fast}
\bibinfo{author}{A.~K. Menon}, \bibinfo{author}{C.~Elkan},
\newblock \bibinfo{title}{Fast algorithms for approximating the singular value
  decomposition},
\newblock \bibinfo{journal}{ACM Transactions on Knowledge Discovery from Data}
  \bibinfo{volume}{5} (\bibinfo{year}{2011}) \bibinfo{pages}{1--36}.
\bibitem[{Dhillon et~al.(2013)Dhillon, Lu, Foster, and Ungar}]{dhillon2013new}
\bibinfo{author}{P.~Dhillon}, \bibinfo{author}{Y.~Lu}, \bibinfo{author}{D.~P.
  Foster}, \bibinfo{author}{L.~Ungar},
\newblock \bibinfo{title}{New subsampling algorithms for fast least squares
  regression},
\newblock in: \bibinfo{booktitle}{Advances in Neural Information Processing
  Systems}, \bibinfo{year}{2013}, pp. \bibinfo{pages}{360--368}.
\bibitem[{Wang et~al.(2018)Wang, Zhu, and Ma}]{wang2018optimal}
\bibinfo{author}{H.~Wang}, \bibinfo{author}{R.~Zhu}, \bibinfo{author}{P.~Ma},
\newblock \bibinfo{title}{Optimal subsampling for large sample logistic
  regression},
\newblock \bibinfo{journal}{Journal of the American Statistical Association}
  \bibinfo{volume}{113} (\bibinfo{year}{2018}) \bibinfo{pages}{829--844}.
\bibitem[{Ai et~al.(2021)Ai, Wang, Yu, and Zhang}]{ai2021optimal}
\bibinfo{author}{M.~Ai}, \bibinfo{author}{F.~Wang}, \bibinfo{author}{J.~Yu},
  \bibinfo{author}{H.~Zhang},
\newblock \bibinfo{title}{Optimal subsampling for large-scale quantile
  regression},
\newblock \bibinfo{journal}{Journal of Complexity} \bibinfo{volume}{62}
  (\bibinfo{year}{2021}) \bibinfo{pages}{101512}.
\bibitem[{Politis et~al.(1999)Politis, Romano, and
  Wolf}]{politis1999subsampling}
\bibinfo{author}{D.~N. Politis}, \bibinfo{author}{J.~P. Romano},
  \bibinfo{author}{M.~Wolf}, \bibinfo{title}{Subsampling},
  \bibinfo{publisher}{Springer Science \& Business Media},
  \bibinfo{year}{1999}.
\bibitem[{Delattre et~al.(2004)Delattre, Graf, Luschgy, and
  Pages}]{delattre2004quantization}
\bibinfo{author}{S.~Delattre}, \bibinfo{author}{S.~Graf},
  \bibinfo{author}{H.~Luschgy}, \bibinfo{author}{G.~Pages},
\newblock \bibinfo{title}{Quantization of probability distributions under
  norm-based distortion measures},
\newblock \bibinfo{journal}{Statistics \& Decisions} \bibinfo{volume}{22}
  (\bibinfo{year}{2004}) \bibinfo{pages}{261--282}.
\bibitem[{Rousseeuw(1987)}]{ROUSSEEUW198753}
\bibinfo{author}{P.~J. Rousseeuw},
\newblock \bibinfo{title}{Silhouettes: A graphical aid to the interpretation
  and validation of cluster analysis},
\newblock \bibinfo{journal}{Journal of Computational and Applied Mathematics}
  \bibinfo{volume}{20} (\bibinfo{year}{1987}) \bibinfo{pages}{53--65}.
\bibitem[{Filippone et~al.(2008)Filippone, Camastra, Masulli, and
  Rovetta}]{Filippone2008A}
\bibinfo{author}{M.~Filippone}, \bibinfo{author}{F.~Camastra},
  \bibinfo{author}{F.~Masulli}, \bibinfo{author}{S.~Rovetta},
\newblock \bibinfo{title}{A survey of kernel and spectral methods for
  clustering},
\newblock \bibinfo{journal}{Pattern Recognition} \bibinfo{volume}{41}
  (\bibinfo{year}{2008}) \bibinfo{pages}{176--190}.
\bibitem[{Beygelzimer et~al.(2006)Beygelzimer, Kakade, and
  Langford}]{10.1145/1143844.1143857}
\bibinfo{author}{A.~Beygelzimer}, \bibinfo{author}{S.~Kakade},
  \bibinfo{author}{J.~Langford},
\newblock \bibinfo{title}{Cover trees for nearest neighbor},
\newblock in: \bibinfo{booktitle}{International Conference on Machine
  Learning}, \bibinfo{year}{2006}, p. \bibinfo{pages}{97–104}.
\bibitem[{Aksoylar et~al.(2017)Aksoylar, Qian, and Saligrama}]{A2017Clustering}
\bibinfo{author}{C.~Aksoylar}, \bibinfo{author}{J.~Qian},
  \bibinfo{author}{V.~Saligrama},
\newblock \bibinfo{title}{Clustering and community detection with imbalanced
  clusters},
\newblock \bibinfo{journal}{IEEE Transactions on Signal and Information
  Processing over Networks} \bibinfo{volume}{3} (\bibinfo{year}{2017})
  \bibinfo{pages}{61--76}.
\bibitem[{Knuth(1976)}]{knuth1976big}
\bibinfo{author}{D.~E. Knuth},
\newblock \bibinfo{title}{Big omicron and big omega and big theta},
\newblock \bibinfo{journal}{ACM Sigact News} \bibinfo{volume}{8}
  (\bibinfo{year}{1976}) \bibinfo{pages}{18--24}.
\bibitem[{Shedden(2015)}]{Shedden2015Finite}
\bibinfo{author}{B.~M. Shedden},
\newblock \bibinfo{title}{Finite mixture modeling with mixture outcomes using
  the em algorithm},
\newblock \bibinfo{journal}{Biometrics} \bibinfo{volume}{55}
  (\bibinfo{year}{2015}) \bibinfo{pages}{463--469}.
\bibitem[{Defays(1977)}]{defays1977efficient}
\bibinfo{author}{D.~Defays},
\newblock \bibinfo{title}{An efficient algorithm for a complete link method},
\newblock \bibinfo{journal}{The Computer Journal} \bibinfo{volume}{20}
  (\bibinfo{year}{1977}) \bibinfo{pages}{364--366}.
\bibitem[{Fahad et~al.(2014)Fahad, Alshatri, Tari, Alamri, Khalil, Zomaya,
  Foufou, and Bouras}]{Fahad2014A}
\bibinfo{author}{A.~Fahad}, \bibinfo{author}{N.~Alshatri},
  \bibinfo{author}{Z.~Tari}, \bibinfo{author}{A.~Alamri},
  \bibinfo{author}{I.~Khalil}, \bibinfo{author}{A.~Y. Zomaya},
  \bibinfo{author}{S.~Foufou}, \bibinfo{author}{A.~Bouras},
\newblock \bibinfo{title}{A survey of clustering algorithms for big data:
  Taxonomy and empirical analysis},
\newblock \bibinfo{journal}{IEEE Transactions on Emerging Topics in Computing}
  \bibinfo{volume}{2} (\bibinfo{year}{2014}) \bibinfo{pages}{267--279}.
\bibitem[{Ger{\v s}gorin(1931)}]{Ger31}
\bibinfo{author}{S.~Ger{\v s}gorin},
\newblock \bibinfo{title}{\"uber die abgrenzung der eigenwerte einer matrix},
\newblock \bibinfo{journal}{Bulletin de l'Acad\'emie des Sciences de l'URSS.
  Classe des sciences math\'ematiques et na} \bibinfo{volume}{6}
  (\bibinfo{year}{1931}) \bibinfo{pages}{749--754}.
\bibitem[{Tropp(2012)}]{2012User}
\bibinfo{author}{J.~A. Tropp},
\newblock \bibinfo{title}{User-friendly tail bounds for sums of random
  matrices},
\newblock \bibinfo{journal}{Foundations of Computational Mathematics}
  \bibinfo{volume}{12} (\bibinfo{year}{2012}) \bibinfo{pages}{389--434}.
\bibitem[{Bobkov and Ledoux(2019)}]{2019One}
\bibinfo{author}{S.~Bobkov}, \bibinfo{author}{M.~Ledoux},
\newblock \bibinfo{title}{One-dimensional empirical measures, order statistics,
  and kantorovich transport distances},
\newblock \bibinfo{journal}{Memoirs of the American Mathematical Society}
  \bibinfo{volume}{261} (\bibinfo{year}{2019}) \bibinfo{pages}{1--80}.
\bibitem[{Chung et~al.(2006)Chung, Chung, Graham, Lu, Chung
  et~al.}]{chung2006complex}
\bibinfo{author}{F.~Chung}, \bibinfo{author}{F.~R. Chung},
  \bibinfo{author}{F.~C. Graham}, \bibinfo{author}{L.~Lu},
  \bibinfo{author}{K.~F. Chung}, et~al., \bibinfo{title}{Complex graphs and
  networks}, \bibinfo{number}{107}, \bibinfo{publisher}{American Mathematical
  Soc.}, \bibinfo{year}{2006}.
\bibitem[{Yu et~al.(2015)Yu, Wang, and Samworth}]{yu2015A}
\bibinfo{author}{Y.~Yu}, \bibinfo{author}{T.~Wang}, \bibinfo{author}{R.~J.
  Samworth},
\newblock \bibinfo{title}{A useful variant of the davis--kahan theorem for
  statisticians},
\newblock \bibinfo{journal}{Biometrika} \bibinfo{volume}{102}
  (\bibinfo{year}{2015}) \bibinfo{pages}{315--323}.
\bibitem[{Mackey et~al.(2014)Mackey, Jordan, Chen, Farrell, and
  Tropp}]{Ma2014Matrix}
\bibinfo{author}{L.~Mackey}, \bibinfo{author}{M.~I. Jordan},
  \bibinfo{author}{R.~Y. Chen}, \bibinfo{author}{B.~Farrell},
  \bibinfo{author}{J.~A. Tropp},
\newblock \bibinfo{title}{Matrix concentration inequalities via the method of
  exchangeable pairs},
\newblock \bibinfo{journal}{Annals of Probability} \bibinfo{volume}{42}
  (\bibinfo{year}{2014}) \bibinfo{pages}{906--945}.

\end{thebibliography}

\end{document}